\newcommand{\sr}{\mathcal{S}_r}
\newcommand{\spagap}{\textsc{Spa-GAP}}
\DeclarePairedDelimiter\ceil{\lceil}{\rceil}
\DeclarePairedDelimiter\floor{\lfloor}{\rfloor}
\newcommand{\cut}{\textbf{cut}}
\newcommand{\dircut}{\textbf{dircut}}
\renewcommand{\vec}[1]{\boldsymbol{\mathrm{#1}}}
\DeclareMathOperator*{\minimize}{minimize}
\DeclareMathOperator{\argmin}{argmin}
\providecommand{\va}{\ensuremath{\vec{a}}}
\providecommand{\vb}{\ensuremath{\vec{b}}}
\providecommand{\vf}{\ensuremath{\vec{f}}}
\providecommand{\vw}{\ensuremath{\vec{w}}}
\providecommand{\vx}{\ensuremath{\vec{x}}}
\newtheorem{theorem}{Theorem}
\numberwithin{theorem}{section}
\newtheorem{lemma}[theorem]{Lemma}
\newtheorem{observation}{Observation}
\numberwithin{observation}{section}
\newtheorem{definition}{Definition}
\numberwithin{definition}{section}
\numberwithin{conjecture}{section}
\newtheoremstyle{introstyle}
{}
{}
{\itshape}
{}
{\bfseries}
{.}
{ }
{}
\theoremstyle{mystyle}
\newtheorem*{theoremintro}{Theorem}
\theoremstyle{mystyle}
\newtheorem*{lemmaintro}{Lemma}
\author{Austin R.~Benson \\ Computer Science Dept. \\  Cornell University \\ arb@cs.cornell.edu \and Jon Kleinberg  \\ Computer Science Dept. \\ Cornell University\\ kleinberg@cornell.edu \and Nate Veldt \\Center for Applied Math  \\ Cornell University  \\ nveldt@cornell.edu
}
\begin{document}
\begin{titlepage}
\title{Augmented Sparsifiers for Generalized Hypergraph Cuts with Applications to Decomposable Submodular Function Minimization\thanks{This research was supported by NSF Award DMS-1830274, ARO Award W911NF19-1-0057, ARO MURI, JPMorgan Chase \& Co., a Simons Investigator Award, a Vannevar Bush Faculty Fellowship, and a grant from the AFOSR. The authors thank Pan Li for helpful conversations about decomposable submodular function minimization.}
}
\date{}
\maketitle
\begin{abstract}
	In recent years, hypergraph generalizations of many graph cut problems and algorithms have been introduced and analyzed as a way to better explore and understand complex systems and datasets characterized by multiway relationships. The standard cut function for a hypergraph $\mathcal{H} = (V,\mathcal{E})$ assigns the same penalty to a cut hyperedge, regardless of how its nodes are separated by a partition of $V$. Recent work in theoretical computer science and machine learning has made use of a generalized hypergraph cut function that can be defined by associating each hyperedge $e \in \mathcal{E}$ with a splitting function $\vw_e$, which assigns a (possibly different) penalty to each way of separating the nodes of $e$. When each $\vw_e$ is a \emph{submodular cardinality-based splitting function}, meaning that $\vw_e(S) = g(|S|)$ for some concave function $g$, previous work has shown that a generalized hypergraph cut problem can be reduced to a directed graph cut problem on an augmented node set. However, existing reduction procedures introduce up to $O(|e|^2)$ edges for a hyperedge $e$. This often results in a dense graph, even when the hypergraph is sparse, which leads to slow runtimes (in theory and practice) for algorithms that run on the reduced graph. 
	
	We introduce a new framework of sparsifying hypergraph-to-graph reductions, where a hypergraph cut defined by submodular cardinality-based splitting functions is $(1+\varepsilon)$-approximated by a cut on a directed graph. Our techniques are based on approximating concave functions using piecewise linear curves, and we show that they are optimal within an existing strategy for hypergraph reduction. We provide bounds on the number of edges needed to model different types of splitting functions. For $\varepsilon > 0$, in the worst case, we need $O(\varepsilon^{-1}|e| \log |e|)$ edges to reduce any hyperedge $e$, which leads to faster runtimes for approximately solving generalized hypergraph $s$-$t$ cut problems. For the common machine learning heuristic of a clique splitting function on a node set $e$, our approach requires only $O(|e|)$ nodes and $O(|e| \varepsilon^{-1/2} \log \log \frac{1}{\varepsilon})$ edges, instead of the $O(|e|^2)$ edges used with existing reductions. Equivalently, we can model the cut properties of a complete graph on $n$ nodes using $O(n)$ nodes and $O(n \varepsilon^{-1/2} \log \log \frac{1}{\varepsilon})$ directed and weighted edges. This sparsification leads to faster approximate min $s$-$t$ graph cut algorithms for certain classes of co-occurrence graphs that are represented implicitly by a collection of sets modeling co-occurrences. Finally, we apply our sparsification techniques to develop the first approximation algorithms for decomposable submodular function minimization in the case of cardinality-based component functions. 
	In addition to proving improved theoretical runtimes, we show in practice that our sparsification techniques lead to substantial runtime improvements for hypergraph clustering problems and benchmark image segmentation tasks for decomposable submodular function minimization.
\end{abstract}
\thispagestyle{empty}
\end{titlepage}


\section{Introduction}
Hypergraphs are a generalization of graphs in which nodes are organized into multiway relationships called hyperedges. Given a hypergraph $\mathcal{H} = (V,\mathcal{E})$ and a set of nodes $S \subseteq V$, a hyperedge $e \in E$ is said to be \emph{cut} by $S$ if both $S$ and $\bar{S} = V\backslash S$ contain at least one node from $e$. Developing efficient algorithms for cut problems in hypergraphs is an active area of research in theoretical computer science~\cite{checkuri2017computing,checkuri2018minimum,kogan2015sketching,fox2019minimum,chandrasekaran2018hypergraph}, and has been applied to problems in VLSI layout~\cite{alpert1995survey,hadley1995,karypis1999multilevel}, sparse matrix partitioning~\cite{akbudak2013cachelocality,Ballard:2016:HPS:3012407.3015144}, and machine learning~\cite{panli2017inhomogeneous,panli_submodular,veldt2020hyperlocal}.

Here, we consider recently introduced \emph{generalized} hypergraph cut functions~\cite{panli2017inhomogeneous,panli_submodular,veldt2020hypercuts,yoshida2019cheeger}, which assign different penalties to cut hyperedges based on how the nodes of a hyperedge are split into different sides of the bipartition induced by $S$. To define a generalized hypergraph cut function, each hyperedge $e \in E$ is first associated with a \emph{splitting function} $\vw_e\colon A \subseteq e \rightarrow \mathbb{R}^+$ that maps each node configuration of $e$ (defined by the subset $A \subseteq e$ in $S$) to a nonnegative penalty.
In order to mirror edge cut penalties in graphs, splitting functions are typically assumed to be
symmetric $(\vw_e(A) = \vw_e(e\backslash A)$) and only penalize cut hyperedges (i.e., $\vw_e(\emptyset) = 0$).
The generalized hypergraph cut function for a set $S \subseteq V$ is then given by
\begin{equation}
\label{hcf}
\cut_\mathcal{H}(S) = \sum_{e \in \mathcal{E}} \vw_e (S \cap e) \,.
\end{equation}

The standard hypergraph cut function is \emph{all-or-nothing}, meaning it assigns the same penalty to a cut hyperedge regardless of how its nodes are separated. Using the splitting function terminology, this means that $\vw_e(A) = 0$ if $A \in \{e ,\emptyset\}$, and $\vw_{e}(A) = w_e$ otherwise, where $w_e$ is a scalar hyperedge weight. 
One particularly relevant class of splitting function are submodular functions, which for all $A$,$B \subseteq e$ satisfy $\vw_e(A) + \vw_e(B) \geq \vw_e(A\cap B) + \vw_e(A \cup B)$. 
When all hyperedge splitting functions are submodular, solving generalized hypergraph cut problems is closely related to minimizing a decomposable submodular function~\cite{kolmogorov2012minimizing,nishihara2014convergence,panli2018revisiting,stobbe2010efficient,ene2017decomposable,ene2015random}, which in turn is closely related to energy minimization problems often encountered in computer vision~\cite{kohli2009robust,kolmogorov2004what,freedman2005energy}. The standard graph cut function is another well-known submodular special case of~\eqref{hcf}.

One of the most common techniques for solving hypergraph cut problems is to reduce the hypergraph to a graph sharing similar (or in some cases identical) cut properties. Arguably the most widely used reduction technique is clique expansion, which replaces each hyperedge with a (possibly weighted) clique~\cite{BensonGleichLeskovec2016,hadley1995,panli2017inhomogeneous,zien1999,Zhou2006learning}. 
In the unweighted case this corresponds to applying a splitting function of the form: $\vw_e(A) = |A| \cdot |e \backslash A|$.  Previous work has also explored other classes of submodular hypergraph cut functions that can be modeled as a graph cut problem on a potentially \emph{augmented} node set~\cite{freedman2005energy,kohli2009robust,kolmogorov2004what,lawler1973,veldt2020hypercuts}. This research primarily focuses on proving when such a reduction is possible, regardless of the number of edges and auxiliary nodes needed to realize the reduction. However, because hyperedges can be very large and splitting functions may be very general and intricate, many of these techniques lead to large and dense graphs. Therefore, the reduction strategy significantly affects the runtime and practicality of algorithms that run on the reduced graph. This leads to several natural questions. Are the graph sizes resulting from existing techniques inherently necessary for modeling hypergraph cuts? Given a class of functions that are known to be graph reducible, can one determine more efficient or even the \emph{most} efficient reduction techniques? Finally, is it possible to obtain more efficient reductions and faster downstream algorithms if it is sufficient to just \emph{approximately} model cut penalties?

To answer these questions, we present a novel framework for {sparsifying} hypergraph-to-graph reductions with provable guarantees on preserving cut properties. Our framework brings together concepts and techniques from several different theoretical domains, including algorithms for solving generalized hypergraph cut problems~\cite{veldt2020hypercuts,panli_submodular,panli2017inhomogeneous,yoshida2019cheeger}, standard graph sparsification techniques~\cite{soma2019spectral,spielman2014nearly,batson2014twice}, and tools for approximating functions with piecewise linear curves~\cite{magnanti2012separable,magnati2004ipco}. We present sparsification techniques for a large and natural class of submodular splitting functions that are \emph{cardinality-based}, meaning that $\vw_e(A) = \vw_e(B)$ whenever $|A| = |B|$. These are known to always be graph reducible, and are particularly natural for several downstream applications~\cite{veldt2020hypercuts}. Our approach leads to graph reductions that are significantly more sparse than previous approaches, and we show that our method is in fact optimally sparse under a certain type of reduction strategy. Our sparsification framework can be directly used to develop faster algorithms for approximately solving hypergraph $s$-$t$ cut problems~\cite{veldt2020hypercuts}, and improve runtimes for a large class of cardinality-based decomposable submodular minimization problems~\cite{kohli2009robust,kolmogorov2012minimizing,jegelka2013reflection,stobbe2010efficient}. We also show how our techniques enable us to develop efficient sparsifiers for graphs constructed from co-occurrence data.

\subsection{Graph and Hypergraph Sparsification}
Our framework and results share numerous connections with existing work on graph sparsification, which we review here. Let $G = (V,E)$ be a graph with a cut function $\cut_G$, which can be viewed as a very restricted case of the generalized hypergraph cut function in Eq.~\eqref{hcf}.
%
An $\varepsilon$-{cut sparsifier} for $G$ is a sparse weighted and undirected graph $H = (V,F)$ with cut function $\cut_H$, such that
\begin{equation}
\label{eq:cutsparse}
\cut_G(S) \leq \cut_H(S) \leq (1+\varepsilon) \cut_G(S),
\end{equation}
for every subset $S \subseteq V$. This definition was introduced by Bencz\'{u}r and Karger~\cite{benczur1996approximating}, who showed how to obtain a sparsifier with $O(n \log n/\varepsilon^2)$ edges for any graph in $O(m \log^3 n)$ time for an $n$-node, $m$-edge graph. The more general notion of \emph{spectral} sparsification, which approximately preserves the Laplacian quadratic form of a graph rather than just the cut function, was later introduced by Spielman and Teng~\cite{spielmanteng2011}. The best cut and spectral sparsifiers have $O(n/\varepsilon^2)$ edges, which is known to be asymptotically optimal for both spectral and cut sparsifiers~\cite{andoni2016sketch,batson2014twice}. Although studied much less extensively, analogous definitions of cut~\cite{checkuri2018minimum,kogan2015sketching} and spectral~\cite{soma2019spectral} sparsifiers for hypergraphs have also been developed. However, these apply exclusively to the all-or-nothing cut penalty, and do not preserve generalized cut functions of the form shown in~\eqref{hcf}. Bansal et al.~\cite{bansal2019new} also considered a weaker notion of graph and hypergraph sparsification, involving additive approximation terms, but in the present work we only consider multiplicative approximations.

\subsection{The Present Work: Augmented Sparsifiers for Hypergraph Reduction}
\label{sec:present}
In this paper, we introduce an alternative notion of an \emph{augmented} cut sparsifier. We present our results in the context of hypergraph-to-graph reductions, though our framework also provides a new notion of augmented sparsifiers for graphs.
Let $\mathcal{H} = (V,\mathcal{E})$ be a hypergraph with a generalized cut function $\cut_\mathcal{H}$, and let $\hat{G} = (V\cup \mathcal{A}, \hat{E})$ be a directed graph on an {augmented} node set $V \cup \mathcal{A}$. The graph is equipped with an {augmented} cut function defined for any $S \subseteq V$ by
\begin{equation}
\label{eq:specialcut}
\cut_{\hat{G}}(S) = \min_{T \subseteq \mathcal{A}} \,\, \dircut_{\hat{G}}(S \cup T),
\end{equation}
where $\dircut_{\hat{G}}$ is the standard \emph{directed} cut function on $\hat{G}$. 
We say that $\hat{G}$ is an $\varepsilon$-\emph{augmented cut sparsifier} for $\mathcal{H}$ if it is sparse and satisfies
\begin{equation}
\label{eq:augsparse}
\cut_\mathcal{H}(S) \leq \cut_{\hat{G}}(S) \leq (1+\varepsilon) \cut_\mathcal{H}(S).
\end{equation}
The minimization involved in~\eqref{eq:specialcut} is especially natural when the goal is to approximate a minimum cut or minimum $s$-$t$ cut in $\mathcal{H}$. If we solve the corresponding cut problem in $\hat{G}$, nodes from the \emph{auxiliary} node set $\mathcal{A}$ will be automatically arranged in a way that yields the minimum directed cut penalty, as required in~\eqref{eq:specialcut}. If $\hat{S}^*$ is the minimum cut in $\hat{G}$, $S^* = V \cap \hat{S}^*$ will be a $(1+\varepsilon)$-approximate minimum cut in $G$. 
Even when solving a minimum cut problem is not the goal, our sparsifiers will be designed in such a way that the augmented cut function~\eqref{eq:specialcut} will be easy to evaluate.

Unlike the standard graph sparsification problem, in some cases it may in fact be impossible to find \emph{any} directed graph $\hat{G}$ satisfying~\eqref{eq:augsparse}, independent of the graph's density. In recent work we showed that hypergraphs with non-submodular splitting functions are never graph reducible~\cite{veldt2020hypercuts}.  {\v{Z}}ivn{\'y} et al.~\cite{zivny2009} showed that even in the case of four-node hyperedges, there exist \emph{submodular} splitting functions (albeit \emph{asymmetric} splitting functions) that are not representable by graph cuts. Nevertheless, there are several special cases in which graph reduction is possible~\cite{kolmogorov2004what,freedman2005energy,kohli2009robust}. 

\paragraph{Augmented Sparsifiers for Cardinality-Based Hypergraph Cuts}
We specifically consider the class of submodular splitting functions that are cardinality-based, meaning they satisfy $\vw_e(A) = \vw_e(B)$ whenever $A, B \subseteq e$ satisfy $|A| = |B|$. These are known to be graph reducible~\cite{kohli2009robust,veldt2020hypercuts}, though existing techniques will reduce a hypergraph $\mathcal{H} = (V,\mathcal{E})$ to a graph with $O(|V| + \sum_{e \in \mathcal{E}} |e|)$ nodes and $O(\sum_{e \in \mathcal{E}} |e|^2)$ edges. We prove the following \emph{sparse} reduction result.
\begin{theorem}
	\label{thm:aug}
	Let $\mathcal{H} = (V,\mathcal{E})$ be a hypergraph where each $e \in \mathcal{E}$ is associated with a cardinality-based submodular splitting function. There exists an augmented cut sparsifier $\hat{G}$ for $\mathcal{H}$ with $O(|V| +  \frac{1}{\varepsilon} \sum_{e \in \mathcal{E}} \log |e|)$ nodes and $O(\frac{1}{\varepsilon}\sum_{e \in E} |e| \log |e|)$ edges.
\end{theorem}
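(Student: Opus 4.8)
The plan is to reduce the statement to a single hyperedge and then combine two ingredients: (i) a small ``base gadget'' that realizes \emph{exactly} a simple two-parameter family of symmetric cardinality-based splitting functions, building on the known graph-reducibility gadgets for cardinality-based functions~\cite{veldt2020hypercuts}; and (ii) a piecewise-linear overestimate of an arbitrary concave cardinality-based splitting function by a sum of only $O(\varepsilon^{-1}\log|e|)$ members of that family. Since each $\vw_e$ is cardinality-based and submodular, $\vw_e(A)=g_e(|A|)$ for a symmetric concave $g_e\colon\{0,\dots,n\}\to\mathbb{R}^+$ ($n=|e|$) with $g_e(0)=g_e(n)=0$; I treat $g_e$ as its piecewise-linear interpolation on $[0,n]$, which is concave. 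It suffices to build, for each $e$ with $n\ge 2$, a directed gadget on $e$ together with $O(\varepsilon^{-1}\log n)$ auxiliary nodes and $O(\varepsilon^{-1}n\log n)$ edges whose augmented cut equals $\tilde g_e(|S\cap e|)$ for some symmetric concave $\tilde g_e$ with $g_e\le\tilde g_e\le(1+\varepsilon)g_e$ at every integer point. Taking the disjoint union of these gadgets over all $e$ yields $\hat G$: since every edge lies in exactly one gadget and each gadget's directed cut depends only on the (fixed) intersection $S\cap e$ and on the part of $T$ inside that gadget's auxiliary set, the minimization over $\mathcal{A}$ in~\eqref{eq:specialcut} decouples gadget by gadget, giving $\cut_{\hat G}(S)=\sum_e\tilde g_e(|S\cap e|)$, which lies between $\cut_{\mathcal H}(S)$ and $(1+\varepsilon)\cut_{\mathcal H}(S)$. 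The node/edge totals are then $O(|V|+\varepsilon^{-1}\sum_e\log|e|)$ and $O(\varepsilon^{-1}\sum_e|e|\log|e|)$ as claimed.

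For the base gadget, given parameters $c,t>0$ introduce two auxiliary nodes $a,b$, a directed edge $a\to b$ of weight $ct$, and for every $u\in e$ directed edges $u\to a$ and $b\to u$ each of weight $c$; this uses $2$ auxiliary nodes and $2n+1$ edges. Enumerating the four placements of $\{a,b\}$ relative to $S$ shows the augmented cut of this gadget on a set $S$ with $|S\cap e|=k$ equals $c\cdot\min\{k,t,n-k\}$. Summing $m$ such gadgets with disjoint auxiliary pairs realizes $\sum_j c_j\min\{k,t_j,n-k\}$, and conversely every symmetric concave piecewise-linear function on $[0,n]$ that vanishes at $0$ and $n$ and has at most $m$ breakpoints in $(0,n/2)$ has this form with all $c_j\ge 0$ (read off $c_j=\sigma_j-\sigma_{j+1}$ from the consecutive nonnegative slope decrements, with the $t_j$ the breakpoints). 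So the whole task reduces to approximating $g_e$ from above, at integer points, by such a function with $O(\varepsilon^{-1}\log n)$ breakpoints.

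For the approximation, restrict attention to $[1,n/2]$ and extend to $[0,n]$ by setting $\tilde g_e(0):=0$ and reflecting, which keeps $\tilde g_e$ concave and symmetric; assume $g_e\not\equiv 0$, so $g_e(k)>0$ for $1\le k\le n/2$. Place breakpoints on a geometric grid $1=x_0<x_1<\dots<x_m=\lfloor n/2\rfloor$ with $x_{j+1}/x_j\le 1+\varepsilon$, so $m=O(\varepsilon^{-1}\log n)$. Let $\ell_j$ be a supporting line of $g_e$ at $x_j$, choosing at $x_0=1$ the slope $g_e(1)-g_e(0)=g_e(1)$ so that $\ell_0$ passes through the origin, and set $\tilde g_e:=\min_j\ell_j$ on $[0,n/2]$ (then reflect). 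As a minimum of supporting lines, $\tilde g_e$ is concave, satisfies $\tilde g_e\ge g_e$, has at most $m$ breakpoints, and has $\tilde g_e(0)=0$. For the upper bound, on $[x_j,x_{j+1}]$ we have $\tilde g_e\le\ell_j$ with $\ell_j$ increasing and slope at most $g_e(x_j)/x_j$ (concavity together with $g_e(0)=0$), so for every integer $k$ in this interval $\tilde g_e(k)\le\ell_j(x_{j+1})\le g_e(x_j)\cdot x_{j+1}/x_j\le(1+\varepsilon)g_e(x_j)\le(1+\varepsilon)g_e(k)$. Feeding $\tilde g_e$ through the base-gadget decomposition and the disjoint-union argument completes the proof; tracking constants and the degenerate small-$n$ cases is routine.

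The step I expect to be the main obstacle is the multiplicative control near small $k$, where a relative-error guarantee is most fragile: naive (even geometric) spacing that ignores the behaviour on $[0,1]$ can leave $\tilde g_e(0)>0$, which would destroy the decomposition into base gadgets, and it can also blow up the ratio $\tilde g_e(k)/g_e(k)$ for the smallest $k$. Forcing $x_0=1$ to be a breakpoint and using the left-derivative supporting line there (so that $\ell_0$ runs through the origin and $\tilde g_e$ agrees with $g_e$ at $k=0,1$) is the one delicate point; the gadget correctness, the per-segment slope bound, and the reduction of the global statement to the per-hyperedge statement are all mechanical.
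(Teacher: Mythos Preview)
Your proposal is correct and follows essentially the same route as the paper. The paper derives Theorem~\ref{thm:aug} as a corollary of its Theorem~4.1, which uses exactly your base gadget (the paper's ``CB-gadget''), the same geometric spacing of breakpoints on $[1,\lfloor k/2\rfloor]$ with ratio $1+\varepsilon$, and the identical concavity-plus-nonnegative-intercept bound $g^{(y)}(z)\le (z/y)\,\vf(y)\le(z/y)\,\vf(z)$ to get the multiplicative control on each geometric interval; your handling of the delicate point at $k=0,1$ (forcing $\ell_0$ through the origin) is likewise the paper's choice of $g^{(0)}(x)=\vw(1)\,x$ as the first linear piece. The only additional content in the paper is machinery (Algorithms~1--2 and Lemma~3.3) to certify that a greedy cover is \emph{optimal} in the number of pieces, but that is not needed for the upper bound asserted in Theorem~\ref{thm:aug}.
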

For certain types of splitting functions (e.g., the one corresponding to a clique expansion),
we show that our reductions are even more sparse. 

\paragraph{Augmented Sparsifiers for Graphs}
Another relevant class of augmented sparsifiers to consider is the setting where $\mathcal{H}$ is simply a graph. In this case, if $\mathcal{A}$ is empty and all edges are undirected, condition~\eqref{eq:augsparse} reduces to the standard definition of a cut sparsifier. A natural question is whether there exist cases where allowing auxiliary nodes and directed edges leads to improved sparsifiers. We show that the answer is yes in the case of dense graphs constructed from co-occurrence data.

\paragraph{Augmented Spectral Sparsifiers} 
Just as spectral sparsifiers generalize cut sparsifiers in the standard graph setting, one can define an analogous notion of an augmented \emph{spectral} sparsifier for hypergraph reductions. This can be accomplished using existing hypergraph generalizations of the Laplacian operator~\cite{panli_submodular,yoshida2019cheeger,chan2018generalizing,anand2015hypergraph}. However, although developing augmented spectral sparsifiers constitutes an interesting open direction for future research, it is unclear whether the techniques we develop here can be used or adapted to spectrally approximate generalized hypergraph cut functions. We include further discussion on hypergraph Laplacians and spectral sparsifiers in Section~\ref{sec:conclusion}, and pose questions for future work. Our primary focus in this manuscript is to develop techniques for augmented \emph{cut} sparsifiers.

\subsection{Our Approach: Cut Gadgets and Piecewise Linear Functions}
\begin{figure}[t]
	\centering
		\begin{subfigure}[t]{0.33\linewidth}
		\centering
		\includegraphics[width=.95\linewidth]{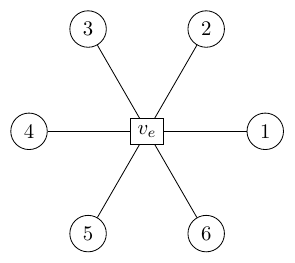}
		\caption{Star gadget}
		\label{stargad}
	\end{subfigure}\hfill
	\begin{subfigure}[t]{0.3\linewidth}
		\centering
		\includegraphics[width=.95\linewidth]{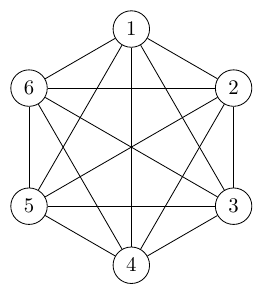}
		\caption{Clique gadget}
		\label{cliquegad}
	\end{subfigure}\hfill
	\begin{subfigure}[t]{0.33\linewidth}
		\centering
		\includegraphics[width=.95\linewidth]{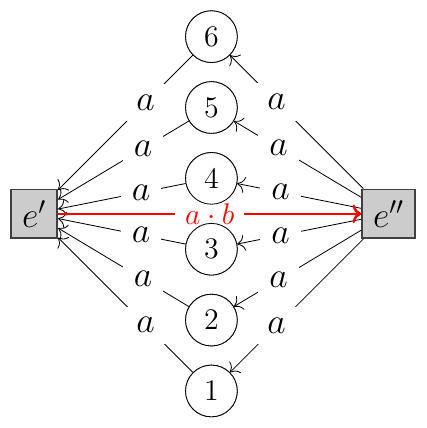}
		\caption{CB-gadget}
		\label{cbgad}
	\end{subfigure}
	\caption{Three gadgets, each modeling a different hyperedge splitting function.}
	\label{fig:gadgets}
\end{figure}
Graph reduction techniques work by replacing a hyperedge with a small graph \emph{gadget} modeling the same cut properties as the hyperedge splitting function. The simplest example of a graph reducible function is the quadratic splitting function, which we also refer to as the \emph{clique} splitting function:
\begin{equation}
\label{eq:quad}
\vw_e(S) = |A| \cdot |e\backslash A|,  \text{ for $A \subseteq e$.}
\end{equation}
This function can be modeled by replacing a hyperedge with a clique (Figure~\ref{cliquegad}). Another function that can be modeled by a gadget is the linear penalty, which can be modeled by a star gadget~\cite{zien1999}:
\begin{equation}
\label{eq:linear}
\vw_e(S) = \min \{ |A|, |e\backslash A| \}, \text{ for $A \subseteq e$.}
\end{equation}
A star gadget (Figure~\ref{stargad}) contains an auxiliary node $v_e$ for each $e \in \mathcal{E}$, which is attached to each $v \in e$ with an undirected edge. In order to model the broader class of submodular cardinality-based splitting functions, we previously introduced the cardinality-based gadget~\cite{veldt2020hypercuts} (CB-gadget) (Figure~\ref{cbgad}). This gadget is parameterized by positive scalars $a$ and $b$, and includes two auxiliary nodes $e'$ and $e''$. For each node $v \in e$, there is a directed edge from $v$ to $e'$ and a directed edge from $e''$ to $v$, both of weight $a$. Lastly, there is a directed edge from $e'$ to $e''$ of weight $a\cdot b$. This CB-gadget corresponds to the following splitting function:
\begin{equation}
\label{abcbgadget}
{\vw}_{a,b}(A) = a \cdot \min \{ |A|, |e \backslash A|, b \}.
\end{equation}
Every submodular, cardinality-based (SCB) splitting function can be modeled by a \emph{combination} of CB-gadgets with different edge weights~\cite{veldt2020hypercuts}. A different reduction strategy for minimizing submodular energy functions with cardinality-based penalties was also previously developed by Kohli et al.~\cite{kohli2009robust}. Both techniques require up to $O(k^2)$ directed edges for a $k$-node hyperedge.

\paragraph{Sparse Combinations of CB-gadgets}
\begin{figure}[t]
	\centering
	\begin{subfigure}[t]{0.32\linewidth}
		\centering
		\includegraphics[width=.9\linewidth]{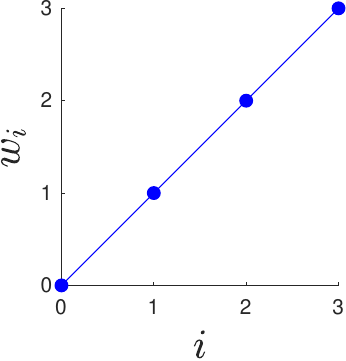}
		\caption{Star function weights}
		\label{starweights}
	\end{subfigure}
	\begin{subfigure}[t]{0.32\linewidth}
		\centering
		\includegraphics[width=.9\linewidth]{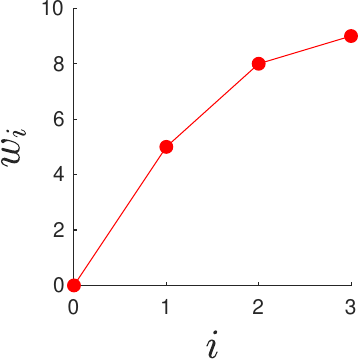}
		\caption{Clique function weights}
		\label{cliqueweights}
	\end{subfigure}
	\begin{subfigure}[t]{0.32\linewidth}
		\centering
		\includegraphics[width=.9\linewidth]{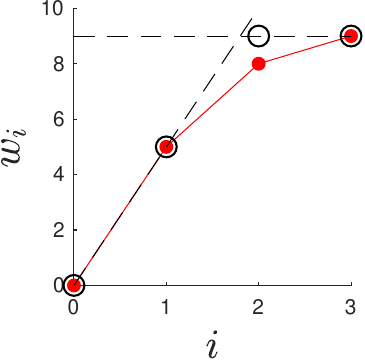}
		\caption{Piecewise linear approx}
		\label{cliqueapprox}
	\end{subfigure}
	\begin{subfigure}[t]{0.32\linewidth}
		\centering
		\includegraphics[width=.8\linewidth]{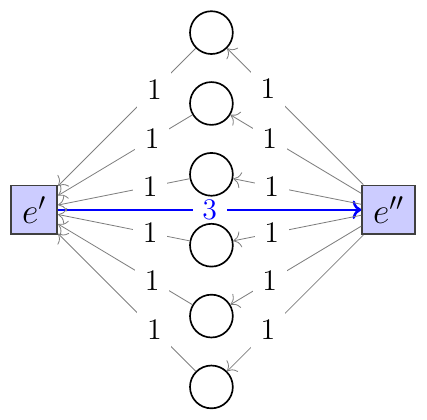}
		\caption{CB-gadget for star}
		\label{cgb_for_star}
	\end{subfigure}
	\begin{subfigure}[t]{0.32\linewidth}
		\centering
		\includegraphics[width=.7\linewidth]{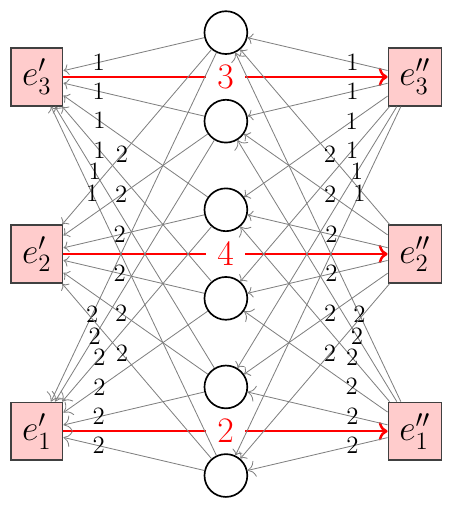}
		\caption{CB-gadgets for clique}
		\label{cgb_for_clique}
	\end{subfigure}
	\begin{subfigure}[t]{0.32\linewidth}
		\centering
		\includegraphics[width=.8\linewidth]{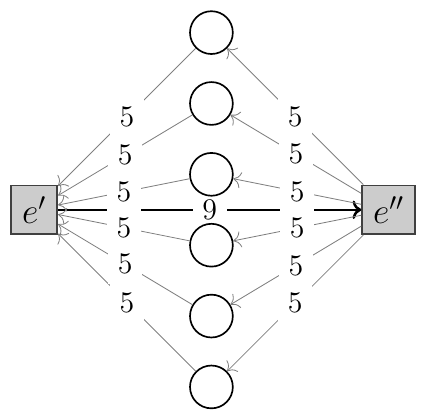}
		\caption{Approximating the clique}
		\label{sparsified_clique}
	\end{subfigure}
	\caption{(a) The linear splitting function~\eqref{eq:linear} can be modeled by a sparse gadget (d). The quadratic splitting function~\eqref{eq:quad} penalties (b) can be modeled by a dense gadget (e). A piecewise linear approximation for the quadratic splitting penalties (c) corresponds to a sparse gadget (f).}
	\label{fig:illustration}
\end{figure}
Our work introduces a new framework for \emph{approximately} modeling submodular cardinality-based (SCB) splitting functions using a small combinations of CB-gadgets.  Figure~\ref{fig:illustration} illustrates our sparsification strategy.  We first associate an SCB splitting function with a set of points $\{(i,w_i)\},$ where $i$ represents the number of nodes on the ``small side'' of a cut hyperedge, and $w_i$ is the penalty for such a split. We show that when many of these points are collinear, they can be modeled with a {smaller number} of CB-gadgets. As an example, the star expansion penalties~\eqref{eq:linear} can be modeled with a single CB-gadget (Figures~\ref{starweights} and~\ref{cgb_for_star}), whereas modeling the quadratic penalty with previous techniques~\cite{veldt2020hypercuts} requires many more (Figures~\ref{cliqueweights} and~\ref{cgb_for_clique}). Given this observation, we design new techniques for $\varepsilon$-approximating the set of points $\{(i, w_i)\}$ with a piecewise linear curve using a small number linear pieces. We then show how to translate the resulting piecewise linear curve back into a smaller combination of CB-gadgets that $\varepsilon$-approximates the original splitting function. Our piecewise linear approximation strategy allows us to find the optimal (i.e., minimum-sized) graph reduction in terms of CB-gadgets.
When $\varepsilon = 0$, our approach finds the best way to \emph{exactly} model an SCB splitting function, and requires only half the number of gadgets needed by previous techniques~\cite{veldt2020hypercuts}. More importantly, for larger $\varepsilon$, we prove the following sparse approximation result, which is used to prove Theorem~\ref{thm:aug}.
\begin{theorem}
	For $\varepsilon \geq 0$, any submodular cardinality-based splitting function on a $k$-node hyperedge can be $\varepsilon$-modeled by combining $O(\min\{\log k /\varepsilon, k\})$ CB-gadgets. 
\end{theorem}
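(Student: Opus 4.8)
The plan is to recast the statement as a one‑dimensional approximation problem and solve it with a ``minimum of supporting lines'' construction. A symmetric submodular cardinality‑based splitting function $\vw_e$ on a $k$‑node hyperedge is encoded by a sequence $0=w_0\le w_1\le\cdots\le w_m$, where $m=\lfloor k/2\rfloor$ and $w_i$ is the penalty for splitting off the ``small side'' of size $i$; submodularity is exactly concavity of this sequence, i.e.\ the increments $d_i:=w_i-w_{i-1}$ are nonnegative and nonincreasing. From the CB‑gadget correspondence described in the excerpt, a combination of $r$ CB‑gadgets realizes precisely the splitting function $i\mapsto\sum_j a_j\min\{i,b_j\}$, and conversely any concave nondecreasing piecewise‑linear $f$ with $f(0)=0$ and $p$ linear pieces can be realized by at most $p$ CB‑gadgets: writing its slopes as $\sigma_1>\cdots>\sigma_p\ge 0$ at breakpoints $0=c_0<\cdots<c_p=m$, we have $f(i)=\sum_{l}(\sigma_l-\sigma_{l+1})\min\{i,c_l\}$ with $\sigma_{p+1}:=0$ and all coefficients nonnegative, and each $b_l=c_l\le\lfloor k/2\rfloor$ is a legal gadget parameter. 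So it suffices to build a concave nondecreasing piecewise‑linear $f$ with $f(0)=0$, at most $O(\min\{\log k/\varepsilon,\,k\})$ pieces, and $w_i\le f(i)\le(1+\varepsilon)w_i$ for all $i$. (If $w_1=0$ then concavity forces $w\equiv 0$ and no gadgets are needed; if $\varepsilon=0$ or $\varepsilon\le(\log m)/m$, take $f$ to be the exact interpolant of all $m+1$ points, using $\le m=O(k)$ pieces. So from now on assume $w_1>0$ and $(\log m)/m<\varepsilon$.)

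For the construction, choose breakpoints $0=j_0<j_1<\cdots<j_r=m$ (specified below). For $t\ge 1$ let $\ell_t$ be the right‑supporting line of the concave sequence at $j_t$, namely $\ell_t(x)=w_{j_t}+d_{j_t+1}(x-j_t)$, and let $\ell_0(x)=w_1 x$ be the supporting line at the origin; set $f:=\min_{0\le t\le r}\ell_t$. Concavity of $\{w_i\}$ gives the supporting‑line inequality $\ell_t(i)\ge w_i$ for every integer $i$ and every $t$, so $f\ge w$ pointwise; $f$ is a minimum of nondecreasing affine functions (all slopes $d_{j_t+1}\ge 0$), hence concave and nondecreasing; and $f(0)=\ell_0(0)=0$ since every other $\ell_t$ satisfies $\ell_t(0)\ge w_0=0$. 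Because the slopes $d_{j_1+1}\ge d_{j_2+1}\ge\cdots$ are nonincreasing, $f$ has at most $r+1$ linear pieces, so it remains only to control the upper bound $f(i)\le(1+\varepsilon)w_i$ and to keep $r$ small.

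The key estimate is that on a consecutive interval $[a,b]=[j_t,j_{t+1}]$ one has, for each integer $i$ in the range, $f(i)\le\min\{\ell_t(i),\ell_{t+1}(i)\}$, and a short telescoping argument using $w_i=w_a+\sum_{a<j\le i}d_j$ shows $\ell_t(i)-w_i\le(b-a)(d_{a+1}-d_{b+1})$ and $\ell_{t+1}(i)-w_i\le(b-a)(d_{a+1}-d_{b+1})$. Since $w_i\ge w_a$, it therefore suffices to guarantee, for every consecutive pair of breakpoints,
\[ (j_{t+1}-j_t)\,(d_{j_t+1}-d_{j_{t+1}+1})\le\varepsilon\,w_{j_t}. \]
To meet this with few breakpoints, group the indices $\{1,\dots,m\}$ into dyadic blocks $B_s=\{2^s,\dots,2^{s+1}-1\}$ for $s=0,\dots,\lfloor\log_2 m\rfloor$, plus the final index $m$. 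Using $w_{2^s}=\sum_{j\le 2^s}d_j\ge 2^s d_{2^s+1}$ together with monotonicity of $d$, one gets $w_{2^{s+1}}\le 2w_{2^s}$, i.e.\ $w$ grows by at most a constant factor across each block. Subdividing each block into $\lceil 1/\varepsilon\rceil$ equal‑length sub‑intervals makes every sub‑interval $[a,b]\subseteq B_s$ satisfy $b-a\le\varepsilon 2^s$ and $d_{a+1}-d_{b+1}\le d_{2^s+1}$ (as $a\ge 2^s$), so $(b-a)(d_{a+1}-d_{b+1})\le\varepsilon\,2^s d_{2^s+1}\le\varepsilon\,w_{2^s}\le\varepsilon\,w_a$, exactly the required inequality. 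This uses $r=O(\tfrac1\varepsilon\log m)$ breakpoints, hence $O(\tfrac1\varepsilon\log k)$ pieces and CB‑gadgets; combined with the $O(k)$ exact construction for tiny $\varepsilon$, this gives $O(\min\{\log k/\varepsilon,\,k\})$.

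The structural reduction to a univariate problem and the ``$f=\min_t\ell_t$'' idea make $f\ge w$, concavity, nondecreasingness, and $f(0)=0$ essentially free — these are precisely the properties that break in the naive approach of replacing a long stretch of the curve by a single tangent segment, where consecutive pieces fail to glue continuously. The real work, and the main obstacle, is the two‑sided per‑interval analysis: showing that the inequality $(b-a)(d_{a+1}-d_{b+1})\le\varepsilon w_a$ is both sufficient for $f\le(1+\varepsilon)w$ at all integer points and attainable with only $O(\log k/\varepsilon)$ intervals. The dyadic‑block argument — exploiting the concavity‑forced fact that $w$ at most doubles per block, so that the count does not degrade with the dynamic range of the $w_i$ — is what keeps the budget at $O(\log k/\varepsilon)$; getting that bookkeeping clean, including the degenerate first block $B_0=\{1\}$ and the origin piece (where there are no interior integers to check), is the fiddly part. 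A secondary check is that the recovered gadget parameters $a_l=\sigma_l-\sigma_{l+1}\ge 0$ and $b_l=c_l\le\lfloor k/2\rfloor$ are legal, which holds by construction.
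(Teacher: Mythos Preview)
Your proposal is correct and follows the same high-level template as the paper: reduce to approximating the concave nondecreasing sequence $(w_i)$ by a minimum of supporting lines, then argue that $O(\log k/\varepsilon)$ such lines suffice (with the trivial $O(k)$ exact interpolant handling the small-$\varepsilon$ case). The structural reduction to CB-gadgets via the breakpoint/slope decomposition is identical to the paper's Lemma~3.2.

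Where you diverge is in the choice of support points and the error analysis. The paper places tangent lines at geometrically spaced abscissae $y_j=(1+\varepsilon)^j$ and uses the one-line observation that, because every supporting line of a concave function through the origin has nonnegative intercept $B_y\ge 0$,
\[
g^{(y)}(z)=M_y z+B_y\le \tfrac{z}{y}(M_y y+B_y)=\tfrac{z}{y}\,\vf(y)\le\tfrac{z}{y}\,\vf(z),
\]
so the tangent at $y$ is automatically a $(1+\varepsilon)$-approximation on $[y,(1+\varepsilon)y]$, immediately giving $\lceil\log_{1+\varepsilon}r\rceil+1$ pieces. Your approach instead uses dyadic blocks $[2^s,2^{s+1})$ uniformly subdivided into $\lceil 1/\varepsilon\rceil$ pieces, and controls the error via the explicit increment bound $(b-a)(d_{a+1}-d_{b+1})\le\varepsilon w_a$. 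Both yield the same $O((\log k)/\varepsilon)$ count, but the paper's estimate avoids the block bookkeeping and the telescoping calculation entirely; the nonnegativity of the intercept does all the work. Your route has the minor advantage that all breakpoints are integers by construction, whereas the paper's breakpoints $(1+\varepsilon)^j$ need not be (though this is harmless since the $b_j$ in a CB-gadget are not required to be integral).
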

We show that a nearly matching lower bound of $O(\log k/\sqrt{\varepsilon})$ CB-gadgets is required for modeling a square root splitting function. Despite worst case bounds, we prove that only $O(\varepsilon^{-1/2} \log \log \frac{1}{\varepsilon} )$ CB-gadgets are needed to approximate the quadratic splitting function, independent of hyperedge size. This is particularly relevant for approximating the widely used clique expansion technique, as well as for modeling certain types of dense {co-occurrence} graphs. All of our sparse reduction techniques are combinatorial, deterministic, and very simple to use in practice.

\subsection{Augmented Sparsifiers for Co-occurrence Graphs}
When $\mathcal{H}$ is just a graph, augmented sparsifiers correspond to a generalization of standard cut sparsifiers that allow directed edges and auxiliary nodes. The auxiliary nodes in this case play a role analogous to Steiner nodes in finding minimum spanning trees. Just as adding Steiner nodes makes it possible to find a smaller weight spanning tree, it is natural to ask whether including an auxiliary node set might lead to better cut sparsifiers for a graph $G$. We show that the answer is yes for certain classes of dense \emph{co-occurrence} graphs, which are graphs constructed by inducing a clique on a set of nodes that share a certain property or participate in a certain type of group interaction (equivalently, clique expansions of hypergraphs).
Steiner nodes have in fact been previously used in constructing certain types of sparsifiers called \emph{vertex} and \emph{flow} sparsifiers~\cite{chuzhoy2012}. However, these are concerned with preserving certain routing properties between distinguished terminal nodes in a graph, and are therefore distinct from our goal of obtaining $\varepsilon$-cut sparsifiers. 

\paragraph{Sparsifying the complete graph}
Our ability to sparsify the clique splitting function~\eqref{eq:quad} directly implies a new approach for sparsifying a complete graph. Cut sparsifiers for the complete graph provide a simple case study for understanding the differences in sparsification guarantees that can be obtained when we allow auxiliary nodes and directed edges. Furthermore, better sparsifiers for the complete graph can be used to design useful sparsifiers for co-occurrence graphs.
We have the following result.
\begin{theorem}
	\label{thm:complete}
	Let $G = (V,E)$ be the complete graph on $n = |V|$ nodes. There exists an $\varepsilon$-augmented sparsifier for $G$ with $O(n)$ nodes and $O(n\varepsilon^{-1/2}\log \log \frac{1}{\varepsilon})$ edges.
\end{theorem}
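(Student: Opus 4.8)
The plan is to observe that, from the point of view of cut values, the complete graph $K_n$ is nothing but a single hyperedge carrying the clique splitting function, and then to invoke the sparse CB-gadget approximation already established for that function. Concretely, for $G = (V,E) = K_n$ and any $S \subseteq V$ we have $\cut_G(S) = |S|\cdot|V\setminus S|$, which is exactly $\vw_e(S\cap e)$ for the single hyperedge $e = V$ equipped with the quadratic (clique) splitting function of~\eqref{eq:quad}. Hence it suffices to produce a directed graph on $V$, together with a small auxiliary node set, whose augmented cut function~\eqref{eq:specialcut} $\varepsilon$-models~\eqref{eq:quad} on a ground set of size $k = n$, and then to count its nodes and edges.

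For the construction I would take the combination of CB-gadgets guaranteed by the earlier sparse-approximation result: the quadratic splitting function on a $k$-node hyperedge can be $\varepsilon$-modeled by a combination of $t = O(\min\{k,\varepsilon^{-1/2}\log\log\tfrac{1}{\varepsilon}\})$ CB-gadgets, where the $O(k)$ term comes from the general SCB bound and the $O(\varepsilon^{-1/2}\log\log\tfrac{1}{\varepsilon})$ term from the clique-specific argument. Place all $t$ of these gadgets on the node set $V$: the $j$-th gadget, with parameters $(a_j,b_j)$, contributes two fresh auxiliary nodes $e'_j,e''_j$, together with the $2n+1$ directed edges described after~\eqref{abcbgadget} (an edge $v\to e'_j$ and an edge $e''_j\to v$ of weight $a_j$ for each $v\in V$, plus an edge $e'_j\to e''_j$ of weight $a_jb_j$). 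The total is $n + 2t = O(n)$ nodes — using $t = O(n)$ always — and $t\,(2n+1) = O(n\varepsilon^{-1/2}\log\log\tfrac{1}{\varepsilon})$ directed edges, where in the regime $\varepsilon^{-1/2}\log\log\tfrac{1}{\varepsilon} > n$ we instead use $t=O(n)$ so that $tn = O(n^2) = O(n\varepsilon^{-1/2}\log\log\tfrac{1}{\varepsilon})$.

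It remains to check that the augmented cut function of this gadget collection equals the combined surrogate splitting function up to the promised $(1+\varepsilon)$ factor. The key structural point is that distinct CB-gadgets share no auxiliary nodes, so the directed edge set $\hat E$ decomposes as a disjoint union over gadgets; for any fixed $S\subseteq V$ the directed cut $\dircut_{\hat G}(S\cup T)$ is additive over this decomposition, and the placement of the pair $(e'_j,e''_j)$ only affects the $j$-th summand. Consequently $\min_{T\subseteq\mathcal{A}}\dircut_{\hat G}(S\cup T)$ splits as a sum of per-gadget minima, each of which equals $\vw_{a_j,b_j}(S\cap V)$ as in~\eqref{abcbgadget}. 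Summing gives $\cut_{\hat G}(S) = \sum_j \vw_{a_j,b_j}(S)$, which is precisely the piecewise-linear surrogate produced by the approximation result; since that surrogate is built to lie pointwise between $\vw_e$ and $(1+\varepsilon)\vw_e$, condition~\eqref{eq:augsparse} follows.

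The genuinely substantive work is all upstream, in the theorem that only $O(\varepsilon^{-1/2}\log\log\tfrac{1}{\varepsilon})$ CB-gadgets suffice to $\varepsilon$-approximate the quadratic penalty independently of $k$. Within the present argument the only things needing care are the additive decomposition of the augmented cut over auxiliary-disjoint gadgets and the one-sidedness of the approximation (the surrogate must \emph{over}estimate $\vw_e$ so that the left inequality $\cut_G(S)\le\cut_{\hat G}(S)$ holds together with the right one); both are immediate from the way the CB-gadget combination is defined, so I do not anticipate a real obstacle beyond this bookkeeping.
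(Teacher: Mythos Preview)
Your proposal is correct and follows essentially the same approach as the paper: view $K_n$ as a single hyperedge with the clique splitting function, apply the $O(\varepsilon^{-1/2}\log\log\tfrac{1}{\varepsilon})$ CB-gadget bound for that function, and count nodes and edges. You are in fact slightly more careful than the paper in handling the large-$\varepsilon$ regime (the paper simply remarks that the bound is only meaningful when $\varepsilon^{-1/2}\log\log\tfrac{1}{\varepsilon}$ is asymptotically below $n$) and in spelling out the additivity of the augmented cut over auxiliary-disjoint gadgets.
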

By comparison, the best standard cut and spectral sparsifiers for the complete graph have exactly $n$ nodes and $O(n /\varepsilon^2)$ edges. This is tight for spectral sparsifiers~\cite{batson2014twice}, as well as for degree-regular cut sparsifiers with uniform edge weights~\cite{alon1997expansion}. Thus, by adding a small number of auxiliary nodes, our sparsifiers enable us to obtain a significantly better dependence on $\varepsilon$ when cut-sparsifying a complete graph. Our sparsifier is easily constructed deterministically in $O(n\varepsilon^{-1/2}\log \log \frac{1}{\varepsilon})$ time.

Standard \emph{undirected} sparsifiers for the complete graph have received significant attention as they correspond to expander graphs~\cite{lubotzky1988ramanujan,alon1997expansion,margulis1988explicit,batson2014twice}. We remark that the directed augmented cut sparsifiers we produce are very different in nature and should not be viewed as expanders. In particular, unlike for expander graphs, random walks on our complete graph sparsifiers will converge to a very non-uniform distribution. We are interested in augmented sparsifiers for the complete graph simply for their ability to model cut properties in a different way, and the implications this has for sparsifying hypergraph clique expansions and co-occurrence graphs.

\paragraph{Sparsifying co-occurrence graphs}
Co-occurrence relationships are inherent in the construction of many types of graphs. Formally, consider a set of $n = |V|$ nodes that are organized into a set of co-occurrence interactions $\mathcal{C}  \subseteq 2^V$. Interaction $c \in \mathcal{C}$ is associated with a weight $w_c > 0$, and an edge between nodes $i$ and $j$ is created with weight $w_{ij} = \sum_{c\in \mathcal{C}: i,j \in c} w_c$. When $w_c = 1$ for every $c \in \mathcal{C}$, $w_{ij}$ equals the number of interactions that $i$ and $j$ share. We use $d_\mathit{avg}$ to denote the average number of co-occurrence interactions in which nodes in $V$ participate. The cut value in the resulting graph $G = (V,E)$ for a set $S \subseteq V$ is given by the following \emph{co-occurrence cut function}:
\begin{equation}
\label{eq:cocut}
\cut_G(S) = \sum_{c \in \mathcal{C}} w_c\cdot |S \cap c| \cdot |\bar{S} \cap c|.
\end{equation}
Graphs with this co-occurrence cut function arise frequently as clique expansions of a hypergraph~\cite{Zhou2006learning,BensonGleichLeskovec2016,zien1999,hadley1995}, or as projections of a bipartite graph~\cite{li2007evolving,neal2014backbone,newman2001random,ramasco2006social,zhou2007bipartite,destefano2013use,watts1998collective}. Even when the underlying dataset is not first explicitly modeled as a hypergraph or bipartite graph, many approaches implicitly use this approach to generate a graph from data. When enough group interaction sizes are large, $G$ becomes dense, even if $|\mathcal{C}|$ is small. We can significantly sparsify $G$ by applying an efficient sparsifier to each clique induced by a co-occurrence relationship. Importantly, we can do this without ever explicitly forming $G$. By applying Theorem~\ref{thm:complete} as a black-box for clique sparsification, we obtain the following result.
\begin{theorem}
	\label{thm:cooc}
	Let $G = (V,E)$ be the co-occurrence graph for some $\mathcal{C} \subseteq 2^V$ and let $n = |V|$. For $\varepsilon > 0$, there exists an augmented sparsifier $\hat{G}$ with $O(n + |\mathcal{C}| \cdot f(\varepsilon))$ nodes and $O(n\cdot d_\mathit{avg} \cdot f(\varepsilon) )$ edges, where $f(\varepsilon) = \varepsilon^{-1/2}\log \log \frac{1}{\varepsilon}$. In particular, if $d_{avg}$ is constant and for some $\delta > 0$ we have $\sum_{c \in \mathcal{C}} |c|^2 = \Omega(n^{1+\delta})$, then forming $G$ explicitly takes $\Omega(n^{1+\delta})$ time, but an augmented sparsifier for $G$ with $O(nf(\varepsilon))$ nodes and $O(n f(\varepsilon))$ edges can be constructed in $O( n f(\varepsilon))$ time.
\end{theorem}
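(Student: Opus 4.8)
The plan is to apply Theorem~\ref{thm:complete} as a black box, once per co-occurrence interaction $c \in \mathcal{C}$, and then add up the resulting node and edge counts while tracking how the sum $\sum_{c} |c|$ relates to $n \cdot d_\mathit{avg}$. First I would observe that the co-occurrence cut function~\eqref{eq:cocut} decomposes as a sum over $c \in \mathcal{C}$ of the clique-expansion cut function $w_c \cdot |S \cap c|\cdot|\bar S \cap c|$, i.e.\ each term is exactly the clique splitting function~\eqref{eq:quad} on the node set $c$, scaled by $w_c$. For each such term, Theorem~\ref{thm:complete} (applied to the complete graph on the $|c|$ nodes of $c$, with edge weights uniformly scaled by $w_c$, which does not change the sparsifier's size) gives an $\varepsilon$-augmented sparsifier $\hat{G}_c$ using $O(|c|)$ auxiliary nodes and $O(|c|\,f(\varepsilon))$ directed weighted edges, where $f(\varepsilon) = \varepsilon^{-1/2}\log\log\frac1\varepsilon$. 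Taking $\hat{G}$ to be the union over all $c$ of these gadgets, overlaid on the shared vertex set $V$ (each $\hat{G}_c$ contributes its own private auxiliary nodes, so there is no interaction between gadgets beyond the original vertices), the augmented cut function of $\hat{G}$ is the sum of the augmented cut functions of the $\hat{G}_c$ because the minimization over auxiliary nodes in~\eqref{eq:specialcut} separates across the disjoint auxiliary sets. Hence the two-sided bound~\eqref{eq:augsparse} for $\hat G$ follows by summing the per-gadget bounds, and each $S \subseteq V$ sees $\cut_G(S) \le \cut_{\hat G}(S) \le (1+\varepsilon)\cut_G(S)$.

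Next I would do the counting. The number of auxiliary nodes is $\sum_{c\in\mathcal{C}} O(|c|) = O\big(\sum_{c} |c|\big)$; since $\sum_{c\in\mathcal{C}}|c| = \sum_{v\in V}(\text{number of interactions containing }v) = n\cdot d_\mathit{avg}$ by definition of $d_\mathit{avg}$, and also $\sum_c |c| \le |\mathcal{C}| \cdot \max_c |c| = O(|\mathcal{C}| \cdot n)$, the cleaner way to state the node bound is $O(n + |\mathcal{C}|\cdot f(\varepsilon))$: we keep the $n$ original nodes, and a slightly more careful accounting gives each gadget $O(|c|)$ auxiliary nodes where in fact the CB-gadget construction behind Theorem~\ref{thm:complete} uses only $O(f(\varepsilon))$ auxiliary nodes per clique (two per CB-gadget, and $O(f(\varepsilon))$ gadgets), not $O(|c|)$ — I would state this explicitly by unpacking that the $O(|c|)$ in Theorem~\ref{thm:complete} is $n$ original nodes plus $O(f(\varepsilon))$ auxiliary nodes, so summing the auxiliary part over $\mathcal C$ gives $O(|\mathcal C|\,f(\varepsilon))$. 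The edge count is $\sum_{c\in\mathcal{C}} O(|c|\,f(\varepsilon)) = O\big(f(\varepsilon)\sum_{c}|c|\big) = O(n\cdot d_\mathit{avg}\cdot f(\varepsilon))$, using the identity $\sum_c|c| = n\,d_\mathit{avg}$ again.

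Finally, for the "in particular" clause, I would argue about explicit construction cost. Forming $G$ explicitly requires materializing, for each $c$, all $\binom{|c|}{2}$ edges of its clique, i.e.\ $\Theta\big(\sum_c |c|^2\big)$ work; under the hypothesis $\sum_c |c|^2 = \Omega(n^{1+\delta})$ this is $\Omega(n^{1+\delta})$. With $d_\mathit{avg} = O(1)$ the sum $\sum_c |c| = O(n)$, so the sparsifier has $O(n f(\varepsilon))$ nodes and $O(n f(\varepsilon))$ edges; since the construction in Theorem~\ref{thm:complete} is deterministic and runs in time linear in its output size (it is a direct combinatorial gadget placement with no search over $G$'s edges), and since reading off each clique $c$ and invoking the gadget costs $O(|c| + |c|f(\varepsilon)) = O(|c|f(\varepsilon))$, the total construction time is $O\big(f(\varepsilon)\sum_c |c|\big) = O(n f(\varepsilon))$, crucially without ever instantiating $G$.

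The main obstacle I anticipate is not any single hard step but the bookkeeping around the node count: I want to make sure the auxiliary-node contribution per clique is genuinely $O(f(\varepsilon))$ rather than $O(|c|)$, so that the total is $O(n + |\mathcal C| f(\varepsilon))$ and not $O(n\,d_\mathit{avg}\,f(\varepsilon))$; this requires pointing precisely at the CB-gadget structure underlying Theorem~\ref{thm:complete} (two auxiliary nodes per gadget, $O(f(\varepsilon))$ gadgets) rather than treating the theorem as a pure black box for node count. The other point needing a clean statement is the decomposition of the augmented cut function across disjoint auxiliary sets — that $\min_{T\subseteq\mathcal A}$ in~\eqref{eq:specialcut} factors as a sum of independent minimizations when $\mathcal A = \bigsqcup_c \mathcal A_c$ and each $\mathcal A_c$'s incident edges touch only $c \cup \mathcal A_c$ — but this is routine since the directed cut function is itself a sum of per-edge terms.
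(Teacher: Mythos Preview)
Your proposal is correct and follows essentially the same approach as the paper: apply Theorem~\ref{thm:complete} independently to each $c \in \mathcal{C}$, then sum node and edge counts using the identity $\sum_{c}|c| = n\,d_\mathit{avg}$. You even anticipate the one subtlety the paper handles---that the auxiliary-node contribution per clique is $O(f(\varepsilon))$ rather than $O(|c|)$---and resolve it the same way, by unpacking the CB-gadget structure behind Theorem~\ref{thm:complete}; the only small step you leave implicit is that $d_\mathit{avg}=O(1)$ forces $|\mathcal{C}| \le \sum_c |c| = O(n)$, which is needed to conclude the node bound is $O(nf(\varepsilon))$ in the special case.
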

Importantly, the average co-occurrence degree $d_\mathit{avg}$ is not the same as the average node degree in $G$, which will typically be much larger. Theorem~\ref{thm:cooc} highlights that in regimes where $d_\mathit{avg}$ is a constant, our augmented sparsifiers will have fewer edges than the number needed by standard $\varepsilon$-cut sparsifiers. In Section~\ref{sec:ccs}, we consider simple graph models that satisfy these assumptions. We also consider tradeoffs between our augmented sparsifiers and standard sparsification techniques for co-occurrence graphs. Independent of the black-box sparsifier we used, implicitly sparsifying $G$ in this way will often lead to significant runtime improvements over forming $G$ explicitly.

\subsection{Approximate Cardinality-based DSFM}
\begin{table}[t]
	\caption{Runtimes for Card-DFSM for various methods, where 
		$\mu = \sum_e |e|$, and $\mu_2 = \sum_e |e|^2$, and where $\theta_\mathit{max}$ and $\theta_\mathit{avg}$ denote certain oracle runtimes. These satisfy $\theta_\mathit{max} = \Omega(\max |e|)$, and $\theta_\mathit{avg} = \Omega(\frac{1}{R}\sum_e |e|)$. $T_{mf}(N,M)$ is the time to solve a max-flow problem with $N$ nodes and $M$ edges.}
	\label{tab:dfsm}
	\centering
	\begin{tabular}{l ll}
		\toprule 
		Method & Discrete/Cont & Runtime  \\
		\midrule
		Kolmogorov SF~\cite{kolmogorov2012minimizing} & Discrete & $\tilde{O}(\mu^2)$  \\
		IBFS Strong~\cite{fix2013structured,ene2017decomposable} & Discrete &$O(n^2 \theta_\mathit{max} \mu_2)$  \\
		IBFS Weak~\cite{fix2013structured,ene2017decomposable} & Discrete &$\tilde{O}(n^2 \theta_\mathit{max} + n \sum_{e} |e|^4)$ \\
		AP~\cite{nishihara2014convergence,ene2017decomposable,panli2018revisiting} & Continuous & $\tilde{O}(nR\theta_\mathit{avg}\mu)$ \\
		RCDM~\cite{ene2015random,ene2017decomposable} & Continuous & $\tilde{O}(n^2 R\theta_\mathit{avg})$ \\
		ACDM~\cite{ene2015random,ene2017decomposable} & Continuous & $\tilde{O}(n R\theta_\mathit{avg})$ \\
		Axiotis et al.~\cite{axiotis2021decomposable} & Discrete &  $\tilde{O}( \max_e |e|^2 \cdot \left(\sum_{e} |e|^2 \theta_e +  T_\mathit{mf}\left(n,n + \mu_2\right) \right)$  \\
		This paper (exact solutions) & Discrete & $\tilde{O}(T_\mathit{mf}(\mu , \mu_2 )) = \tilde{O}\left(\mu_2 + \mu^{3/2} \right)$   \\ 
		This paper (approximate solutions)& Discrete & $\tilde{O}\left( T_\mathit{mf}(n+ \frac{R}{\varepsilon}, \frac{1}{\varepsilon} \mu )\right) = \tilde{O}\left( \frac{\mu}{\varepsilon} + (n + \frac{R}{\varepsilon})^{3/2} \right)$  \\
		\bottomrule
	\end{tabular}
	\vspace{-\baselineskip}
\end{table}
Typically in hypergraph cut problems it is natural to assume that splitting functions are symmetric and satisfy $\vw_e(\emptyset) = \vw_e(e) = 0$. However, we show that our sparse reduction techniques apply even when these assumptions do not hold. This allows us to design fast algorithms for approximately solving certain decomposable submodular function minimization (DSFM) problems.
Formally a function $f \colon 2^V \rightarrow \mathbb{R}^+$ is a decomposable submodular function if it can be written as
\begin{equation}
\label{sos}
f(S) = \sum_{e \in \mathcal{E}} \vf_e(S \cap e),
\end{equation}
where each $\vf_e$ is a submodular function defined on a set $e \subseteq V$. Following our previous notation and terminology, we say $\vf_e$ is cardinality-based if $\vf_e(S) = g_e(|S|)$ for some concave function $g_e$. This special case, which we refer to as Card-DSFM, has been one of the most widely studied and applied variants since the earliest work on DSFM~\cite{kolmogorov2012minimizing,stobbe2010efficient}.
In terms of theory, previous research has addressed specialized runtimes and solution techniques~\cite{kolmogorov2012minimizing, stobbe2010efficient,kohli2009robust}.
In practice, cardinality-based decomposable submodular functions frequently arise as higher-order energy functions in computer vision~\cite{kohli2009robust} and set cover functions~\cite{stobbe2010efficient}.
Even previous research on algorithms for the more general DSFM problem tends to focus on cardinality-based examples in experimental results~\cite{ene2017decomposable,stobbe2010efficient,jegelka2013reflection,panli2018revisiting}.

Existing approaches for minimizing these functions focus largely on finding exact solutions. Using our sparse reduction techniques, we develop the first approximation algorithms for the problem. Let $n = |V|$, $R = |\mathcal{E}|$, and $\mu = \sum_{e \in \mathcal{E}} |e|$. In Appendix~\ref{app:general}, we show that a result similar to Theorem~\ref{thm:aug} also holds for more general cardinality-based splitting functions. In Section~\ref{sec:runtime}, we combine that result with fast recent $s$-$t$ cut solvers~\cite{brand2021minimum} to prove the following theorem.
\begin{theorem}
	\label{thm:card}
	Let $\varepsilon > 0$. Any cardinality-based decomposable submodular function can be minimized to within a multiplicative $(1+\varepsilon)$ factor in 
	$\tilde{O}\left( \frac{1}{\varepsilon}\sum_{e \in \mathcal{E}}|e| + (n + \frac{R}{\varepsilon})^{3/2} \right)$ time.
\end{theorem}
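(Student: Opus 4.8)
The plan is to combine three ingredients: (i) the general cardinality-based reduction alluded to in Appendix~\ref{app:general}, which extends Theorem~\ref{thm:aug} to splitting functions that need not be symmetric or vanish on $\emptyset$ and $e$; (ii) the sparse CB-gadget construction of Theorem (the one bounding the number of gadgets by $O(\min\{\log k/\varepsilon, k\})$); and (iii) a fast $s$-$t$ min-cut / max-flow solver such as the almost-linear-time algorithm of Brand et al.~\cite{brand2021minimum}, for which $T_\mathit{mf}(N,M) = \tilde{O}(M + N^{3/2})$ in the relevant regime. First I would observe that minimizing a decomposable submodular function $f(S) = \sum_{e} \vf_e(S\cap e)$ with each $\vf_e(S) = g_e(|S|)$ for concave $g_e$ is, after the standard reduction, an instance of a generalized hypergraph minimum-cut (or $s$-$t$ cut) problem: each component function $\vf_e$ is exactly a cardinality-based splitting function on the ground set $e$, and $f$ is the generalized cut function $\cut_\mathcal{H}$ of the hypergraph $\mathcal{H}=(V,\mathcal{E})$. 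Minimizing a submodular function over all subsets can be phrased as an $s$-$t$ min-cut once we handle the (possibly nonzero) values $g_e(0)$ and $g_e(|e|)$ by attaching appropriate source/sink arcs, so the problem reduces to computing $\min_{S} \cut_\mathcal{H}(S)$ up to additive constants.

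Next I would invoke the general version of Theorem~\ref{thm:aug} to build an augmented cut sparsifier $\hat G = (V\cup\mathcal{A}, \hat E)$ with $\cut_\mathcal{H}(S) \le \cut_{\hat G}(S) \le (1+\varepsilon)\cut_\mathcal{H}(S)$ for every $S\subseteq V$; by the sparse-gadget count this graph has $O(n + \frac{1}{\varepsilon}\sum_e \log|e|) = \tilde O(n + R/\varepsilon)$ nodes and $O(\frac{1}{\varepsilon}\sum_e |e|\log|e|) = \tilde O(\frac{1}{\varepsilon}\sum_e|e|) = \tilde O(\mu/\varepsilon)$ edges. Because $\cut_{\hat G}$ is itself a (standard, directed) $s$-$t$ cut function on $\hat G$ after the minimization over $\mathcal{A}$ is folded into the flow (the auxiliary nodes are just ordinary nodes in $\hat G$, and the $\min_{T\subseteq\mathcal{A}}$ in~\eqref{eq:specialcut} is realized automatically by any min-cut solver), a single max-flow computation on $\hat G$ returns $\hat S^\ast$ with $\cut_{\hat G}(\hat S^\ast) = \min_S \cut_{\hat G}(S)$. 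Setting $S^\ast = V\cap \hat S^\ast$, the sandwich inequality gives $\cut_\mathcal{H}(S^\ast) \le \cut_{\hat G}(\hat S^\ast) = \min_S \cut_{\hat G}(S) \le (1+\varepsilon)\min_S \cut_\mathcal{H}(S)$, so $S^\ast$ is a $(1+\varepsilon)$-approximate minimizer of $f$. Plugging the node and edge counts of $\hat G$ into $T_\mathit{mf}(N,M) = \tilde O(M + N^{3/2})$ yields the claimed $\tilde O\!\left(\frac{1}{\varepsilon}\sum_{e}|e| + (n + R/\varepsilon)^{3/2}\right)$ running time, with the sparsifier construction itself being subsumed (it is linear in the output size, which is dominated by the above).

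The main obstacle I anticipate is not the reduction-to-$s$-$t$-cut bookkeeping but getting the general (non-symmetric, non-normalized) cardinality-based splitting functions to fit the CB-gadget machinery with the same $\tilde O(1/\varepsilon)$ gadget count — i.e.\ the content deferred to Appendix~\ref{app:general}. A concave $g_e$ on $\{0,1,\dots,|e|\}$ with arbitrary endpoint values must be decomposed into a symmetric "interior" part that CB-gadgets model (the piecewise-linear approximation of the $\{(i,w_i)\}$ points) plus an affine part $\alpha_e |S| + \beta_e$ that is handled separately by simple directed edges to the source/sink (an affine term contributes only modular, hence trivially graph-representable, cost). One must check that this split preserves submodularity/concavity of the residual, that the multiplicative $(1+\varepsilon)$ guarantee is not disturbed by the affine offset (care is needed because a multiplicative approximation of $g_e - (\text{affine})$ is not the same as one of $g_e$; one may need the approximation to be two-sided and to dominate $g_e$, which the construction already ensures), and that the extra source/sink arcs add only $O(\mu)$ edges and $O(1)$ nodes. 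A secondary technical point is confirming that the chosen max-flow solver's bound genuinely simplifies to $\tilde O(M + N^{3/2})$ on our integer-capacitated (or $\mathrm{poly}$-bounded-capacity) instances and that capacities arising from the gadget weights $a\cdot b$ stay polynomially bounded after scaling; this is routine but should be stated explicitly in Section~\ref{sec:runtime}.
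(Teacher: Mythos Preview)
Your proposal is correct and follows essentially the same route as the paper: build a sparse $s$-$t$ cut instance whose optimum $(1+\varepsilon)$-approximates $\min_S f(S)$, count nodes and edges, and plug into the $\tilde O(M+N^{3/2})$ max-flow bound of~\cite{brand2021minimum}.

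The one place where you diverge is in handling the non-symmetric, non-normalized component functions. You propose splitting each $g_e$ into an affine part (absorbed by source/sink arcs) plus a symmetric residual modeled by ordinary CB-gadgets, and you correctly flag the danger that a multiplicative $(1+\varepsilon)$ approximation of $g_e-(\text{affine})$ need not yield a multiplicative $(1+\varepsilon)$ approximation of $g_e$. The paper sidesteps this issue entirely: Appendix~\ref{app:general} introduces an \emph{asymmetric} CB-gadget (one auxiliary node, directed edges with weights $a(k-b)$ and $ab$) and approximates the whole concave function $g_e$ directly by a piecewise linear curve on $[0,k]$, then reads off the gadget parameters from that curve (plus two scalars $z_0,z_k$ for the source/sink arcs). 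Because the piecewise linear approximation is applied to $g_e$ itself rather than to a shifted residual, the $(1+\varepsilon)$ multiplicative guarantee holds pointwise with no further argument. Both approaches give the same $O(\varepsilon^{-1}\log|e|)$ gadget count, but the paper's version is cleaner precisely because it avoids the affine-offset complication you anticipated.
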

We compare this runtime against the best previous techniques for Card-DSFM. We summarize runtimes for competing approaches in Table~\ref{tab:dfsm}.
Our techniques enable us to highlight regimes of the problem where we can obtain significantly faster algorithms in cases where it is sufficient to solve the problem approximately. For example, whenever $n = \Omega(R)$, our algorithms for finding approximate solutions provide a runtime advantage --- often a significant one --- over approaches for computing an exact solution. We provide a more extensive theoretical runtime comparison in Section~\ref{sec:runtime}. In Section~\ref{sec:experiments}, we show that implementations of our methods lead to significantly faster results for benchmark image segmentation tasks and localized hypergraph clustering problems.


\section{The Sparse Gadget Approximation Problem}
\label{sec:spagap}
A generalized hypergraph cut function is defined as the sum of its splitting functions. Therefore, if we can design a technique for approximately modeling a single hyperedge with a sparse graph, this in turn provides a method for constructing an augmented sparsifier for the entire hypergraph. We now formalize the problem of approximating a submodular cardinality-based (SCB) splitting function using a combination of cardinality-based (CB) gadgets.
We abstract this as the task of approximating a certain class of functions with integer inputs (equivalent to SCB splitting functions), using a small number of simpler functions (equivalent to cut properties of the gadgets). Let $[r] = \{1,2, \hdots ,r\}$.
\begin{definition}
	\label{def:scb}
	An $r$-SCB integer function is a function $\vw \colon \{0\}\cup [r] \rightarrow \mathbb{R}^+$ satisfying
	\begin{align}
		\label{zeroconstraint}
		\vw(0) &= 0\\
		\label{mainconstraints}
		2\vw(j) &\geq \vw(j-1) + \vw(j+1) \text{ for $j = 1, \hdots, r-1$} \\
		\label{monotones}
		0 &\leq \vw(1) \leq \vw(2) \leq \hdots \leq \vw(r)
	\end{align}
	We denote the set of $r$-SCB integer functions by $\mathcal{S}_r$.
\end{definition}
The value $\vw(i)$ represents the splitting penalty for placing $i$ nodes on the small side of a cut hyperedge. In previous work we showed that the inequalities given in Definition~\ref{def:scb} are necessary and sufficient conditions for a cardinality-based splitting function to be submodular~\cite{veldt2020hypercuts}. The $r$-SCB integer function for a CB-gadget with edge parameters $(a,b)$ (see~\eqref{abcbgadget}) is
\begin{equation}
\label{scbcb}
\vw_{a,b}(i) = a \cdot \min \{ i, b\}\,.
\end{equation}
Combining $J$ CB-gadgets produces a \emph{combined} $r$-SCB integer function of special importance. 
\begin{definition}
	\label{def:ccb}
	An $r$-CCB  (\emph{C}ombined \emph{C}ardinality-\emph{B}ased gadget) function of order $J$, is an $r$-SCB integer function $\hat{\vw}$ with the form
	\begin{equation}
	\label{combinedscb}
	\hat{\vw}(i) = \sum_{j = 1}^J a_j \cdot \min \{ i, b_j\}\,, \text{ for $i \in [r]$}.
	\end{equation}
	where the $t$-dimensional vectors $\va = (a_j)$ and $\vb = (v_j)$ parameterizing $\hat{\vw}$ satisfy:
	\begin{align}
		\label{ab1}
		b_j &> 0, a_j > 0 \text{ for all $j \in [J]$}  \\
		\label{ab2}
		b_j &< b_{j+1} \text{ for $j \in [J-1]$}\\
		\label{ab3}
		b_J & \leq r.
	\end{align}
	We denote the set of $r$-CCB functions of order $J$ by $\mathcal{C}_r^J$.
\end{definition}
The conditions on the vectors $\va$ and $\vb$ come from natural observations about combining CB-gadgets. Condition~\eqref{ab1} ensures that we do not consider CB-gadgets where all edge weights are zero. The ordering in condition~\eqref{ab2} is for convenience; the fact that $b_j$ values are all distinct implies that we cannot collapse two distinct CB-gadgets into a single CB-gadget with new weights. For condition~\eqref{ab3}, observe that for any $b_J \geq r$, $\min \{i, b_J\} = i$ for all $i \in [r]$. For a helpful visual, note that the $r$-SCB function in~\eqref{scbcb} represents splitting penalties for the CB-gadget in Figure~\ref{cbgad}. An $r$-CCB function corresponds to a combination of CB-gadgets, as in Figures~\ref{cliqueapprox}~and~\ref{sparsified_clique}.

In previous work we showed that any combination of CB-gadgets produces a submodular and cardinality-based splitting function, which is equivalent to stating that $\mathcal{C}_r^J \subseteq \mathcal{S}_r$ for all $J \in \mathbb{N}$~\cite{veldt2020hypercuts}. Furthermore, $\mathcal{C}_r^r = \mathcal{S}_r$, since any $r$-SCB splitting function can be modeled by a combination of $r$ CB-gadgets. Our goal here is to determine how to approximate a function $\vw \in \mathcal{S}_r$ with some function $\hat{\vw} \in \mathcal{C}_r^J$ where $J \ll r$. This corresponds to modeling an SCB splitting function using a \emph{small} combination of CB-gadgets.
\begin{definition}
	\label{def:scgap}
	For a fixed $\vw \in \sr$ and an approximation tolerance parameter $\varepsilon \geq 0$, the Sparse Gadget Approximation Problem (\spagap) is the following optimization problem:
	\begin{equation}
	\label{eq:mainobj}
	\begin{array}{ll}
	\minimize\,\, &\kappa \\
	\text{subject to } & \vw  \leq  \hat{\vw} \leq (1+ \varepsilon) \vw\\
	& \hat{\vw} \in \mathcal{C}_r^\kappa.
	\end{array}
	\end{equation}
\end{definition}

\paragraph{Upper Bounding Approximations}
Problem~\eqref{eq:mainobj} specifically optimizes over functions $\hat{\vw}$ that upper bound $\vw$. This restriction simplifies several aspects of our analysis without any practical consequence. For example, we could instead fix some $\delta \geq 1$ and optimize over functions $\tilde{\vw}$ satisfying $\frac{1}{\delta} \vw  \leq  \tilde{\vw} \leq \delta \vw$.
However, this implies that the function $\hat{\vw} = \delta \tilde{\vw}$ satisfies
$\vw  \leq  \hat{\vw} \leq (1+ \varepsilon) \vw$,
with $\varepsilon = \delta^2 - 1$. Thus, the problems are equivalent for the correct choice of $\delta$ and $\varepsilon$.

\paragraph{Motivation for Optimizing over CB-gadgets}
A natural question to ask is whether it would be better to search for a sparsest approximating gadget over a broader classes of gadgets. There are several key reasons why we restrict to combinations of  CB-gadgets. First of all, we already know these can model \emph{any} SCB splitting function, and thus they provide a very simple building block with broad modeling capabilities. Furthermore, it is clear how to define an optimally sparse combination of CB-gadgets: since all CB-gadgets for a $k$-node hyperedge have the same number of auxiliary nodes and directed edges, an optimally sparse reduction is one with a minimum number of CB-gadgets. If we instead wish to optimize over all possible gadgets, it is likely that the \emph{best} reduction technique will depend on the splitting function that we wish to approximate. Furthermore, the optimality of a gadget may not even be well-defined, since one must take into account both the number of auxiliary nodes as well as the number of edges that are introduced, and the tradeoff between the two is not always clear. Finally, as we shall see in the next section, by restricting to CB-gadgets, we are able to draw a useful connection between sparse gadgets and approximating piecewise linear curves with a smaller number of linear pieces. 

\section{Sparsification via Piecewise Linear Approximation}
\label{sec:reductions}
We begin by defining the class of piecewise linear functions in which we are interested.
\begin{definition}
	\label{def:plfun}
	For $r \in \mathbb{N}$, $\mathcal{F}_r$ is the class of functions $\vf \colon [0,\infty] \longrightarrow \mathbb{R}_+$ such that:
	\begin{enumerate}[itemsep=0mm]
		\item $\vf(0) = 0$ 
		\item $\vf$ is a constant for all $x \geq r$
		\item $\vf$ is increasing: $x_1 \leq x_2 \implies \vf(x_1) \leq \vf(x_2)$ 
		\item $\vf$ is piecewise linear
		\item $\vf$ is concave (and hence, continuous).
	\end{enumerate}
\end{definition}
It will be key to keep track of the number of linear pieces that make up a given function $\vf \in \mathcal{F}_r$. Let $\mathcal{L}$ be the set of linear functions with nonnegative slopes and intercept terms:
\begin{equation}
\mathcal{L} = \{ g(x) = mx + d \, |\, m, d \in \mathbb{R}^+\}.
\end{equation}
Every function $\vf \in \mathcal{F}_r$ can be characterized as the lower envelope of a set of these linear functions. 
\begin{align}
	\label{Lfun}
	\vf(x) &= \min_{g \in L} g(x), \text{ where }  L \subset \mathcal{L}.
\end{align}
We use $|L|$ to denote the number of linear pieces of $\vf$. In order for~\eqref{Lfun} to properly characterize a function in $\mathcal{F}_r$, it must be constant for all $x \geq r$ (property 2 in Definition~\ref{def:plfun}), and thus $L$ must contain exactly one line of slope zero. 
The \emph{continuous extension} $\hat{\vf}$ of an $r$-CCB function $\vw$ parameterized by $(\va,\vb)$ is defined as 
\begin{equation}
\label{cextt}
\hat{\vf}(x) = \sum_{j = 1}^J a_j \cdot \min \{ x, b_j \} \text{ for $x\in [0, \infty]$}.
\end{equation}
We prove that continuously extending any $r$-CCB function always produces a function in $\mathcal{F}_r$. Conversely, every $\vf \in \mathcal{F}_r$ is the continuous extension of some $r$-CCB function. Appendix~\ref{app:proofs} provides proofs for these results.
\begin{lemma}
	\label{lem:ccb2pl}
	Let $\hat{\vf}$ be the continuous extension for $\vw$, shown in~\eqref{cextt}. This function is in the class $\mathcal{F}_r$, and has exactly $J$ positive sloped linear pieces, and one linear piece of slope zero.
\end{lemma}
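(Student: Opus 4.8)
The plan is to verify the five defining properties of $\fr$ (Definition~\ref{def:plfun}) by a termwise argument, and then read off the piece count directly from the breakpoint structure of the sum in~\eqref{cextt}.

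First I would note that each summand $h_j(x) := a_j \min\{x,b_j\}$ is individually a very simple function: since $a_j, b_j > 0$ by~\eqref{ab1}, the function $h_j$ is nonnegative on $[0,\infty]$, vanishes at $x=0$, is nondecreasing, is concave (being the pointwise minimum of the two affine functions $a_j x$ and $a_j b_j$), and is piecewise linear with slope $a_j$ on $[0,b_j]$ and slope $0$ on $[b_j,\infty]$. Every property in Definition~\ref{def:plfun} except property~2 is preserved under finite sums, so $\hat{\vf} = \sum_{j=1}^J h_j$ is nonnegative, zero at the origin, nondecreasing, concave, and piecewise linear. For property~2 I would invoke conditions~\eqref{ab2} and~\eqref{ab3}: because $b_1 < b_2 < \cdots < b_J \le r$, each $h_j$ is already constant on $[b_J,\infty] \supseteq [r,\infty]$, hence so is $\hat{\vf}$, with finite constant value $\sum_{j=1}^J a_j b_j$. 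This establishes $\hat{\vf}\in\fr$.

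Next I would count the linear pieces. The only candidate breakpoints of $\hat{\vf}$ are the values $b_1 < b_2 < \cdots < b_J$, which are distinct by~\eqref{ab2}. On the interval $(b_{k-1},b_k)$, with the convention $b_0 = 0$, precisely the terms $h_j$ with $j \ge k$ are still on their increasing portion, so $\hat{\vf}$ has slope $\sigma_k := \sum_{j=k}^J a_j$ there, while on $(b_J,\infty]$ it has slope $0$. Since every $a_j > 0$, the slopes are strictly decreasing, $\sigma_1 > \sigma_2 > \cdots > \sigma_J > 0$; hence each $b_k$ is a genuine kink, the intervals $(0,b_1), (b_1,b_2), \ldots, (b_{J-1},b_J)$ are $J$ distinct linear pieces of positive slope, and $(b_J,\infty]$ is a single linear piece of slope zero, which is exactly the claim.

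There is no real obstacle in this argument. The only point requiring a moment of care is confirming that two consecutive pieces never share the same slope — which would make the true count smaller than $J$ — and this is precisely where the strict positivity $a_j > 0$ from~\eqref{ab1} is used; likewise, condition~\eqref{ab3} ($b_J \le r$) enters only in verifying property~2 of Definition~\ref{def:plfun}.
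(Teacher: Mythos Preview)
Your proof is correct and takes essentially the same approach as the paper: both arguments identify the slope on the interval $(b_{k-1},b_k)$ as $\sum_{j=k}^J a_j$, use strict positivity of the $a_j$ to ensure these slopes are distinct and decreasing, and read off concavity and the piece count from there. Your presentation is slightly more modular in first noting that properties 1, 3, 4, 5 are preserved under finite sums before computing slopes, whereas the paper goes directly to the explicit form $\hat{\vf}(x) = \sum_{j<i} a_j b_j + x\sum_{j\ge i} a_j$ on each interval, but the substance is identical.
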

\begin{lemma}
	\label{lem:pl2ccb}
	Let $\vf$ be a function in $\mathcal{F}_r$ with $J+1$ linear pieces. Let $b_i$ denote the $i$th breakpoint of $\vf$, and $m_i$ denote the slope of the $i$th linear piece of $\vf$. Define vectors $\va, \vb \in \mathbb{R}^J$ where $\vb(i) = b_i$ and $\va(i) = a_i = m_i - m_{i+1}$ for $i \in [J]$. If ${\vw}$ is the $r$-CCB function parameterized by vectors $(\va, \vb)$, then $\vf$ is the continuous extension of ${\vw}$.
\end{lemma}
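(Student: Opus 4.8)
The plan is to prove this as the exact converse of Lemma~\ref{lem:ccb2pl}: I will show directly that the continuous extension $\hat{\vf}(x)=\sum_{j=1}^{J}a_j\min\{x,b_j\}$ of the $r$-CCB function $\vw$ parameterized by $(\va,\vb)$ agrees with $\vf$ on all of $[0,\infty]$. First, though, I would verify that $(\va,\vb)$ is a legitimate pair of CCB parameters, i.e.\ that it satisfies conditions~\eqref{ab1}--\eqref{ab3}, so that $\vw$ is genuinely an $r$-CCB function of order $J$. Since $\vf\in\mathcal{F}_r$ is increasing, concave, and piecewise linear with exactly $J+1$ pieces, its slopes $m_1,\dots,m_{J+1}$ are nonnegative and non-increasing, and no two consecutive pieces can share a slope (else they would form a single piece), so $m_1>m_2>\cdots>m_{J+1}$; property~2 of Definition~\ref{def:plfun} then forces $m_{J+1}=0$. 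Hence $a_i=m_i-m_{i+1}>0$ for all $i\in[J]$, which is~\eqref{ab1}; the breakpoints are strictly increasing, which is~\eqref{ab2}; and the final breakpoint $b_J$ (where $\vf$ becomes constant) satisfies $b_J\le r$ — otherwise the piece containing $r$ would have slope zero, collapsing into the final piece — which is~\eqref{ab3}.

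The second step is a direct telescoping computation of $\hat{\vf}$ on each linear piece of $\vf$. Adopting the conventions $b_0=0$ and $b_{J+1}=\infty$, fix $k\in\{0,1,\dots,J\}$ and let $x\in[b_k,b_{k+1}]$. Then $\min\{x,b_j\}=b_j$ for $j\le k$ and $\min\{x,b_j\}=x$ for $j>k$, so
\[
\hat{\vf}(x)=\sum_{j=1}^{k}a_jb_j+x\sum_{j=k+1}^{J}a_j=\sum_{j=1}^{k}a_jb_j+m_{k+1}x,
\]
where the last equality uses $\sum_{j=k+1}^{J}(m_j-m_{j+1})=m_{k+1}-m_{J+1}=m_{k+1}$. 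Thus on $[b_k,b_{k+1}]$ the function $\hat{\vf}$ is linear with slope $m_{k+1}$, matching the slope of the $(k+1)$st linear piece of $\vf$, and the breakpoints of $\hat{\vf}$ are contained in $\{b_1,\dots,b_J\}$.

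To finish, I would match the two functions. Both $\vf$ and $\hat{\vf}$ are continuous (the latter as a finite sum of continuous functions), they have the same candidate breakpoints, and on each interval $[b_k,b_{k+1}]$ they have the same slope $m_{k+1}$; since $\hat{\vf}(0)=0=\vf(0)$ by property~1, integrating the matching slopes from $0$ gives $\hat{\vf}\equiv\vf$. (Equivalently, one checks agreement directly at each breakpoint: expanding $\vf(b_k)=\sum_{i=1}^{k}m_i(b_i-b_{i-1})$ and reindexing yields $\vf(b_k)=m_kb_k+\sum_{i=1}^{k-1}(m_i-m_{i+1})b_i=\sum_{i=1}^{k-1}a_ib_i+m_kb_k$, which is exactly the displayed formula for $\hat{\vf}(b_k)$ taken in the interval $[b_{k-1},b_k]$; combined with matching slopes this again forces equality.)

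I do not expect a genuine obstacle: the core is a one-line telescoping identity. The only places that need care are the preliminary step — pinning down that $\vf$ has precisely $J$ strictly increasing breakpoints lying in $(0,r]$ and that its last slope is $0$, so that $\vw\in\mathcal{C}_r^J$ is well-defined — and staying consistent with the index conventions $b_0=0$, $b_{J+1}=\infty$ when splitting $[0,\infty]$ into pieces.
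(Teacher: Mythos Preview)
Your proposal is correct and follows essentially the same approach as the paper: both verify that $(\va,\vb)$ satisfies the CCB parameter conditions (using that the slopes are strictly decreasing with $m_{J+1}=0$ and $b_J\le r$), and both establish $\hat{\vf}=\vf$ via the telescoping identity $\sum_{j>k}a_j=m_{k+1}$. The only cosmetic difference is that the paper checks agreement at breakpoints by computing $\hat{\vf}(b_i)-\hat{\vf}(b_{i-1})=f_i-f_{i-1}$, whereas you match slopes interval-by-interval starting from $\hat{\vf}(0)=\vf(0)=0$; these are two phrasings of the same computation.
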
 

\subsection{The Piecewise Linear Approximation Problem}
Let $\vw \in \mathcal{S}_r$ be an arbitrary SCB integer function.
Lemma~\ref{lem:pl2ccb} implies that if we can find a piecewise linear function $\vf$ that approximates $\vw$ and has few linear pieces, we can extract from it a CCB function $\hat{\vw}$ with a small order $J$ that approximates $\vw$. Equivalently, we can find a sparse gadget that approximates an SCB splitting function of interest. Our updated goal is therefore to solve the following piecewise linear approximation problem, for a given $\vw \in \sr$ and $\varepsilon \geq 0$:
\begin{equation}
\label{eq:plobj}
\begin{array}{ll}
\minimize_{L \subset \mathcal{L}}\,\, & |L|  \\
\text{subject to } & \vw(i)  \leq  \vf(i) \leq (1+ \varepsilon) \vw(i) \text{ for $i \in [r]$}\\
& \vf \in \mathcal{F}_r\\
&\vf(x) =  \min_{g \in L} g(x)\\
& \text{for each $g \in L$, } g(j) = \vw(j)  \text{ for some $j \in \{0\}\cup [r]$}.
\end{array}
\end{equation}
The last constraint ensures that each linear piece $g \in L$ we consider crosses through at least one point $(j, \vw(j))$. We can add this constraint without loss of generality; if any linear piece $g$ is strictly greater than $\vw$ at all integers, we could obtain an improved approximation by scaling $g$ until it is tangent to $\vw$ at some point. This constraint, together with the requirement $\vf \in \mathcal{F}_r$, implies that the constant function $g^{(r)}(x) = \vw(r)$ is contained in every set of linear functions $L$ that is feasible for~\eqref{eq:plobj}. Since all feasible solutions contain this constant linear piece, our focus is on determining the optimal set of positive-sloped linear pieces needed to approximate $\vw$. 

\textbf{Optimal linear covers.} Given a fixed $\varepsilon \geq 0$ and $i \in  \{0\}\cup [r-1]$, we will say a set $L \subset \mathcal{L}$ is a \emph{linear cover} for a function $\vw \in \sr$ over the range $R = \{i, i+1, \hdots , r\}$, if each $g \in L$ upper bounds $\vw$ at all points, and if for each $j \in R$ there exists $g \in L$ such that $g(j) \leq (1+\varepsilon) \vw(j)$. The set $L$ is an \emph{optimal} linear cover if it contains the minimum number of positive-sloped linear pieces needed to cover $R$.
Thus, an equivalent way of expressing~\eqref{eq:plobj} is that we wish to find an optimal linear cover for $\vw$ over the interval $\{0\}\cup [r]$. In practice there may be many different function $\vf \in \mathcal{F}_r$ which solve~\eqref{eq:plobj}, but for our purposes it suffices to find one.

\subsection{Properties of Linear Pieces in the Cover}
We solve problem~\eqref{eq:plobj} by iteratively growing a set of linear functions $L \subset \mathcal{L}$ one function at a time, until all of $\vw$ is covered. Let $\vf$ be the piecewise linear function we construct from linear pieces in $L$.
\begin{figure}[t]
	\centering
	\centering
	\includegraphics[width=.5\linewidth]{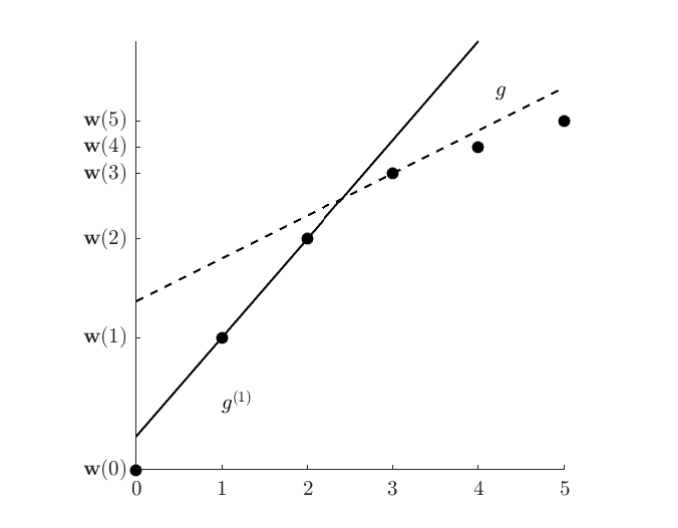}
	\caption{We restrict our attention to lines in $\mathcal{L}$ that coincide with $\vw$ at at least one integer value. Thus, every function we consider is incident to two consecutive values of $\vw$ (e.g., the solid line, $g^{(1)}$), or, it touches $\vw$ at exactly one point (dashed line, $g$).}
	\label{fig:twotypes}
\end{figure}
In order for $\vf$ to upper bound $\vw$, every function $g \in L$ in problem~\eqref{eq:plobj} must upper bound $\vw$ at every $i \in \{0\}\cup [r]$. One way to obtain such a linear function is to connect two consecutive points of $\vw$. For $i \in \{0\}\cup [r-1]$, we denote the line joining points $(i, \vw(i))$ and $(i+1, \vw(i+1))$ by
\begin{equation}
\label{gi}
g^{(i)}(x) = M_i (x - i) + \vw(i),
\end{equation}
where the slope of the line is $M_i = \vw(i+1) - \vw(i)$. 
%
In order for a line to upper bound $\vw$ but only pass through a \emph{single} point $(i, \vw(i))$ for some $i \in [r-1]$, it must have the form
\begin{equation}
\label{onept}
g(x) = m (x - i) + \vw(i) \,,
\end{equation}
where the slope $m$ satisfies  $M_{i} < m < M_{i-1}$. The existence of such a line $g$ is only possible when the points $(i-1, \vw(i-1))$, $(i, \vw(i))$, and $(i+1, \vw(i+1))$ are not collinear. To understand the strict bounds on $m$, note that if $g$ passes through $(i, \vw(i))$ and has slope exactly $M_{i-1}$, then $g$ is in fact the line $g^{(i-1)}$ and also passes through $(i-1, \vw(i-1))$. If $g$ has slope greater than $M_{i-1}$, then $g(i-1) < \vw(i-1)$ and does not upper bound $\vw$ everywhere. We can similarly argue that the slope of $g$ must be strictly greater than $M_i$ so that it does not touch or cross below the point $(i+1, \vw(i+1))$. 

We illustrate both types of functions~\eqref{gi} and~\eqref{onept} in Figure~\ref{fig:twotypes}. The following simple observation will later help in comparing approximation properties of different functions in $\mathcal{L}$.
\begin{observation}
	\label{obs:gh}
	For a fixed $\vw \in \sr$, let $g,h \in \mathcal{L}$ both upper bound $\vw$ at all integers $i \in \{0\}\cup [r]$, and assume that for some $j \in \{0\}\cup [r]$, $g(j) = h(j) = \vw(j)$. If $m_g$ and $m_h$ are the slopes of $g$ and $h$ respectively, and $m_g \geq m_h \geq 0$, then 
	\begin{itemize}
		\item For every integer $i \in [0,j]$, $\vw(i) \leq g(i) \leq h(i)$ 
		\item For every integer $i \in [j,r]$, $\vw(i) \leq h(i) \leq g(i)$.
	\end{itemize}
\end{observation}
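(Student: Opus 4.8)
The plan is to prove Observation~\ref{obs:gh} directly from the fact that two lines agreeing at a single point can only cross at that point, together with the hypothesis that both $g$ and $h$ stay above $\vw$ everywhere on $\{0\}\cup[r]$. First I would write $g(x) = m_g(x-j) + \vw(j)$ and $h(x) = m_h(x-j) + \vw(j)$, using the common value $g(j) = h(j) = \vw(j)$ to anchor both lines at the abscissa $j$. Then for any $x$, $g(x) - h(x) = (m_g - m_h)(x - j)$. Since $m_g \geq m_h$, this difference is nonpositive for $x \leq j$ and nonnegative for $x \geq j$. Hence $g(i) \leq h(i)$ for every integer $i \in [0,j]$ and $h(i) \leq g(i)$ for every integer $i \in [j,r]$.

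The remaining inequalities $\vw(i) \leq g(i)$ and $\vw(i) \leq h(i)$ are immediate from the standing hypothesis that $g$ and $h$ both upper bound $\vw$ at all integers $i \in \{0\}\cup[r]$, so no extra work is needed there. Stringing the two facts together gives, for $i \in [0,j]$, the chain $\vw(i) \leq g(i) \leq h(i)$, and for $i \in [j,r]$, the chain $\vw(i) \leq h(i) \leq g(i)$, which is exactly the statement. I would present this as a two- or three-line argument.

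There is essentially no obstacle here: the only subtlety worth a sentence is that the bound $m_h \geq 0$ is not actually used for the comparison between $g$ and $h$ (it only matters for the surrounding application, ensuring the lines are genuine members of $\mathcal{L}$), so I would either silently keep it in the statement for context or remark that the conclusion holds for any real slopes with $m_g \geq m_h$. A second minor point is that the endpoints of the two ranges overlap at $i = j$, where all three quantities equal $\vw(j)$; this is consistent and needs no special treatment. The proof is purely a statement about affine functions, so convexity or submodularity of $\vw$ plays no role.
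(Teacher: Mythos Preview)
Your proposal is correct and is exactly the natural argument; the paper itself states Observation~\ref{obs:gh} without proof, treating it as immediate from the fact that two lines tangent to $\vw$ at the same point differ by $(m_g - m_h)(x-j)$. Your write-up, including the remarks about $m_h \geq 0$ being unused for the comparison and about the overlap at $i=j$, is appropriate and slightly more explicit than what the paper provides.
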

In other words, if $g$ and $h$ are both tangent to $\vw$ at the same point $j$, but $g$ has a larger slope than $h$, then $g$ provides a better approximation for values smaller than $j$, while $h$ is the better approximation for values larger than $j$.


\subsection{Building an Optimal Linear Cover}
\label{sec:findbest}
\paragraph{The first linear piece.}
Every set $L$ solving~\eqref{eq:plobj} must include a linear piece that goes through the origin, so that $\vf(0) = 0$. We specifically choose $g^{(0)}(x) = (\vw(1) - \vw(0))x + \vw(0) = \vw(1) x$ to be the first linear piece in the set $L$ we construct. Given this first linear piece, we can then compute the largest integer $i \in [r]$ for which $g^{(0)}$ provides a $(1+\varepsilon)$-approximation:
\[
p= \max\, \{i \in [r] \;\vert\;  g^{(0)}(i) \leq (1+\varepsilon) \vw(i)\}.
\]
The integer $\ell = p+1$ therefore is the smallest integer for which we \emph{do not} have a $(1+\varepsilon)$-approximation. If $\ell \leq r$, our task is then to find the smallest number of additional linear pieces in order to cover $\{\ell, \hdots, r\}$ with $(1+\varepsilon)$-approximations. By Observation~\ref{obs:gh}, any other $g \in \mathcal{L}$ with $g(0) = 0$ and $g(1) > \vw(1)$ will be a worse approximation to $\vw$ at all integer values: $\vw(i) \leq g^{(0)}(i) < g(i)$ for all $i \in [r]$. Therefore, as long as we can find a minimum set of additional linear pieces which provides a $(1+\varepsilon)$-approximation for all $\{\ell, \hdots, r\}$, our set of functions $L$ will optimally solve objective~\eqref{eq:plobj}.



\paragraph{Iteratively finding the next linear piece.}
Consider now a generic setting in which we are given a left integer endpoint $\ell$ and we wish to find linear pieces to approximate the function $\vw$ from $\ell$ to $r$. 
We first check whether the constant function $g^{(r)}(x) = \vw(r)$ provides the desired approximation:
\begin{equation}
\label{checkr}
g^{(r)}(\ell) \leq (1+\varepsilon) \vw(\ell).
\end{equation}
If so, we augment $L$ to include $g^{(r)}$ and we are done, since this implies that $g^{(r)}$ also provides at least a $(1+\varepsilon)$-approximation at every $i \in \{\ell, \ell+1, \hdots , r\}$. If~\eqref{checkr} is not true, we must add another positive-sloped linear function to $L$ in order to get the desired approximation for all $i \in [r]$. 
We adopt a greedy approach that chooses the next line to be the optimizer of the following objective
\begin{equation}
\label{eq:local}
\begin{array}{ll}
\max_{g \in \mathcal{L}}\,\, & p'  \\
\text{subject to } & \vw(j) \leq g(j) \leq (1+\varepsilon) \vw(j) \text{ for } j = \ell, \ell + 1, \ldots, p'. \\
\end{array}
\end{equation}
In other words, solving problem~\eqref{eq:local} means finding a function that provides at least a $(1+\varepsilon)$-approximation from $\ell$ to as far towards $r$ as possible in order to cover the widest possible contiguous interval with the same approximation guarantee.
(There is always a feasible point by adding a line $g$ tangent to $\vw(\ell)$.)
The following Lemma will help us prove that this greedy scheme produces an optimal cover for $\vw$.
\begin{lemma}
	\label{lem:local}
	Let $p^*$ the solution to~\eqref{eq:local} and $g^*$ be the function that achieves it. If $\hat{L} \subset \mathcal{L}$ is an optimal cover for $\vw$ over the integer range $\{p^*+1, p^*+2, \hdots r\}$, then $\{g^*\}\cup \hat{L}$ is an optimal cover for $\{\ell, \ell + 1, \hdots r\}$.
\end{lemma}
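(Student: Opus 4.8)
The plan is a ``greedy stays ahead'' exchange argument, whose one non-routine ingredient is that in \emph{any} linear cover, a single linear piece is responsible for a \emph{contiguous} block of integers. I would first isolate this fact: if $g\in\mathcal{L}$ upper bounds $\vw$ on $\{0\}\cup[r]$, then $h(i):=(1+\varepsilon)\vw(i)-g(i)$ is a difference of a concave function (inequality~\eqref{mainconstraints}) and a linear function, hence concave, so the set of integers $i$ at which $g$ gives a $(1+\varepsilon)$-approximation (the superlevel set $\{h\ge 0\}$) is a contiguous interval $\{\ell',\ell'+1,\hdots,p'\}$. Thus every piece used in a cover ``owns'' an interval of integers, and the greedy line $g^*$ from~\eqref{eq:local} is by construction the piece whose owned interval reaches farthest to the right among all pieces whose interval contains $\ell$.

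Now the exchange step. Let $L^*$ be an optimal linear cover of $\{\ell,\hdots,r\}$ and let $k^*$ be its number of positive-sloped pieces. The lemma is invoked only when the constant line $g^{(r)}$ fails to $(1+\varepsilon)$-approximate $\vw(\ell)$ (i.e.\ \eqref{checkr} is false), and since any slope-zero line that upper bounds $\vw$ has value at least $\vw(r)$, no constant line can cover $\ell$; hence the piece $g'\in L^*$ that covers $\ell$ has positive slope, and by the contiguity fact it owns an interval $\{\ell',\hdots,p'\}$ with $\ell'\le\ell\le p'$. Then $g'$ is feasible for~\eqref{eq:local} with objective value $p'$, so optimality of $g^*$ gives $p'\le p^*$. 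Deleting $g'$, every integer $j\in\{p^*+1,\hdots,r\}\subseteq\{p'+1,\hdots,r\}$ was covered in $L^*$ by a piece other than $g'$ (since $g'$ owns nothing beyond $p'$), so $L^*\setminus\{g'\}$ is a linear cover of $\{p^*+1,\hdots,r\}$ with exactly $k^*-1$ positive-sloped pieces. Because $\hat{L}$ is an \emph{optimal} cover of that same range, $\hat{L}$ has at most $k^*-1$ positive-sloped pieces, so $\{g^*\}\cup\hat{L}$ has at most $k^*$ of them. Finally $\{g^*\}\cup\hat{L}$ is a valid cover of $\{\ell,\hdots,r\}$: $g^*$ $(1+\varepsilon)$-approximates $\vw$ on $\{\ell,\hdots,p^*\}$ and $\hat{L}$ covers $\{p^*+1,\hdots,r\}$. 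Hence it is optimal.

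I expect the contiguity claim (really just ``superlevel sets of a concave sequence are intervals'') to be the one place where care is needed; once it is in hand, the rest is bookkeeping, the two subtle points being to verify that $g^*$ genuinely has positive slope (which is exactly why the $g^{(r)}$-check must precede this lemma) and that removing $g'$ from $L^*$ uncovers none of $\{p^*+1,\hdots,r\}$. Observation~\ref{obs:gh} is not logically required for this argument, but it is the reason the greedy choice may be restricted to the finitely many ``line through two consecutive points'' or ``tangent at one point'' lines described in~\eqref{gi} and~\eqref{onept}, which is what makes~\eqref{eq:local} an optimization over a finite, easily enumerated set.
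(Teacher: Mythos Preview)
Your proposal is correct and follows the same exchange argument as the paper: pick an optimal cover of $\{\ell,\hdots,r\}$, find the piece covering $\ell$, note its rightmost covered integer is at most $p^*$, delete it, and compare sizes. You are in fact more careful than the paper on two points the paper glosses over: (i) the contiguity of the set of integers covered by a single line (needed to conclude that the piece covering $\ell$ is feasible for~\eqref{eq:local} with its rightmost covered integer as objective value), and (ii) that this piece has positive slope (needed since optimality is measured by positive-sloped pieces only).
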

\begin{proof}
	Let $\tilde{L}$ be an arbitrary optimal linear cover for $\vw$ over the range $\{\ell, \ell+1, \hdots, r\}$. This means that $|\hat{L} \cup \{g^*\}| \geq |\tilde{L}|$. We know $\tilde{L}$ must contain a function $g$ such that $g(\ell) \leq (1+\varepsilon) \vw(\ell)$. Let $p_g$ be the largest integer satisfying $g(p_g) \leq (1+\varepsilon) \vw(p_g)$. By the optimality of $p^*$ and $g^*$, we know $p^* \geq p_g$. Therefore, the set of functions $\tilde{L} - \{g\}$ must be a cover for the set $\{p_g +1, p_g +2, \hdots r \} \supseteq \{p^*+1, p^*+2, \hdots r\}$. Since $\hat{L}$ is an optimal cover for a subset of the integers covered by $\tilde{L} - \{g\}$, 
	\[|\hat{L}| \leq |\tilde{L} - \{g\} | \implies |\hat{L}| +1 \leq |\tilde{L}| \implies |\hat{L} \cup \{g^*\}| \leq |\tilde{L}|.\]
	Therefore, $|\hat{L} \cup \{g^*\}| = |\tilde{L}|$, so the result follows.
\end{proof}
We illustrate a simple procedure for solving~\eqref{eq:local} in Figure~\ref{fig:findnext}. The function $g$ solving~\eqref{eq:local} must either join two consecutive points of $\vw$ (the form given in~\eqref{gi}), or coincide at exactly one point of $\vw$ (form given in~\eqref{onept}).
%
%
We first identify the integer $j^*$ such that
\begin{align*}
	g^{(j^*)}(\ell) &\leq (1+\varepsilon)\vw(\ell) \\
	g^{(j^*+1)}(\ell) &> (1+\varepsilon)\vw(\ell).
\end{align*}
In other words, the linear piece connecting $(j^*, \vw(j^*))$ and $(j^*+1, \vw(j^*+1))$ provides the needed approximation at the left endpoint $\ell$, but $g^{(i)}$ for every $i > j^*$ does not. Therefore, the solution to~\eqref{eq:local} has a slope $m \in [M_{j^*}, M_{j^*+1})$, and passes through the point $(j^*, \vw(j^*))$. By Observation~\ref{obs:gh}, the line passing through this point with the smallest slope is guaranteed to provide the best approximation for all integers $p \geq j^*$. To minimize the slope of the line while still preserving the needed approximation at $\vw(\ell)$, we select the line passing through the points $(\ell, (1+\varepsilon) \vw(\ell))$ and $(j^*, \vw(j^*))$. This is given by
\begin{equation}
\label{eq:nextfun}
g^*(x) = \frac{\vw(j^*) - (1+\varepsilon)\vw(\ell)}{(j^* - \ell)}(x - \ell) + (1+\varepsilon)\vw(\ell).
\end{equation}
After adding this function $g^*$ to $L$, we find the largest integer $p \leq r$ such that $g^*(p) \leq (1+\varepsilon) \vw(p)$. If $p < r$, then we still need to find more linear pieces to approximate $\vw$, so we continue with another iteration. If $p = r$ exactly, then we do not need any more positive-sloped linear pieces to approximate $\vw$. However, we still add the constant function $g^{(r)}$ to $L$ before terminating. This guarantees that the function $\vf(x) = \min_{g \in L}(x)$ we return is in fact in $\mathcal{F}_r$. Furthermore, adding the constant function serves to improve the approximation, without affecting the order of the CCB function we will obtain from $\vf$ by applying Lemma~\ref{lem:pl2ccb}.
\begin{figure}[t]
	\centering
	\includegraphics[width=.6\linewidth]{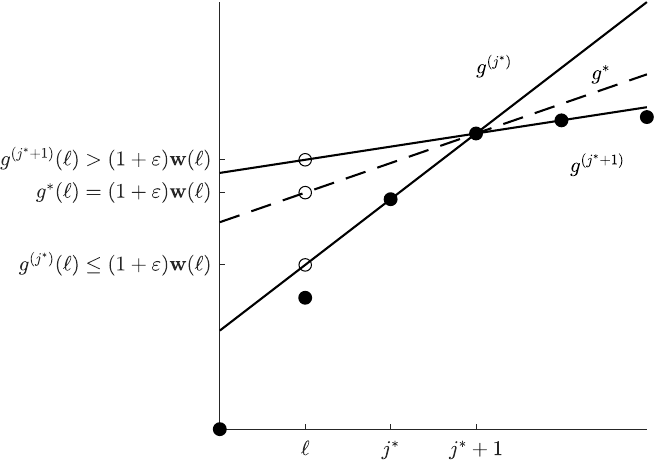}
	\caption{Given a left endpoint $\ell$ for which we do not yet have a $(1+\varepsilon)$-approximate piece, we find the next linear piece by choosing a function $g^*$ that provides the desired approximation at $\ell$, while also providing a good approximation for as large of an integer $p > \ell$ as possible. 
	}
	\label{fig:findnext}
\end{figure}

\begin{algorithm}[t]
	\caption{$\textsc{FindBest-PL-Approx}(\vw, \varepsilon)$ (solves~\eqref{eq:plobj})}
	\label{alg:findbest}
	\begin{algorithmic}
		\State \textbf{Input:} $\vw \in \sr$, $\varepsilon \geq 0$
		\State \textbf{Output:} $\vf \in \mathcal{F}_r$ optimizing~\eqref{eq:plobj}
		\State $L = \{g^{(0)}\}$, where $g^{(0)} = \vw(1)x$
		\State $p = \max\, \{i \in [r] \;\vert\; g^{(0)}(i) \leq (1+\varepsilon)\vw(i)\}$
		\State $\ell = p+1$
		\While{$\ell \leq r$}
		\State $(g^*,p) = \textsc{FindNext}(\vw, \varepsilon, \ell)$ 
		\State $\ell \leftarrow p +1$
		\State $L \leftarrow L \cup \{g^*\}$
		\If{$p = r$} 
		\State $L \leftarrow L \cup \{g^{(r)}\}$, where $g^{(r)}(x) = \vw(r)$
		\EndIf
		\EndWhile
		\State Return $\vf$ defined by $\vf(x) = \min_{g \in L} g(x)$
	\end{algorithmic}
\end{algorithm}
\begin{algorithm}[t]
	\caption{$\textsc{FindNext}(\vw, \varepsilon, \ell)$ (solves~\eqref{eq:local})}
	\label{alg:findnext}
	\begin{algorithmic}
		\State \textbf{Input:} $\vw \in \sr$, $\varepsilon \geq 0$, $\ell \in [r]$
		\State \textbf{Output:} $g \in \mathcal{L}$ optimizing~\eqref{eq:local}
		\If{$\vw(r) \leq (1+\varepsilon) \vw(\ell)$}
		\State Return $(g^{(r)},r+1)$, where $g^{(r)}(x) = \vw(r)$
		\Else
		\State $j^* = \ell$
		\While{$g^{(j^*+1)}(\ell) \leq (1+\varepsilon) \vw(\ell)$}
		\State $j^* = j^* + 1$
		\EndWhile
		\State $g^*(x) = \frac{\vw(j^*) - (1+\varepsilon)\vw(\ell)}{(j^* - \ell)}(x - \ell) + (1+\varepsilon)\vw(\ell)$
		\State $p = \max\, \{ i \in [r] \;\vert\; g^*(p) \leq (1+\varepsilon) \vw(p)$\}
		\State Return $(g^*, p)$
		\EndIf
	\end{algorithmic}
\end{algorithm}
Pseudocode for our procedure for constructing a set of function $L$ is given in Algorithm~\ref{alg:findbest}, which relies on Algorithm~\ref{alg:findnext} for solving~\eqref{eq:local}. We summarize with a theorem about the optimality of our method for solving~\eqref{eq:plobj}. 
\begin{theorem}
	Algorithm~\ref{alg:findbest} runs in $O(r)$ time and returns a function $\vf$ that optimizes~\eqref{eq:plobj}.
\end{theorem}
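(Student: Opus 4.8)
The theorem makes two claims: correctness (Algorithm~\ref{alg:findbest} returns an optimal solution to~\eqref{eq:plobj}) and running time ($O(r)$). I would prove correctness first, then bound the running time.

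\textbf{Correctness.} The plan is to argue by induction on the iterations of the \texttt{while} loop, using Lemma~\ref{lem:local} as the inductive engine. First I would establish the base case: the preamble already argues that $g^{(0)}(x) = \vw(1)x$ is an optimal choice for the first positive-sloped piece, because by Observation~\ref{obs:gh} any line through the origin with larger slope at $1$ dominates $\vw$ everywhere, hence covers a (weakly) smaller prefix of $[r]$; so after the first step we are left needing an optimal cover of $\{\ell,\dots,r\}$ where $\ell = p+1$. For the inductive step, I would verify that \textsc{FindNext}$(\vw,\varepsilon,\ell)$ as given in Algorithm~\ref{alg:findnext} actually solves the local problem~\eqref{eq:local}: the text's derivation shows that the optimal line must pass through the point $(j^*,\vw(j^*))$ where $j^*$ is the last index with $g^{(j^*)}(\ell)\le(1+\varepsilon)\vw(\ell)$, and must have slope in $[M_{j^*},M_{j^*+1})$; among all such lines, Observation~\ref{obs:gh} says the one with minimum slope is best for all integers $\ge j^*$, and the minimum slope consistent with $g(\ell)\le(1+\varepsilon)\vw(\ell)$ is attained by~\eqref{eq:nextfun}. (I would also handle the early-termination branch where $g^{(r)}$ already $(1+\varepsilon)$-approximates at $\ell$.) Granting that \textsc{FindNext} returns the optimizer $g^*$ and its reach $p^*$, Lemma~\ref{lem:local} says prepending $g^*$ to an optimal cover of $\{p^*+1,\dots,r\}$ gives an optimal cover of $\{\ell,\dots,r\}$; composing this over all iterations yields an optimal cover of $\{0\}\cup[r]$. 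Finally I would note that the returned $\vf(x)=\min_{g\in L}g(x)$ lies in $\mathcal{F}_r$: it is a lower envelope of lines in $\mathcal{L}$ (hence concave, increasing, nonnegative), passes through the origin because $g^{(0)}\in L$, and is eventually constant because $g^{(r)}\in L$ is always added before termination; and each piece touches $\vw$ at an integer by construction, so all constraints of~\eqref{eq:plobj} hold and the count $|L|$ is minimized.

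\textbf{Running time.} The plan is an amortization argument. Each call to \textsc{FindNext}$(\vw,\varepsilon,\ell)$ does two scans: one to find $j^*$ by incrementing from $\ell$ while $g^{(j^*+1)}(\ell)\le(1+\varepsilon)\vw(\ell)$, and one to find the reach $p$ by increasing $i$ while $g^*(i)\le(1+\varepsilon)\vw(i)$. The key observation is that both scans are monotone and only move rightward, and the next iteration restarts at $\ell = p+1$. For the $p$-scan this is immediate: across all iterations the indices examined partition (up to overlap of one endpoint) the range $[1,r]$, giving $O(r)$ total. For the $j^*$-scan I need the extra fact that the successive values of $j^*$ are also nondecreasing across iterations, and indeed that $j^*$ in a given iteration is at most the $p$ returned in that iteration (since $g^{(j^*)}$ agrees with $\vw$ at $j^*$ and $j^*+1$, and $g^*$ — having slope $\ge M_{j^*}$... actually slope in $[M_{j^*},M_{j^*+1})$ and passing through $(j^*,\vw(j^*))$ — lies on or above $\vw$ at least up through $j^*+1$, so $p\ge j^*$); hence the $j^*$-scan in iteration $t$ covers indices inside the interval $[\ell_t, p_t]$, and these intervals overlap only at shared endpoints, again totaling $O(r)$. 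Adding the initial scan for $p$ in Algorithm~\ref{alg:findbest} (also $O(r)$) and $O(1)$ bookkeeping per iteration with $O(r)$ iterations at most, the total is $O(r)$.

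\textbf{Main obstacle.} The correctness of \textsc{FindNext} and the composition via Lemma~\ref{lem:local} are essentially spelled out in the preceding text, so the genuinely delicate part is the running-time analysis — specifically, making rigorous the claim that the $j^*$-search pointers never backtrack and that the work done in iteration $t$ is confined to the interval $[\ell_t,p_t]$, so that the intervals across iterations are (nearly) disjoint. This requires carefully relating $j^*$, $p^*$, and the next left endpoint $\ell = p^*+1$, and checking that recomputing $g^{(j^*+1)}(\ell)$ for the new $\ell$ does not force a re-scan from scratch; I would phrase it as: maintain $j^*$ as a global pointer that is never decremented, and argue from the monotonicity of $\ell\mapsto g^{(i)}(\ell)$ relative to $(1+\varepsilon)\vw(\ell)$ (which follows from concavity of $\vw$, Definition~\ref{def:scb}) that the old value of $j^*$ remains a valid starting point for the new search.
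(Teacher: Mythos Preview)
Your proposal is correct and follows essentially the same approach as the paper: correctness by induction on the iterations of the \texttt{while} loop via Lemma~\ref{lem:local}, and the $O(r)$ bound via an amortization argument showing each integer in $[r]$ is visited a constant number of times. The paper's proof is considerably terser (it simply asserts ``we visit each integer $i \in [r]$ once''), whereas you take more care to justify that the $j^*$-scan and the $p$-scan together stay within the interval $[\ell_t,p_t]$ in iteration $t$ and that these intervals are disjoint across iterations; this extra care is warranted but does not change the underlying strategy.
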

\begin{proof}
	The optimality of the algorithm follows by inductively applying Lemma~\ref{lem:local} at each iteration of the algorithm. For the runtime guarantee, note first of all that we can compute and store all slopes and intercepts for linear pieces $g^{(i)}$ (as given in~\eqref{gi}) in $O(r)$ time and space. As the algorithm progresses, we visit each integer $i \in [r]$ once, either to perform a comparison of the form $g^{(i)}(\ell) \leq (1+ \varepsilon) \vw(\ell)$ for some left endpoint $\ell$, or to check whether $g^*(i) \leq (1+\varepsilon) \vw(i)$ for some linear piece $g^*$ we added to our linear cover $L$. Each such $g^*$ can be computed in constant time, and as a loose bound we know we compute at most $O(r)$ such linear pieces for any $\varepsilon$.
\end{proof}

By combining Algorithm~\ref{alg:findnext} and Lemma~\ref{lem:pl2ccb}, we are able to efficiently solve \spagap.
\begin{theorem}
	Let $\vf$ be the solution to~\eqref{eq:plobj}, and $\hat{\vw}$ be the CCB function obtained from Lemma~\ref{lem:pl2ccb} based on $\vf$. Then $\hat{\vw}$ optimally solves the sparse gadget approximation problem~\eqref{eq:mainobj}.
\end{theorem}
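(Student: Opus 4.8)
The plan is to establish a tight two-way correspondence between feasible solutions of the piecewise linear approximation problem~\eqref{eq:plobj} and feasible solutions of \spagap~\eqref{eq:mainobj}, with the objective values matching up to the additive constant $1$ coming from the mandatory slope-zero piece. Write $\kappa^*$ for the optimal value of~\eqref{eq:mainobj}. Recall that every feasible $L$ for~\eqref{eq:plobj} contains exactly one slope-zero line, so $|L| = J+1$ where $J$ is the number of positive-sloped pieces; let $J^*+1$ denote the optimal value of~\eqref{eq:plobj}. The theorem follows once we show $\kappa^* = J^*$ together with the facts that the $\hat\vw$ in the statement has order exactly $J^*$ and is feasible for~\eqref{eq:mainobj}.

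First I would prove $\kappa^* \le J^*$ using the very $\hat\vw$ named in the statement. The function $\vf$ returned by Algorithm~\ref{alg:findbest} lies in $\mathcal{F}_r$ with $J^*+1$ linear pieces, so Lemma~\ref{lem:pl2ccb} produces an $r$-CCB function $\hat\vw$ of order $J^*$ whose continuous extension is $\vf$. Since the continuous extension of a CCB function agrees with it on $\{0\}\cup[r]$ by definition, and $\vf$ satisfies $\vw(i)\le\vf(i)\le(1+\varepsilon)\vw(i)$ for $i\in[r]$ and $\vf(0)=0=\vw(0)$, the function $\hat\vw\in\mathcal{C}_r^{J^*}$ is feasible for~\eqref{eq:mainobj}; hence $\kappa^* \le J^*$.

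The reverse inequality $J^* \le \kappa^*$ requires the real work. Take an optimal $\hat\vw\in\mathcal{C}_r^{\kappa^*}$ for~\eqref{eq:mainobj} and let $\hat\vf$ be its continuous extension; by Lemma~\ref{lem:ccb2pl}, $\hat\vf\in\mathcal{F}_r$ has exactly $\kappa^*$ positive-sloped pieces and one slope-zero piece, and it agrees with $\hat\vw$ on $\{0\}\cup[r]$, so it meets the approximation constraints of~\eqref{eq:plobj} but not necessarily the final constraint that each linear piece be tangent to $\vw$ at an integer. The key step is to normalize $\hat\vf$ into a feasible PL function $\vf'$ with no more pieces: replace the slope-zero piece by the constant $\vw(r)$ (still an upper bound on $\vw$ since $\vw$ is nondecreasing, and still within $(1+\varepsilon)\vw(r)$ since $\varepsilon\ge 0$), and lower each positive-sloped piece $g$, keeping its slope, by the largest shift $t=\min_{i\in\{0\}\cup[r]}\bigl(g(i)-\vw(i)\bigr)\ge 0$ for which it still upper-bounds $\vw$ on $\{0\}\cup[r]$; at that shift $g$ becomes tangent to $\vw$ at some integer. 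The piece through the origin cannot be lowered at all, since that would push it below $\vw(0)=0$, so it already passes through $(0,\vw(0))$. Then one checks that the lower envelope $\vf'$ of the lowered lines still satisfies $\vw\le\vf'\le\hat\vf\le(1+\varepsilon)\vw$ on $\{0\}\cup[r]$ (each line is $\ge\vw$ there, and lowering lines only decreases the envelope), has $\vf'(0)=0$, is increasing, concave, and constant equal to $\vw(r)$ for $x\ge r$ — hence $\vf'\in\mathcal{F}_r$ — and, after discarding lines that no longer appear in the envelope, uses at most $\kappa^*+1$ pieces, each tangent to $\vw$ at an integer. Thus $\vf'$ is feasible for~\eqref{eq:plobj} with at most $\kappa^*+1$ pieces, so $J^*+1\le\kappa^*+1$, i.e. $J^*\le\kappa^*$.

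Combining the two inequalities yields $\kappa^*=J^*$, and since the $\hat\vw$ in the statement is feasible for~\eqref{eq:mainobj} with order $J^*=\kappa^*$, it is optimal. I expect the normalization step in the reverse direction — verifying that simultaneously lowering every piece to tangency keeps the envelope inside the sandwich $\vw\le\vf'\le(1+\varepsilon)\vw$ and inside $\mathcal{F}_r$ while not increasing the piece count — to be the main obstacle; the rest is bookkeeping on top of Lemmas~\ref{lem:ccb2pl} and~\ref{lem:pl2ccb} and the optimality of Algorithm~\ref{alg:findbest} established earlier.
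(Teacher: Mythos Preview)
Your proposal is correct and follows essentially the same route as the paper: feasibility of $\hat\vw$ via Lemma~\ref{lem:pl2ccb}, then the reverse direction by taking a hypothetical better CCB function, passing to its continuous extension via Lemma~\ref{lem:ccb2pl}, and contradicting the optimality of $\vf$ for~\eqref{eq:plobj}. The only difference is that you spell out the normalization step needed to meet the tangency constraint in~\eqref{eq:plobj}, whereas the paper dispatches this silently by appealing to the ``without loss of generality'' remark made when that constraint was introduced.
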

\begin{proof}
	Since $\vf$ and $\hat{\vw}$ coincide at integer values, and $\vf$ approximates $\vw$ at integer values, we know $\vw(i)  \leq  \hat{\vw}(i) \leq (1+ \varepsilon) \vw(i)$ for $i \in [r]$. Thus, $\hat{\vw}$ is feasible for objective~\eqref{eq:mainobj}. If $\kappa^*$ is the number of positive-sloped linear pieces of $\vf$, then the order of $\hat{\vw}$ is $\kappa^*$ by Lemma~\ref{lem:pl2ccb}, and this must be optimal for~\eqref{eq:mainobj}. If it were not optimal, this would imply that there exists some upper bounding CCB function $\vw'$ of order $\kappa' < \kappa^*$ that approximates $\vw$ to within $1+\varepsilon$. But by Lemma~\ref{lem:ccb2pl}, this would imply that the continuous extension of $\vw'$ is some $\vf' \in \mathcal{F}_r$ with exactly $\kappa'$ positive-sloped linear pieces that is feasible for objective~\eqref{eq:plobj}, contradicting the optimality of $\vf$.
\end{proof}


\section{Bounding the Size of the Optimal Reduction}
\label{sec:bounds}
In our last section we showed an efficient strategy for finding the minimum number of linear pieces needed to approximate an SCB integer function. We now consider bounds on the number of needed linear pieces in different cases, and highlight implications for sparsifying hyperedges with SCB splitting functions. In the worst case, we show that we need $O(\log k / \varepsilon)$ gadgets, where $k$ is the size of the hyperedge. Moreover, this is nearly tight for the square root splitting function. 
Finally, we show that we only need $O(\varepsilon^{-1/2} \log \log \frac{1}{\varepsilon})$ gadgets to approximate the clique splitting function. This result is useful for sparsifying co-occurrence graphs and clique expansions of hypergraphs.

\subsection{The $O(\log k/\varepsilon)$ Upper Bound}
We begin by showing that a logarithmic number of CB-gadgets is sufficient to approximate any SCB splitting function.

\begin{theorem}
	\label{thm:logk}
	Let $\varepsilon \geq 0$ and $\vw_e$ be an SCB splitting function on a $k$-node hyperedge. There exists a set of $O(\log_{1+\varepsilon} k)$ CB-gadgets, which can be constructed in $O(k \log_{1+\varepsilon} k)$ time, whose splitting function $\hat{\vw}_e$ satisfies $\vw_e(A) \leq \hat{\vw}_e(A) \leq (1+\varepsilon) \vw_e(A)$ for all $A \subseteq e$.
\end{theorem}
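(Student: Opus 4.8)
The plan is to recast Theorem~\ref{thm:logk} as a statement about covering the interval $\{1,2,\dots,k\}$ with $(1+\varepsilon)$-approximate linear pieces, and then show that a geometric sequence of ``anchor points'' suffices. By the machinery of Section~\ref{sec:reductions} (Lemma~\ref{lem:pl2ccb} and the \spagap/PL-approximation equivalence), it is enough to exhibit a piecewise linear function $\vf \in \mathcal{F}_k$ with $O(\log_{1+\varepsilon} k)$ positive-sloped pieces satisfying $\vw_e(i) \le \vf(i) \le (1+\varepsilon)\vw_e(i)$ for every integer $i \in [k]$; translating back through Lemma~\ref{lem:pl2ccb} gives the same number of CB-gadgets, and the $O(k\log_{1+\varepsilon}k)$ running time comes either from Algorithm~\ref{alg:findbest} (which is $O(k)$ and optimal, hence produces at most this many pieces) or directly from the explicit construction below.

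First I would set up the explicit cover. Choose integers $1 = i_0 < i_1 < \dots < i_T = k$ where $i_{t+1}$ is the largest integer with $i_{t+1} \le (1+\varepsilon)\, i_t$ (so $T = O(\log_{1+\varepsilon} k)$), with a small caveat when $\varepsilon$ is tiny so that consecutive integers differ by less than a factor $1+\varepsilon$ — in that regime one simply takes $i_{t+1}=i_t+1$, which is still fine since then $k$ itself is $O(\log_{1+\varepsilon}k)$-bounded in the relevant sense, or one invokes the $\min\{\log k/\varepsilon,k\}$ bound and uses all $k$ gadgets. For each consecutive pair, let $g_t$ be the secant line through $(i_t,\vw_e(i_t))$ and $(i_{t+1},\vw_e(i_{t+1}))$, i.e. a line of the form~\eqref{gi} (rescaled to the anchor points $i_t,i_{t+1}$ rather than consecutive integers). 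Because $\vw_e$ is concave (constraint~\eqref{mainconstraints}), each secant $g_t$ lies on or below $\vw_e$ on $[i_t,i_{t+1}]$ and on or above $\vw_e$ outside that interval; hence $\vf = \min_t g_t$ (together with the constant piece $g^{(r)}(x)=\vw_e(k)$ required for $\vf\in\mathcal{F}_k$) upper-bounds $\vw_e$ at every integer.

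The heart of the argument — and the step I expect to be the main obstacle — is the lower approximation bound: showing that on the ``gap'' interval $[i_t, i_{t+1}]$ the secant $g_t$ never drops below $\frac{1}{1+\varepsilon}\vw_e$ at integer points, equivalently $g_t(j) \le (1+\varepsilon)\vw_e(j)$... wait, we need $\vw_e(j)\le\vf(j)\le(1+\varepsilon)\vw_e(j)$, so the binding direction is $\vf(j)\le(1+\varepsilon)\vw_e(j)$, and since on $[i_t,i_{t+1}]$ we have $\vf(j)\le g_{t-1}(j)$ or $g_{t+1}(j)$ as well, it suffices that \emph{some} anchored line is within $(1+\varepsilon)$. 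Concretely, for $j$ in $[i_t,i_{t+1}]$ I would use monotonicity~\eqref{monotones} and the geometric spacing: the line through the origin and $(i_t,\vw_e(i_t))$ — or more carefully the secant $g_{t-1}$ extended, or the line through $(0,0)$ and the first anchor — evaluated at $j\le i_{t+1}\le(1+\varepsilon)i_t$ gives a value at most $(1+\varepsilon)$ times a lower bound for $\vw_e(j)$. The clean way: concavity plus $\vw_e(0)=0$ implies $\vw_e$ is subadditive and $\vw_e(\alpha x)\ge \alpha\,\vw_e(x)$ for $\alpha\in[0,1]$ along the integers via the standard concave-through-origin inequality, so $\vw_e(i_t) \ge \frac{i_t}{j}\vw_e(j) \ge \frac{1}{1+\varepsilon}\vw_e(j)$ for $i_t\le j\le i_{t+1}$; combined with the secant $g_t$ satisfying $g_t(j)\le \vw_e(i_{t+1})$... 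I need instead the upper side, so I would argue $\vf(j)\le \vw_e(i_{t+1}) \le$ — no. The correct pairing is: take the line $h_t$ through $(i_t,\vw_e(i_t))$ with slope $0$ locally is too weak; rather take the secant $g_{t}$ but bound \emph{its} value at interior integers $j$ by interpolation, $g_t(j) \le \max\{\vw_e(i_t),\vw_e(i_{t+1})\} = \vw_e(i_{t+1})$, and separately $\vw_e(j)\ge \frac{j}{i_{t+1}}\vw_e(i_{t+1}) \ge \frac{i_t}{i_{t+1}}\vw_e(i_{t+1}) \ge \frac{1}{1+\varepsilon}\vw_e(i_{t+1})$; chaining, $\vf(j)\le g_t(j)\le \vw_e(i_{t+1}) \le (1+\varepsilon)\vw_e(j)$. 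That is the key inequality, and it uses exactly (i) concavity through the origin for the $\frac{i_t}{i_{t+1}}$ lower bound on $\vw_e(j)/\vw_e(i_{t+1})$, and (ii) the geometric choice $i_{t+1}\le(1+\varepsilon)i_t$. I would then handle the initial piece (through $(0,0)$ and $(i_1,\vw_e(i_1))$, or just $g^{(0)}=\vw_e(1)x$) and the terminal constant piece separately, note there are $T+1 = O(\log_{1+\varepsilon}k)$ pieces total, invoke Lemma~\ref{lem:pl2ccb} to convert to that many CB-gadgets, and observe each gadget's parameters $(a,b)$ are computable in $O(k)$ (or $O(1)$) time, giving the stated running time. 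The one subtlety to flag in the write-up is the degenerate small-$\varepsilon$ regime noted above and the case $\varepsilon=0$, where ``$\log_{1+\varepsilon}k$'' should be read as $k$ and the statement reduces to the exact reduction already known.
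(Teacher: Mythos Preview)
Your overall strategy---a geometric sequence of anchor points and the concavity-through-origin inequality $\vw_e(j)\ge \tfrac{j}{i_{t+1}}\vw_e(i_{t+1})$---is exactly the right idea, and matches the paper's argument. But there is a genuine gap in the construction: your choice of \emph{secant} lines through non-consecutive anchors $(i_t,\vw_e(i_t))$ and $(i_{t+1},\vw_e(i_{t+1}))$ gives the wrong inequality direction for the lower bound $\vw_e\le\vf$. For a concave function, a chord lies \emph{below} the graph on its interval, so for any integer $j\in[i_t,i_{t+1}]$ you have $g_t(j)\le\vw_e(j)$, and hence $\vf(j)=\min_s g_s(j)\le g_t(j)\le\vw_e(j)$. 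Your claim that ``$\vf=\min_t g_t$ upper-bounds $\vw_e$ at every integer'' is therefore false as stated; the chain you write at the end establishes $\vf(j)\le(1+\varepsilon)\vw_e(j)$ but not $\vw_e(j)\le\vf(j)$.

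The paper avoids this by using \emph{supporting} lines rather than chords: at each anchor $y$ it takes the line $g^{(y)}$ through the two consecutive integer points $(y,\vw(y))$ and $(y+1,\vw(y+1))$, which by concavity lies on or above $\vw$ everywhere, so $\min_y g^{(y)}\ge\vw$ is automatic. The approximation side then comes from the one-line estimate
\[
g^{(y)}(z)=M_y z+B_y\le \tfrac{z}{y}(M_y y+B_y)=\tfrac{z}{y}\vw(y)\le\tfrac{z}{y}\vw(z),
\]
which is essentially the same inequality you isolate but applied to the supporting line. Swapping your secants for these supporting lines (or simply scaling your lower-bounding secant construction by $(1+\varepsilon)$ and rechecking both inequalities) would close the gap. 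One minor point: since SCB splitting functions are symmetric, the paper works over $r=\lfloor k/2\rfloor$ rather than $k$; this does not affect the asymptotics but is worth matching.
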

\begin{proof}
	Let $r = \floor*{k/2}$, and let $\vw \in \mathcal{S}_r$ be the SCB integer function corresponding to $\vw_e$, i.e., $\vw(i) = \vw_e(A)$ for $A \subseteq e$ such that $|A| \in \{ i, k - i\}$. If we join all points of the form $(i, \vw(i))$ for $i \in [r]$ by a line, this results in a piecewise linear function $\vf \in \mathcal{F}_r$ that is concave and increasing on the interval $[0, r]$. We first show that there exists a set of $O(\log_{1+\varepsilon} r)$ linear pieces that approximates $\vf$ on the entire interval $[1,r]$ to within a factor $(1+\varepsilon)$. Our argument follows similar previous results for approximating a concave function with a logarithmic number of linear pieces~\cite{magnanti2012separable,gan2020graph}. 
	
	For any value $y \in [1,r]$, not necessarily an integer, $\vf(y)$ lies on a linear piece of $\vf$ which we will denote by $g^{(y)}(x) = M_y \cdot x + B_y$, where $M_y \geq 0$ is the slope and $B_y \geq 0$ is the intercept. When $y = i$ is an integer, it may be the breakpoint between two distinct linear pieces, in which case we use the rightmost line so that $g^{(y)} = g^{(i)}$ as in~\eqref{gi}, so $g^{(i)}(x) = M_i \cdot x+ B_i$ where $M_i = \vw(i+1) - \vw(i)$ and $B_i = \vw(i) - M_i \cdot i$. For any $z \in (y, r)$, the line $g^{(y)}$ provides a $z/y$ approximation to $\vf(z) = g^{(z)}(z)$, since
	\begin{align*}
	g^{(y)}(z) = M_y \cdot z + B_y \leq \frac{z}{y}( M_y \cdot y + B_y) = \frac{z}{y}\vf(y) \leq \frac{z}{y}\vf(z).
	\end{align*}
	Equivalently, the line $g^{(y)}$ provides a $(1+\varepsilon)$-approximation for every $z \in [y, (1+\varepsilon)y]$. Thus, it takes $J$ linear pieces to cover the set of intervals $[1, (1+\varepsilon)], [(1+\varepsilon), (1+\varepsilon)^2], \hdots, [(1+\varepsilon)^{J-1}, (1+\varepsilon)^{J}]$ for a positive integer $J$, and overall at most $1 +\lceil \log_{1+\varepsilon} r \rceil$ linear pieces to cover all of $[0,r]$.
	
	Since Algorithm~\ref{alg:findbest} finds the \emph{smallest} set of linear pieces to $(1+\varepsilon)$-cover the splitting penalties, this smallest set must also have at most $O(\log_{1+\varepsilon} r)$ linear pieces. Given this piecewise linear approximation, we can use Lemma~\ref{lem:pl2ccb} to extract a CCB function $\hat{\vw}$ of order $J = O(\log_{1+\varepsilon} r)$ satisfying $\vw(i) \leq \hat{\vw}(i) \leq (1+\varepsilon) \vw(i)$ for $i \in \{0\}\cup [r]$. This $\hat{\vw}$ in turn corresponds to a set of $J$ CB-gadgets that $(1+\varepsilon)$-approximates the splitting function $\vw_e$. Computing edge weights for the CB-gadgets using Algorithm~\ref{alg:findbest} and Lemma~\ref{lem:pl2ccb} takes only $O(r)$ time, so the total runtime for constructing the combined gadgets is equal to the number of individual edges that must be placed, which is $O(k \log_{1+\varepsilon} k)$. 
\end{proof}
Theorem~\ref{thm:aug} on augmented sparsifiers follows as a corollary of Theorem~\ref{thm:logk}. 
Given a hypergraph $\mathcal{H} = (V,\mathcal{E})$ where each hyperedge has an SCB splitting function, we can use Theorem~\ref{thm:logk} to expand each  $e \in \mathcal{E}$ into a gadget that has $O(\log_{1+\varepsilon} |e|)$ auxiliary nodes and $O(|e| \log_{1+\varepsilon} |e|)$ edges. Since $\log_{1+\varepsilon} n$ behaves as $\frac{1}{\varepsilon} \log n$ as $\varepsilon \rightarrow 0$, Theorem~\ref{thm:aug} follows. 

In Appendix~\ref{app:general}, we show that using a slightly different reduction, we can prove that Theorem~\ref{thm:logk} holds even when we do not require splitting functions to be symmetric or satisfy $\vw_e(\emptyset) = \vw_e(e) = 0$. In Section~\ref{sec:runtime} we use this fact to develop approximation algorithms for cardinality-based decomposable submodular function minimization.

\subsection{Near Tightness on the Square Root Function}
Next we show that our upper bound is nearly tight for the square root $r$-SCB integer function,
\begin{equation}
\label{eq:sqrt}
\vw(i) = \sqrt{i}  \text{ for $i \in \{0\}\cup [r]$}.
\end{equation}
For this result, we rely on a result previously shown by Magnanti and Stratila~\cite{magnanti2012separable} on the number of linear pieces needed to approximate the square root function over a {continuous} interval. 
\begin{lemma}
	\label{thm:magnanti}
(Lemma 3 in~\cite{magnanti2012separable}) Let $\varepsilon > 0$ and $\phi(x) = \sqrt{x}$. Let $\psi$ be a piecewise linear function whose linear pieces are all tangent lines to $\phi$, satisfying
$\psi(x) \leq (1+\varepsilon) \phi(x)$ for all $x \in [l, u]$ for $0 < l < u$. Then $\psi$ contains at least $\lceil \log_{\gamma(\varepsilon) }\frac{u}{l} \rceil$ linear pieces, where $\gamma(\varepsilon) = (1+ 2\varepsilon(2+\varepsilon) + 2(1+\varepsilon)\sqrt{\varepsilon(2+\varepsilon)})^2$. 
There exists a piecewise linear function $\psi^*$ of this form with exactly $\lceil \log_{\gamma(\varepsilon) }\frac{u}{l} \rceil$ linear pieces.\footnote{This additional statement is not included explicitly in the statement of Lemma 3 in~\cite{magnanti2012separable}, but it follows directly from the proof of the lemma, which shows how to construct such an optimal function $\psi^*$.} As $\varepsilon \rightarrow 0$, this values behaves as $\frac{1}{\sqrt{32\varepsilon}} \log \frac{u}{l}$. 
\end{lemma}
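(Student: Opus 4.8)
The plan is to reduce the whole statement to a single elementary computation: for $\phi(x)=\sqrt x$, over how wide a \emph{multiplicative} interval can one tangent line stay within a factor $(1+\varepsilon)$ of $\phi$? Everything else is bookkeeping on top of this.

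\textbf{Step 1 (the reach of one tangent line).} The tangent to $\phi$ at $a>0$ is $g_a(x)=\tfrac{x}{2\sqrt a}+\tfrac{\sqrt a}{2}$, and by concavity $g_a(x)\ge\sqrt x$ for all $x$. Substituting $t=\sqrt{x/a}$ gives $g_a(x)/\sqrt x=\tfrac12(t+1/t)$, so the condition $g_a(x)\le(1+\varepsilon)\sqrt x$ becomes $t^2-2(1+\varepsilon)t+1\le 0$, i.e.\ $t\in[t_-,t_+]$ with $t_\pm=(1+\varepsilon)\pm\sqrt{\varepsilon(2+\varepsilon)}$ and $t_-t_+=1$. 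Translating back, the tangent at $a$ is a $(1+\varepsilon)$-approximation of $\phi$ precisely on $[a\,t_-^2,\ a\,t_+^2]$, whose right-to-left ratio is $t_+^2/t_-^2=t_+^4$; expanding $t_+^2$ shows this ratio equals exactly $\gamma(\varepsilon)$, independently of $a$.

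\textbf{Step 2 (lower bound).} Suppose $\psi$ has $J$ tangent pieces; order the tangent points $a_1<\dots<a_J$ (because the slope $1/(2\sqrt a)$ is decreasing, the lower envelope uses them left to right), and let $l=x_0<x_1<\dots<x_J=u$ be the resulting partition of $[l,u]$, with piece $j$ used on $[x_{j-1},x_j]$. On that subinterval we must have $g_{a_j}\le(1+\varepsilon)\phi$, so by Step 1 $[x_{j-1},x_j]\subseteq[a_j t_-^2,a_j t_+^2]$, hence $x_j/x_{j-1}\le\gamma(\varepsilon)$. Multiplying over $j$ telescopes to $u/l\le\gamma(\varepsilon)^J$, giving $J\ge\log_{\gamma(\varepsilon)}(u/l)$ and therefore $J\ge\lceil\log_{\gamma(\varepsilon)}(u/l)\rceil$. (If some pieces are redundant on the envelope, the bound only improves.)

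\textbf{Step 3 (matching construction) and Step 4 (asymptotics).} Let $J^*=\lceil\log_{\gamma(\varepsilon)}(u/l)\rceil$ and place tangent points greedily so each line reaches as far right as possible: take $a_j=l\,\gamma(\varepsilon)^{j-1}t_+^2$, so using $t_-t_+=1$ and $t_+^4=\gamma(\varepsilon)$ the reach of line $j$ is exactly $[\,l\gamma(\varepsilon)^{j-1},\ l\gamma(\varepsilon)^{j}\,]$; after $J^*$ lines the union of reaches is $[l,\,l\gamma(\varepsilon)^{J^*}]\supseteq[l,u]$. Let $\psi^*$ be the lower envelope of these $J^*$ tangent lines: it is piecewise linear, and at any point of $[l,u]$ it is at most the single line whose native block contains that point, which satisfies the $(1+\varepsilon)$ bound there, so $\psi^*\le(1+\varepsilon)\phi$ on all of $[l,u]$. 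This matches Step 2. For the asymptotics, $\sqrt{\varepsilon(2+\varepsilon)}=\sqrt{2\varepsilon}\,(1+O(\varepsilon))$, so $\gamma(\varepsilon)=t_+^4=1+4\sqrt{2\varepsilon}+O(\varepsilon)$, whence $\log\gamma(\varepsilon)=\sqrt{32\varepsilon}\,(1+o(1))$ and $\log_{\gamma(\varepsilon)}(u/l)=\log(u/l)/\log\gamma(\varepsilon)\sim\tfrac{1}{\sqrt{32\varepsilon}}\log\tfrac{u}{l}$.

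\textbf{Anticipated obstacle.} Steps 1 and 2 are pure algebra; the only delicate point is Step 3 — ensuring the lower envelope of the greedily chosen tangent lines is \emph{globally} within $(1+\varepsilon)\phi$ on $[l,u]$ rather than only on each line's native block. This is immediate once one observes the envelope is pointwise dominated by the line whose block contains the point, but it must be stated carefully; a secondary nuisance is the ceiling (the last line overshoots $u$, which is harmless) and checking that the telescoping in Step 2 is an equality for $x_J/x_0$ even though consecutive pieces meet at breakpoints rather than overlapping.
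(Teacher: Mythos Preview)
Your proof is correct, and there is nothing to compare it against: the paper does not supply its own proof of this lemma but simply cites it as Lemma~3 of Magnanti--Stratila~\cite{magnanti2012separable} and uses it as a black box (the footnote even remarks that the existence of the optimal $\psi^*$ is read off from \emph{their} proof, not one given here). Your argument --- computing that a single tangent at $a$ stays within $(1+\varepsilon)\phi$ exactly on $[a\,t_-^2,a\,t_+^2]$ with multiplicative width $t_+^4=\gamma(\varepsilon)$, then telescoping for the lower bound and greedily chaining such intervals for the matching construction --- is precisely the natural route and almost certainly the one in the cited source. The algebra in Step~1 checks out (in particular $t_+^2=1+2\varepsilon(2+\varepsilon)+2(1+\varepsilon)\sqrt{\varepsilon(2+\varepsilon)}$, so $\gamma(\varepsilon)=t_+^4$), and the asymptotic $\log\gamma(\varepsilon)\sim\sqrt{32\varepsilon}$ is correct. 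The ``anticipated obstacle'' you flag in Step~3 is a non-issue for exactly the reason you state: the lower envelope at any $x$ is dominated by whichever constituent line covers $x$ in its native block.
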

Lemma~\ref{thm:magnanti} is concerned with approximating the square root function for \emph{all} values on a \emph{continuous} interval. Therefore, it does not immediately imply any bounds on approximating a discrete set of splitting penalties. In fact, we know that when lower bounding the number of linear pieces needed to approximate any $\vw \in \mathcal{S}_r$, there is no lower bound of the form $q(\varepsilon) f(r)$ that holds for \emph{all} $\varepsilon > 0$, if $q$ is a function such that $q(\varepsilon) \rightarrow \infty$ as $\varepsilon \rightarrow 0$. This is simply because we can approximate $\vw$ by piecewise linear interpolation, leading to an upper bound of $O(r)$ linear pieces even when $\varepsilon  = 0$. Therefore, the best we can expect is a lower bound that holds for $\varepsilon$ values that may still go to zero as $r \rightarrow \infty$, but are bounded in such a way that we do not contradict the $O(r)$ upper bound that holds for all SCB integer functions. We prove such a result for the square root splitting function, using Lemma~\ref{thm:magnanti} as a black box. When $\varepsilon$ falls below the bound we assume in the following theorem statement, forming $O(r)$ linear pieces will be nearly optimal.
\begin{theorem}
	\label{thm:tight}
	Let $\varepsilon > 0$ and $\vw(i) = \sqrt{i}$ be the square root $r$-SCB integer function. If $\varepsilon \geq r^{-\delta}$ for some constant $\delta \in (0,2)$, then any piecewise linear function providing a $(1+\varepsilon)$-approximation for $\vw$ contains $\Omega( \log_{\gamma(\varepsilon)} r)$ linear pieces, which behaves as $\Omega (\varepsilon^{-1/2} \log r)$ as $\varepsilon \rightarrow 0$.
\end{theorem}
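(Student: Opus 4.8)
The plan is to reduce this discrete lower bound to the continuous lower bound of Magnanti and Stratila (Lemma~\ref{thm:magnanti}), which applies to functions built from tangent lines to $\sqrt{\cdot}$ on a full interval. Suppose $\vf$ is a piecewise linear function with $J$ positive-sloped linear pieces satisfying $\sqrt{i}\le \vf(i)\le(1+\varepsilon)\sqrt{i}$ for all integers $i\in\{0\}\cup[r]$; by the normalization in~\eqref{eq:plobj} we may assume $\vf\in\mathcal{F}_r$ and that each piece coincides with $\sqrt{\cdot}$ at some integer. Set $l=\lceil 1/\varepsilon\rceil$; the hypothesis $\varepsilon\ge r^{-\delta}$ guarantees $l\le r$ and, crucially, $r/l\ge r^{\Omega(1)}$, so $\log(r/l)=\Omega(\log r)$. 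The first step is to promote the discrete approximation to a genuine continuous one on $[l,r]$: using monotonicity of $\vf$ and the elementary estimate $\sqrt{i+1}/\sqrt{i}=\sqrt{1+1/i}\le 1+\tfrac{1}{2i}\le 1+\varepsilon$ for $i\ge l$, one gets $\vf(x)\le\vf(\lceil x\rceil)\le(1+\varepsilon)\sqrt{\lceil x\rceil}\le(1+\varepsilon)^2\sqrt{x}$ and $\vf(x)\ge\vf(\lfloor x\rfloor)\ge\sqrt{\lfloor x\rfloor}\ge(1-\varepsilon)\sqrt{x}$ for all $x\in[l,r]$. After rescaling by $(1-\varepsilon)^{-1}$, this exhibits $\vf$ as a continuous $(1+\varepsilon')$-approximation to $\sqrt{\cdot}$ on $[l,r]$ with $\varepsilon'=O(\varepsilon)$.

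The second step is the heart of the argument: bounding how far a single linear piece can reach. I would show that a line $g(x)=mx+d$ with $m,d\ge 0$ which sandwiches $\sqrt{x}\le g(x)\le(1+\varepsilon')\sqrt{x}$ on an interval $[\alpha,\beta]$ must satisfy $\beta/\alpha\le\gamma(\varepsilon')$. Writing $u=\sqrt{x}$, the lower inequality $g\ge\sqrt{x}$ everywhere is equivalent to $4md\ge 1$ (the minimum of $mx+d-\sqrt{x}$ is $d-\tfrac{1}{4m}$), i.e. $g$ is at least as tangent-like as a tangent line; and the upper inequality $mu^2-(1+\varepsilon')u+d\le 0$ holds on an interval of $u$ whose endpoint ratio is $\tfrac{(1+\varepsilon')+\sqrt{(1+\varepsilon')^2-4md}}{(1+\varepsilon')-\sqrt{(1+\varepsilon')^2-4md}}$, maximized at $4md=1$, where a short computation shows it equals $\gamma(\varepsilon')^{1/2}$; squaring gives the ratio in $x$. (Equivalently one replaces each piece of $\vf$ by a tangent line of equal slope and quotes Lemma~\ref{thm:magnanti} verbatim.) Since the $J$ active pieces of the concave function $\vf$ partition $[l,r]$ and each can $(1+\varepsilon')$-cover only a sub-interval of ratio $\le\gamma(\varepsilon')$, we conclude $J\ge\lceil\log_{\gamma(\varepsilon')}(r/l)\rceil$.

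To finish, I would note that $\gamma(\varepsilon)-1=\Theta(\sqrt{\varepsilon})$ as $\varepsilon\to 0$, so $\log\gamma(\varepsilon')=\Theta(\log\gamma(\varepsilon))$ whenever $\varepsilon'=\Theta(\varepsilon)$; hence $\log_{\gamma(\varepsilon')}(r/l)=\Theta(\log_{\gamma(\varepsilon)}(r/l))=\Omega(\log_{\gamma(\varepsilon)}r)$ by the choice of $l$, and this quantity behaves like $\varepsilon^{-1/2}\log r$ as $\varepsilon\to 0$, which is the claim.

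The main obstacle is precisely the interface between the discrete and continuous settings in the first two steps. An arbitrary linear piece of $\vf$ is only pinned above $\sqrt{i}$ at integers, and may dip strictly below $\sqrt{x}$ between two consecutive integers without violating any constraint; such a piece is not a tangent line and its reach is not immediately governed by $\gamma$. Controlling this is what forces the threshold $l\approx 1/\varepsilon$: once $x\ge l$, a piece lying above $\sqrt{i}$ at the integers straddling $x$ can be below $\sqrt{x}$ by a factor no smaller than $1-\tfrac{1}{2l}=1-O(\varepsilon)$, so $4md\ge(1-O(\varepsilon))^2$ and the tangent-line estimate survives with $\varepsilon$ replaced by $O(\varepsilon)$. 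Verifying (i) that this loss costs only a constant factor in the base $\gamma$ — which needs a short case analysis according to whether the minimum of $g(x)/\sqrt{x}$ falls inside or outside the piece's active interval — and (ii) that $\log(r/l)$ stays $\Omega(\log r)$ over the full claimed range of $\delta$, are the remaining careful points.
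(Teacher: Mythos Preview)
Your overall strategy --- reduce the discrete lower bound to the continuous one of Magnanti--Stratila by first passing to a continuous $(1+O(\varepsilon))$-approximation on an interval $[l,r]$ and then bounding the reach of each line --- is sound and is close in spirit to what the paper does. The paper takes a slightly different route to the same reduction: rather than promoting $\vf$ itself to a continuous approximation via monotonicity, it \emph{replaces} each linear piece of $\vf$ by at most three tangent lines to $\sqrt{\cdot}$ (which automatically satisfy $4md=1$), and then patches the at most $|L|$ ``gap'' intervals where two consecutive tangent lines fail to overlap, using one extra tangent line per gap whenever $i\ge 1/\sqrt{\varepsilon}$. This gives a tangent-line cover of the continuous interval $[r^{\delta/2},r]$ with at most $6|L^*|$ pieces, to which Lemma~\ref{thm:magnanti} applies verbatim.

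There is, however, a genuine quantitative gap in your proposal. Your choice $l=\lceil 1/\varepsilon\rceil$ does \emph{not} satisfy $l\le r$ for $\delta\in[1,2)$: from $\varepsilon\ge r^{-\delta}$ you only get $l\le r^{\delta}+1$, which exceeds $r$ once $\delta>1$. Consequently ``remaining careful point (ii)'' is not a detail but the place where the argument breaks: $\log(r/l)$ need not be $\Omega(\log r)$, and for $\delta\ge 1$ the interval $[l,r]$ can be empty. The monotonicity step is the bottleneck: $\vf(x)\le \vf(\lceil x\rceil)\le(1+\varepsilon)\sqrt{\lceil x\rceil}$ loses a factor $\sqrt{1+1/x}$, which is $1+O(\varepsilon)$ only once $x\gtrsim 1/\varepsilon$, and there is no way to push this to $1/\sqrt{\varepsilon}$ without a different mechanism. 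The paper's tangent-line replacement avoids this entirely, because a single tangent line covers a unit interval $[i,i+1]$ as soon as $1+1/i\le\gamma(\varepsilon)$, and since $\gamma(\varepsilon)\ge 1+\sqrt{\varepsilon}$ this only needs $i\ge 1/\sqrt{\varepsilon}$ --- exactly the threshold $r^{\delta/2}$ that makes the full range $\delta\in(0,2)$ work. Your direct reach computation for a single line (the $4md\ge 1$ analysis) is correct and is essentially equivalent to one invocation of Lemma~\ref{thm:magnanti}; what is missing is a way to get down to threshold $\sim 1/\sqrt{\varepsilon}$ rather than $\sim 1/\varepsilon$.
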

\begin{proof}
	Let $L^*$ be the optimal set of linear pieces returned by running Algorithm~\ref{alg:findbest}. In order to show $|L^*| = \Omega( \log_{\gamma(\varepsilon)} r)$, we will construct a new set of linear pieces $L$ that has asymptotically the same number of linear pieces as $L^*$, but also provides a $(1+\varepsilon)$-approximation for all $x$ in an interval $[r^\beta, r]$ for some constant $\beta < 1$. Invoking Lemma~\ref{thm:magnanti} will then guarantee the final result.
	
	Recall that $L^*$ includes only two types of linear pieces: either linear pieces $g$ satisfying $g(j) = \sqrt{j}$ for exactly one integer $j$ (see~\eqref{onept}), or linear pieces formed by joining two points of $\vw$ (see~\eqref{gi}). For the square root splitting function, the latter type of linear piece is of the form
	\begin{equation}
	\label{eq:gt}
	g^{(t)}(i) = (\sqrt{t+1} - \sqrt{t}) (i - t) + \sqrt{t},
	\end{equation}
	for some positive integer $t$ less than $r$. This is the linear interpolation of the points $(t,\sqrt{t})$ and $(t+1, \sqrt{t+1})$. Both types of linear pieces bound $\phi(x) = \sqrt{x}$ above at \emph{integer} points, but they may cross below $\phi$ at non-integer values of $x$. To apply Lemma~\ref{thm:magnanti}, we would like to obtain a set of linear pieces that are all tangent lines to $\phi$. We accomplish this by replacing each linear piece in $L^*$ with two or three linear pieces that are tangent to $\phi$ at some point. For a positive integer $j$, let $g_j$ denote the line tangent to $\phi(x) = \sqrt{x}$ at $x = j$, which is given by
	\begin{equation}
	\label{eq:gj}
	g_j(x) = \frac{1}{2\sqrt{j}} (x - j) + \sqrt{j}.
	\end{equation}
	We form a new set of linear pieces $L$ made up of lines tangent to $\phi$ using the following replacements:
	\begin{itemize}
		\item If $L^*$ contains a linear piece $g$ that satisfies $g(j) = \sqrt{j}$ for exactly one integer $j$, add lines $g_{j-1}$, $g_j$, and $g_{j+1}$ to $L$. 
		\item If for an integer $t$, $L^*$ contains the line $g^{(t)}$ as given by Eq.~\eqref{eq:gt}, add lines $g_t$ and $g_{t+1}$ to $L$.
	\end{itemize}
	By Observation~\ref{obs:gh}, this replacement can only improve the approximation guarantee at integer points. Therefore,  $L$ provides a $(1+\varepsilon)$-approximation at integer values, is made up strictly of lines that are tangent to $\phi$, and contains at most three times the number of lines in $L^*$. 
	
	Due to the concavity of $\phi$, if a single line $g \in L$ provides a $(1+\varepsilon)$-approximation at consecutive integers $i$ and $i+1$, then $g$ provides the same approximation guarantee for all $x \in [i, i+1]$. 
	However, if two integers $i$ and $i+1$ are not \emph{both} covered by the \emph{same} line in $L$, then $L$ does not necessarily provide a $(1+\varepsilon)$-approximation for every $x \in [i,i+1]$. There can be at most $|L|$ intervals of this form, since each interval defines an ``intersection" at which one line $g \in L$ ceases to be a $(1+\varepsilon)$-approximation, and another line $g' \in L$ ``takes over'' as the line providing the approximation.
	
	By Lemma~\ref{thm:magnanti}, we can cover an entire interval $[i,i+1]$ for any integer $i$ using a set of $\lceil \log_{\gamma(\varepsilon)} \big(1 + \frac{1}{i} \big) \rceil $ linear pieces that are tangent to $\phi$ somewhere in $[i, i+1]$. Since $1+ \sqrt{\varepsilon} \leq \gamma(\varepsilon)$, it in fact takes only one linear piece to cover $[i, i+1]$ as long as $1+ 1/i \leq 1 + \sqrt{\varepsilon} \implies i \geq 1/\sqrt{\varepsilon}$. Since $\varepsilon \geq r^{-\delta}$, interval $[i, i+1]$ can be covered by a single linear piece if $i \geq r^{\delta/2}$. Therefore, for each interval $[i, i+1]$, with $i \geq r^{\delta/2}$, that is not already covered by a single linear piece in $L$, we add one more linear piece to $L$ to cover this interval. This at most doubles the size of $L$.
			
	The resulting set $L$ will have at most $6$ times as many linear pieces as $L^*$, and is guaranteed to provide a $(1+\varepsilon)$-approximation for all integers, as well as the entire continuous interval $[r^{\delta/2}, r]$. 
	Since $\delta$ is a fixed constant strictly less than 2, applying Lemma~\ref{thm:magnanti} shows that $L$ has at least
	\begin{equation*}
	\ceil*{\log_{\gamma(\varepsilon)} \frac{ r}{r^{\delta/2}} } = \Omega (\log_{\gamma(\varepsilon)} r^{1-\delta/2} ) = \Omega (\log_{\gamma(\varepsilon)} r)
	\end{equation*}
	 linear pieces. Therefore, $|L^*| = \Omega (\log_{\gamma(\varepsilon)} r)$ as well.
\end{proof}

\subsection{Improved Bound for the Clique Function}
When approximating the clique expansion splitting function, Algorithm~\ref{alg:findbest} will in fact find a piecewise linear curve with at most $O(\varepsilon^{-1/2}\log \log \frac{1}{\varepsilon})$ linear pieces. We prove this by highlighting a different approach for constructing a piecewise linear curve with this many linear pieces, which upper bounds the number of linear pieces in the \emph{optimal} curve found by Algorithm~\ref{alg:findbest}. 

Clique splitting penalties for a $k$-node hyperedge correspond to nonnegative integer values of the continuous function $\zeta(x) = x \cdot (k- x)$. 
As we did in Section~\ref{sec:findbest}, we want to build a set of linear pieces $L$ that provides and upper bounding $(1+\varepsilon)$-cover of $\zeta$ at integer values in $[0, r]$, where $r = \floor*{k/2}$. We start by adding the line $g^{(0)}(x) = (\vw(1) - \vw(0))x + \vw(0) = (k-1)\cdot x$ to $L$, which perfectly covers the first two splitting penalties $\vw(0) = 0$ and $\vw(1) = k-1$. In the remainder of our new procedure we will find a set of linear pieces to $(1+\varepsilon)$-cover $\zeta$ at \emph{every} value of $x \in [1,k/2]$, even non-integer $x$. 

We apply a greedy procedure similar to Algorithm~\ref{alg:findbest}. At each iteration we consider a leftmost endpoint $z_i$ which is the largest value in $[1,k/2]$ for which we already have a $(1+\varepsilon)$-approximation. In the first iteration, we have $z_1 = 1$. We then would like to find a new linear piece that provides a $(1+\varepsilon)$-approximation for all values from $z_i$ to some $z_{i+1}$, where the value of $z_{i+1}$ is maximized. We restrict to linear pieces that are tangent to $\zeta$. The line tangent to $\zeta$ at $t \in [1,k/2]$ is given by
\begin{equation}
\label{eq:linet}
g_t(x) = k x - 2tx + t^2 \,.
\end{equation}
We find $z_{i+1}$ in two steps:
\begin{enumerate}
	\item \textbf{Step 1:} Find the maximum value $t$ such that $g_t(z_i) = (1+\varepsilon) \zeta(z_i)$.
	\item \textbf{Step 2:} Given $t$, find the maximum $z_{i+1}$ such that $g_t(z_{i+1}) = (1+\varepsilon) \zeta(z_{i+1})$. 
\end{enumerate}
After completing these two steps, we add the linear piece $g_t$ to $L$, knowing that it covers all values in $[z_i, z_{i+1}]$ with a $(1+\varepsilon)$-approximation. At this point, we will have a cover for all values in $[0,z_{i+1}]$, and we begin a new iteration with $z_{i+1}$ being the largest value covered. We continue until we have covered all values up until $z_{i+1} \geq k/2$. If $t > k/2$ in Step 1 of the last iteration, we adjust the last linear piece to instead be the line tangent to $\zeta$ at $x = k/2$, so that we only include lines that have a nonnegative slope.
\begin{lemma}
	\label{lem:tzi}
	For any $z_i \in [1,k/2]$, the values of $t$ and $z_{i+1}$ given in steps 1 and 2 are given by
	\begin{align}
	\label{t}
	t &= z_i + \sqrt{ z_i (k - z_i) \varepsilon } \\
	\label{zi1}
	z_{i+1} &= \frac{t}{1+\varepsilon} + \frac{k \varepsilon}{2(1+\varepsilon)} + \frac{1}{2(1+\varepsilon)} \left(k^2 \varepsilon^2 + 4\varepsilon t (k - t)\right)^{1/2}
	\end{align}
\end{lemma}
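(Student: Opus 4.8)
The plan is to treat each of the two steps as a quadratic equation in a single unknown and simply read off the appropriate root. For Step 1, I would substitute the tangent-line formula $g_t(x) = kx - 2tx + t^2$ from~\eqref{eq:linet} together with $\zeta(z_i) = z_i(k-z_i)$ into the defining relation $g_t(z_i) = (1+\varepsilon)\zeta(z_i)$. Expanding and collecting terms in $t$ yields the monic quadratic $t^2 - 2z_i t - \varepsilon k z_i + (1+\varepsilon)z_i^2 = 0$, whose discriminant simplifies to $4\varepsilon z_i(k-z_i)$, which is nonnegative because $z_i \in [1,k/2]$ forces $k - z_i > 0$. The two real roots are $z_i \pm \sqrt{\varepsilon z_i(k-z_i)}$, and since Step~1 asks for the \emph{maximum} such $t$, we take $t = z_i + \sqrt{\varepsilon z_i(k-z_i)}$, which is exactly~\eqref{t}. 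It is worth observing that $g_t(z_i)$, viewed as a function of $t$, is an upward parabola whose minimum value equals $\zeta(z_i)$ and is attained at $t = z_i$; this both confirms that every tangent line $g_t$ upper-bounds the concave function $\zeta$ at $z_i$ (as a tangent of a concave function must) and explains why exactly one root exceeds $z_i$.

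For Step 2, $t$ is now fixed, and I would substitute the same formulas into $g_t(z_{i+1}) = (1+\varepsilon)\zeta(z_{i+1})$. Collecting terms in $z_{i+1}$ gives the quadratic $(1+\varepsilon)z_{i+1}^2 - (2t + \varepsilon k)z_{i+1} + t^2 = 0$, whose discriminant works out to $(2t+\varepsilon k)^2 - 4(1+\varepsilon)t^2 = \varepsilon^2 k^2 + 4\varepsilon t(k-t)$, again nonnegative since $t \le k$. Applying the quadratic formula, taking the larger root (Step~2 also asks for the maximum $z_{i+1}$), and dividing numerator and denominator appropriately by $2(1+\varepsilon)$ produces precisely the expression in~\eqref{zi1}.

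The computation is entirely routine, so I do not expect a genuine obstacle; the only points requiring a moment of care are the two sign choices — both dictated by the word ``maximum'' in the statements of Steps~1 and~2 — and verifying that the discriminants are nonnegative on the relevant ranges, which is exactly where the constraints $z_i \in [1,k/2]$ and $t \le k$ are used. I would also note briefly (or defer to the surrounding text) that a tangent line $g_t$ with $t \le k/2$ has nonnegative slope $k - 2t$, consistent with the requirement that all linear pieces in the cover be nonnegatively sloped, while the case $t > k/2$ is the boundary situation the procedure already handles by replacing the final piece with the tangent of $\zeta$ at $k/2$.
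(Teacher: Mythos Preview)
Your proposal is correct and follows essentially the same approach as the paper: both set up $g_t(z_i) = (1+\varepsilon)\zeta(z_i)$ and $g_t(z_{i+1}) = (1+\varepsilon)\zeta(z_{i+1})$ as quadratics in $t$ and $z_{i+1}$ respectively, simplify the discriminants, and select the larger root in each case because Steps~1 and~2 ask for the maximum. Your additional remarks on nonnegativity of the discriminants and the geometric reason one root exceeds $z_i$ are nice touches but not present in the paper's version, which is purely computational.
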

\begin{proof}
	The proof simply requires solving two different quadratic equations. For Step 1:
	\begin{align*}
	g_t(z_i) = (1+\varepsilon) \zeta(z_i) &\iff k z_i - 2t z_i + t^2 = (1+\varepsilon)(z_i k - z_i^2) \\
	&\iff t^2 - 2 z_i t - \varepsilon z_i k + (1+\varepsilon)z_i^2 = 0
	\end{align*}
	Taking the larger solution to maximize $t$:
	\begin{equation*}
	t = \frac{1}{2} \left(2z_i + \sqrt{ 4z_i^2 - 4(1+\varepsilon) z_i^2 + 4 \varepsilon k z_i} \right) = z_i + \sqrt{z_i (k-z_i) \varepsilon }.
	\end{equation*}
	For Step 2:
	\begin{align*}
	g_t(z_{i+1}) = (1+\varepsilon) \zeta(z_{i+1}) & \iff k z_{i+1} - 2t z_{i+1} + t^2 = (1+\varepsilon)(z_{i+1}k - z_{i+1}^2) \\
	& \iff (1+\varepsilon) z_{i+1}^2  + z_{i+1} (-\varepsilon k - 2t ) + t^2 = 0.
	\end{align*}
	We again take the larger solution to this quadratic equation since we want to maximize $z_{i+1}$:
	\begin{align*}
	z_{i+1} &= \frac{1}{2(1+\varepsilon)} \left(\varepsilon k + 2t + \sqrt{\varepsilon^2 k^2 + 4t\varepsilon k + 4t^2 - 4(1+\varepsilon) t^2}  \right) \\
	&= \frac{1}{2(1+\varepsilon)}  \left(\varepsilon k + 2t + \sqrt{\varepsilon^2 k^2 + 4t\varepsilon(k-t)}\right).
	\end{align*}
\end{proof}
Algorithm~\ref{alg:clique} summarizes the new procedure for covering the clique splitting function. Since  $z_1 = 1$, if $\varepsilon \geq 1$, then
\[
z_2 \geq \frac{1}{2(1+\varepsilon)} (2 k\varepsilon) = \frac{k \varepsilon}{1+\varepsilon} \geq \frac{k}{2}, 
\]
so after one step we have covered the entire interval $[1,k/2]$. We can therefore focus on $\varepsilon < 1$.
\begin{algorithm}[t]
	\caption{Find a $(1+\varepsilon)$-cover $L$ for the clique splitting function.}
	\label{alg:clique}
	\begin{algorithmic}
		\State \textbf{Input:} Hyperedge size $k$, $\varepsilon \geq 0$
		\State \textbf{Output:} $(1+\varepsilon)$ cover for clique splitting function.
		\State $L = \{g^{(0)}\}$, where $g^{(0)}(x) = (k-1)x$
		\State $z = 1$
		\Do
		\State $t \leftarrow z + \sqrt{z(k-z)\varepsilon}$
		\State $z \leftarrow \frac{t}{1+\varepsilon} + \frac{k \varepsilon}{2(1+\varepsilon)} + \frac{1}{2(1+\varepsilon)} \left(k^2 \varepsilon^2 + 4\varepsilon t (k - t)\right)^{1/2}$
		\If{$t > k/2$} 
		\State $L \leftarrow L \cup \{g_{k/2}\}$, where $g_{k/2}(x) = k/2$
		\Else
		\State $L \leftarrow L \cup \{g_t\}$, where $g_t(x) = kx - 2tx + t^2$
		\EndIf
		\doWhile{$z_{i+1} < k/2$}
		\State Return $\vf$ defined by $\vf(x) = \min_{g \in L} g(x)$
	\end{algorithmic}
\end{algorithm}

\begin{theorem}
	\label{thm:clique}
	For $\varepsilon < 1$, if $L$ is the output from Algorithm~\ref{alg:clique}, then $|L| = O(\varepsilon^{-1/2} \log \log \frac{1}{\varepsilon})$.
\end{theorem}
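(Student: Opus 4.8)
The plan is to bound $|L|$ by the number $N$ of iterations of the do-while loop in Algorithm~\ref{alg:clique}; since each iteration adds exactly one linear piece and the only other piece is $g^{(0)}$, we have $|L| = N+1$. Write $\zeta(x) = x(k-x)$, let $z_1 = 1 < z_2 < \cdots$ be the left endpoints produced, and recall from Lemma~\ref{lem:tzi} that iteration $i$ uses $t_i = z_i + \sqrt{\varepsilon\,\zeta(z_i)}$ and advances to the $z_{i+1}$ given there, stopping once $z_{i+1}\ge k/2$. First I would record two one-step facts coming from that closed form. (i) Since $t_i\le k$ (as $z_i\le k/2$ and $\varepsilon<1$) and $\sqrt{k^2\varepsilon^2+4\varepsilon t_i(k-t_i)}\ge \varepsilon k$, one gets $z_{i+1}\ge (t_i+\varepsilon k)/(1+\varepsilon)\ge t_i$, so the sequence is strictly increasing and $z_{i+1}-z_i\ge t_i-z_i=\sqrt{\varepsilon\,\zeta(z_i)}$. (ii) Dropping the nonnegative $2t_i$ term gives $z_{i+1}\ge \varepsilon k/(2(1+\varepsilon))\ge \varepsilon k/4$, so every $z_i$ with $i\ge 2$ obeys $z_i\ge \varepsilon k/4$, i.e.\ $\varepsilon k/z_i\le 4$.

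The heart of the argument is a potential function: define $\Psi(z)=\arcsin\sqrt{z/k}$ on $[0,k]$, so that $\Psi$ is increasing, $\Psi'(z)=1/(2\sqrt{\zeta(z)})$, and $\Psi(k/2)=\pi/4$. The claim I would establish is that every \emph{non-terminal} iteration with $i\ge 2$ satisfies $\Psi(z_{i+1})-\Psi(z_i)=\Omega(\sqrt{\varepsilon})$. Granting this, since $z_i\le k/2$ until the loop stops, the values $\Psi(z_i)$ lie in $(0,\pi/4]$, so the telescoping sum of these increments is at most $\pi/4$, forcing $O(\varepsilon^{-1/2})$ such iterations. The remaining iterations---the first one, and the last one (the unique iteration with $t_i>k/2$ has $z_{i+1}>k/2$, as one checks directly from the closed form, hence is the last)---contribute only $O(1)$. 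Thus $N=O(\varepsilon^{-1/2})$ and $|L|=O(\varepsilon^{-1/2})$, which lies within $O(\varepsilon^{-1/2}\log\log\tfrac1\varepsilon)$.

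For the per-step claim: on a non-terminal step $[z_i,z_{i+1}]\subseteq[0,k/2]$, where $\zeta$ is increasing, so $\Psi'$ is decreasing and $\Psi(z_{i+1})-\Psi(z_i)\ge \frac{z_{i+1}-z_i}{2\sqrt{\zeta(z_{i+1})}}\ge \tfrac{\sqrt{\varepsilon}}{2}\sqrt{\zeta(z_i)/\zeta(z_{i+1})}$ by fact (i). It remains to show $\zeta(z_{i+1})/\zeta(z_i)=O(1)$. Here I would use the tangent-line identity $g_{t_i}(x)-\zeta(x)=(x-t_i)^2$ together with the defining relations $g_{t_i}(z_i)=(1+\varepsilon)\zeta(z_i)$ (Step~1) and $g_{t_i}(z_{i+1})=(1+\varepsilon)\zeta(z_{i+1})$ (Step~2), which give $(t_i-z_i)^2=\varepsilon\,\zeta(z_i)$ and $(z_{i+1}-t_i)^2=\varepsilon\,\zeta(z_{i+1})$, hence $\zeta(z_{i+1})/\zeta(z_i)=\big((z_{i+1}-t_i)/(t_i-z_i)\big)^2$. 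One then bounds $z_{i+1}-t_i\le \varepsilon(k-t_i)+\sqrt{\varepsilon\,\zeta(t_i)}$ directly from the closed form and invokes fact (ii): $\varepsilon k/z_i\le 4$ forces $t_i\le 3z_i$ (so $\zeta(t_i)=O(\zeta(z_i))$) and $\varepsilon(k-t_i)^2\le \varepsilon(k-z_i)^2=(\varepsilon(k-z_i)/z_i)\,\zeta(z_i)\le 4\,\zeta(z_i)$; together these make $(z_{i+1}-t_i)/(t_i-z_i)$ an absolute constant, completing the step.

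I expect the bound $\zeta(z_{i+1})/\zeta(z_i)=O(1)$ to be the main obstacle, since it is precisely where the tangent-line geometry and the lower bound $z_i=\Omega(\varepsilon k)$ from fact (ii) must be combined carefully; a crude estimate of $z_{i+1}$ would let the per-step $\Psi$-increment drop below $\sqrt{\varepsilon}$ and break the count. A secondary nuisance is the bookkeeping of the boundary iterations together with the regime $k\le 4/\varepsilon$, where $z_1=1$ already dominates $\varepsilon k/4$ and fact (ii) costs no first step. I would also note that the $\log\log\tfrac1\varepsilon$ factor in the statement is slack for this route; it arises naturally only if one instead controls the early iterations through the weaker ``geometric-mean'' recursion $z_{i+1}\ge \sqrt{\varepsilon\,\zeta(z_i)/2}$, which merely halves $\log(\varepsilon k/z_i)$ per step.
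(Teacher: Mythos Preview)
Your argument is correct and takes a genuinely different route from the paper's. The paper partitions the interval $[k\varepsilon, k/2]$ into geometric ``rounds'' $R_j=[k\varepsilon^{(1/2)^{j-1}}, k\varepsilon^{(1/2)^j}]$, shows each round consumes at most $O(\varepsilon^{-1/2})$ iterations via the crude lower bound $z_{i+1}-z_i \ge C\sqrt{k\varepsilon z_i}$, and counts $O(\log\log\tfrac{1}{\varepsilon})$ rounds. This is exactly the ``halving $\log(\varepsilon k/z_i)$ per round'' mechanism you allude to in your closing remark.

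Your potential-function approach is both cleaner and sharper: the choice $\Psi(z)=\arcsin\sqrt{z/k}$, with $\Psi'(z)=1/(2\sqrt{\zeta(z)})$, converts the variable step size $z_{i+1}-z_i\ge\sqrt{\varepsilon\,\zeta(z_i)}$ into a uniform $\Omega(\sqrt{\varepsilon})$ increment once you control the ratio $\zeta(z_{i+1})/\zeta(z_i)$. The key insight the paper misses is that the exact identities $(t_i-z_i)^2=\varepsilon\,\zeta(z_i)$ and $(z_{i+1}-t_i)^2=\varepsilon\,\zeta(z_{i+1})$, together with the bootstrap $z_i\ge\varepsilon k/4$ for $i\ge 2$, make this ratio an absolute constant. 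Consequently you actually prove $|L|=O(\varepsilon^{-1/2})$, so the $\log\log\tfrac{1}{\varepsilon}$ factor in the stated theorem is indeed slack. Your accounting of the first and last iterations, and the observation that $t_i>k/2$ forces $z_{i+1}\ge t_i>k/2$ (hence is terminal), are all sound.
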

\begin{proof}
	We get a loose bound for the value of $t$ in Lemma~\ref{lem:tzi} by noting that $(k - z_i) \geq k/2 \geq z_i$:
	\begin{equation}
	\label{tt1}
	t = z_i + \sqrt{z_i \varepsilon (k - z_i)} \geq z_i + \sqrt{z_i^2 \varepsilon} = z_i (1+ \sqrt{\varepsilon}).
	\end{equation}
	Since we assumed $\varepsilon < 1$, we know that 
	\begin{equation}
	\frac{t}{1+\varepsilon} \geq \frac{ z_i (1 + \sqrt{\varepsilon})}{1+\varepsilon} > z_i.
	\end{equation}
	Therefore, from~\eqref{zi1} we see that
	\begin{align}
	\label{boundzi}
	z_{i + 1} &> z_i + \frac{k \varepsilon}{2(1+\varepsilon)} + \frac{1}{2(1+\varepsilon)} \left(k^2 \varepsilon^2 + 4\varepsilon t (k - t)\right)^{1/2} \\
	&> z_i + \frac{k \varepsilon}{2(1+\varepsilon)}  + \frac{1}{2(1+\varepsilon)} \left(k^2 \varepsilon^2\right)^{1/2} = z_i + \frac{k\varepsilon}{1+\varepsilon}.
	\end{align}
	From this we see that at each iteration, we cover an additional interval of length $z_{i+1} - z_i > \frac{k\varepsilon}{1+\varepsilon}$, and therefore we know it will take at most $O(1/\varepsilon)$ iterations to cover all of $[1,k/2]$. This upper bound is loose, however. The value of $z_{i+1} - z_i$ in fact increases significantly with each iteration, allowing the algorithm to cover larger and larger intervals as it progresses. 
	
	Since $z_1 = 1$ and $z_{i+1} - z_i \geq \frac{k\varepsilon}{1+\varepsilon}$, we see that $z_j \geq k\varepsilon$ for all $j \geq 3$. For the remainder of the proof, we focus on bounding the number of iterations it takes to cover the interval $[k\varepsilon, k/2]$. 
	We separate the progress made by Algorithm~\ref{alg:clique} into different rounds. Round $j$ refers to the set of iterations that the algorithm spends to cover the interval 
	\begin{equation}
	\label{roundj}
	R_j = \left[k \varepsilon^{\left(\frac{1}{2}\right)^{j-1}}, k \varepsilon^{\left(\frac{1}{2}\right)^{j}}\right] ,
	\end{equation}
	For example, Round 1 starts with the iteration $i$ such that $z_i \geq k\varepsilon$, and terminates when the algorithm reaches an iteration $i'$ where $z_{i'} \geq k\varepsilon^{1/2}$. A key observation is that it takes less than $4/\sqrt{\varepsilon}$ iterations for the algorithm to finish Round $j$ for any value of $j$. To see why, observe that from the bound in~\eqref{boundzi} we have
		\begin{align*}
	z_{i + 1} - z_i &>  \frac{k \varepsilon}{2(1+\varepsilon)} + \frac{1}{2(1+\varepsilon)} \left(k^2 \varepsilon^2 + 4\varepsilon t (k - t)\right)^{1/2} \\
	&>\frac{1}{2(1+\varepsilon)} \left(4\varepsilon t (k - t)\right)^{1/2} \\
	&\geq   \frac{1}{2(1+\varepsilon)} \left(4\varepsilon z_i \frac{k}{2}\right)^{1/2} \\
	&> \frac{\sqrt{2}}{2} \frac{\sqrt{k\varepsilon}}{(1+\varepsilon)} \sqrt{z_i}.
	\end{align*}
	For each iteration $i$ in Round $j$, we know that $z_i \geq k \varepsilon^{\left(\frac{1}{2}\right)^{j-1}}$, so that
	\begin{equation}
	z_{i+1} - z_i > \frac{\sqrt{2}}{2} \frac{\sqrt{k\varepsilon}}{(1+\varepsilon)}  \sqrt{k \varepsilon^{\left(\frac{1}{2}\right)^{j-1}}} \geq \frac{\sqrt{2}}{2} \frac{k \varepsilon^{\frac{1}{2} + \left(\frac{1}{2}\right)^{j} }}{1+\varepsilon} 
	= C \cdot k \cdot \varepsilon^{\frac{1}{2} + \left(\frac{1}{2}\right)^{j} },
	\end{equation}
	where $C = \sqrt{2}/(2(1+\varepsilon))$ is a constant larger than $1/4$.
	Since each iteration of Round $j$ covers an interval of length at least $C\cdot k \cdot \varepsilon^{\frac{1}{2} + \left(\frac{1}{2}\right)^{j} }$, and the right endpoint for Round $j$ is $k \varepsilon^{\left(\frac{1}{2}\right)^{j}}$, the maximum number of iterations needed to complete Round $j$ is
	\begin{equation}
	\frac{ k \varepsilon^{\left(\frac{1}{2}\right)^{j}}} {C \cdot k \cdot \varepsilon^{\frac{1}{2} + \left(\frac{1}{2}\right)^{j} }} = \frac{1}{C \sqrt{\varepsilon}}.
	\end{equation}
	Therefore, after $p$ rounds, the algorithm will have performed $O(p \cdot \varepsilon^{-1/2})$ iterations, to cover the interval $[1,k \varepsilon^{\left(\frac{1}{2}\right)^{p}}]$. Since we set out to cover the interval $[1, k/2]$, this will be accomplished as soon as $p$ satisfies $\varepsilon^{\left(\frac{1}{2}\right)^{p}} \geq 1/2$, which holds as long as $p \geq \log_2 \log_2 \frac{1}{\varepsilon}$:
	\begin{align*}
	\varepsilon^{\left(\frac{1}{2}\right)^{p}} \geq 1/2 
	& \iff \left(\frac{1}{2}\right)^{p} \log_2 \varepsilon \geq -1 \\
	& \iff \log_2 \varepsilon \geq -2^p \\
	& \iff \log_2 \frac{1}{\varepsilon} \leq 2^p \\
	& \iff \log_2 \log_2\frac{1}{\varepsilon} \leq p .
	\end{align*}
	This means that the number of iteration of Algorithm~\ref{alg:clique}, and therefore the number of linear pieces in $L$, is bounded above by $O(\varepsilon^{-1/2} \log \log \frac{1}{\varepsilon})$.
\end{proof}
We obtain a proof of Theorem~\ref{thm:complete} on sparsifying the complete graph as a corollary.

\paragraph{Proof of Theorem~\ref{thm:complete}.}
A complete graph on $n$ nodes can be viewed as a hypergraph with a single $n$-node hyperedge with a clique expansion splitting function. Theorem~\ref{thm:complete} says that the clique expansion integer function $\vw(i) = i \cdot (n-i)$ can be covered with $O(\varepsilon^{-1/2} \log \log \varepsilon^{-1})$ linear pieces, which is equivalent to saying the clique expansion splitting function can be modeled using this many CB-gadgets. Each CB-gadget has two auxiliary nodes and $(2n+1)$ directed edges. This results in an augmented sparsifier for the complete graph with $O(n \varepsilon^{-1/2} \log \log \varepsilon^{-1})$ edges. This is only meaningful if $\varepsilon$ is small enough so that $O(\varepsilon^{-1/2} \log \log \varepsilon^{-1})$ is asymptotically less than $n$, so our sparsifier has $O(n + \varepsilon^{-1/2} \log \log \varepsilon^{-1}) = O(n)$ nodes.

\hfill $\square$

\section{Sparsifying Co-occurrence Graphs}
\label{sec:ccs}
Recall from the introduction that a co-occurrence graph is formally defined by a
set of nodes $V$ and a set of subsets $\mathcal{C} \subseteq 2^V$. In practice,
each $c \in \mathcal{C}$ could represent some type of group interaction
involving nodes in $c$ or a set of nodes sharing the same attribute.
We define
the co-occurrence graph $G = (V,E)$ on $\mathcal{C}$ to be the graph where nodes
$i$ and $j$ share an edge with weight $w_{ij} = \sum_{c \in \mathcal{C}} w_c$,
where $w_c \geq 0$ is a weight associated with co-occurrence set $c \in \mathcal{C}$.
The case when $w_c = 1$ is standard and is an example of a common practice
of ``one-mode projections''  of bipartite graphs or affiliation
networks~\cite{lattanzi2009affiliation,li2007evolving,neal2014backbone,newman2001random,ramasco2006social,zhou2007bipartite,benson2020simple} --- a graph is formed on the nodes from one side of a bipartite graph by
connecting two nodes whenever they share a common neighbor on the other side,
where edges are weighted based on the number of shared neighbors.

%
A co-occurrence graph $G$ has the following \emph{co-occurrence} cut function:
\begin{equation}
\label{eq:cocut2}
\cut_G(S) = \sum_{c \in \mathcal{C}} w_c \cdot |S \cap c| \cdot | \bar{S} \cap c|.
\end{equation}
In this sense, the co-occurrence graph is naturally interpreted as a weighted clique expansion of a hypergraph $\mathcal{H} = (V, \mathcal{C})$, which itself is a special case of reducing a submodular, cardinality-based hypergraph to a graph. However, this type of graph construction is by no means restricted to literature on hypergraph clustering. In many applications, the first step in a larger experimental pipeline is to construct a graph of this type from a large dataset. The resulting graph is often quite dense,
as numerous domains involve large hyperedges~\cite{veldt2020hyperlocal,purkait2016clustering}. This makes it expensive to form, store, and compute over co-occurrence graphs in practice.

Solving cut problems on these dense co-occurrence graphs arises naturally in many settings.
For example, any hypergraph clustering application that relies on a clique expansion involves a graph with a co-occurrence cut function~\cite{Agarwal2006holearning,hadley1995,Hein2013,Huang2015,Zhou2006learning,panli2017inhomogeneous,vannelli1990Gomoryhu,zien1999,rodriguez2009laplacian,veldt2020paramcc,veldt2020hyperlocal}.
Clustering social networks is another use case, as
online platforms have many ways to create groups of users (e.g., events, special interest groups, businesses, organizations, etc.), that can be large in practice.
Furthermore, cuts in co-occurrence graphs of students on a university campus (based on, e.g., common classes, living arrangements, or physical proximity)
are relevant to preventing the spread of infectious diseases such as COVID-19.


In these cases, it would be  more efficient to sparsify the graph \emph{without ever forming it explicitly}, by sparsifying large cliques induced by co-occurrence relationships. 
Although this strategy seems intuitive, it is often ignored in practice. 
We therefore present several theoretical results that highlight the benefits of this implicit approach to sparsification.
Our focus is on results that can be achieved using augmented sparsifiers for cliques, though many of the same benefits could also be achieved with standard sparsification techniques.

\subsection{Proof of Theorem~\ref{thm:cooc}}
Let $\mathcal{C}$ be a set of nonempty co-occurrence groups on a set of $n$ nodes, $V$, and let $G = (V,E)$ be the corresponding co-occurrence graph on $\mathcal{C}$. For $c \in \mathcal{C}$, let $k_c = |c|$ be the number of nodes in $c$. For $v \in V$, let $d_v$ be the co-occurrence degree of $v$: the number of sets $c$ containing $v$. Let $d_\mathit{avg} = \frac{1}{n} \sum_{v \in V} d_v$ be the average co-occurrence degree. We re-state and prove Theorem~\ref{thm:cooc}, first presented in the introduction. The proof holds independent of the weight $w_c$ we associate with each $c \in \mathcal{C}$, since we can always scale our graph reduction techniques by an arbitrary positive weight.
\begin{theoremintro}
 Let $\varepsilon > 0$ and $f(\varepsilon) = \varepsilon^{-1/2}\log \log \frac{1}{\varepsilon}$. There exists an augmented sparsifier for $G$\ with $O(n + |\mathcal{C}| \cdot f(\varepsilon))$ nodes and $O(n\cdot d_\mathit{avg} \cdot f(\varepsilon) )$ edges. In particular, if $d_{avg}$ is constant and for some $\delta > 0$ we have $\sum_{c \in \mathcal{C}} |c|^2 = \Omega(n^{1+\delta})$, then forming $G$ explicitly takes $\Omega(n^{1+\delta})$ time, but an augmented sparsifier for $G$ with $O(nf(\varepsilon))$ nodes and $O(n f(\varepsilon))$ edges can be constructed in $O( n f(\varepsilon))$ time.
\end{theoremintro}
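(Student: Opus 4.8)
The plan is to apply Theorem~\ref{thm:complete} as a black box to each induced clique and overlay the resulting gadgets. For a fixed $c \in \mathcal{C}$, the weighted clique on $c$ (every pair joined with weight $w_c$) is just $w_c$ times the complete graph on $k_c := |c|$ nodes, and scaling every edge of an augmented sparsifier by $w_c$ clearly preserves the $(1+\varepsilon)$ sandwich in~\eqref{eq:augsparse}. So Theorem~\ref{thm:complete} yields an augmented cut sparsifier $\hat{G}_c$ that reuses the $k_c$ nodes of $c$, adds an auxiliary set $\mathcal{A}_c$ of size $O(f(\varepsilon))$, and has $O(k_c f(\varepsilon))$ directed edges, all incident only to $c \cup \mathcal{A}_c$. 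I would then form $\hat{G}$ by laying all the $\hat{G}_c$ on the common vertex set $V$, using a \emph{disjoint} union $\mathcal{A} = \bigsqcup_{c} \mathcal{A}_c$ for the auxiliary nodes --- fresh auxiliary nodes per co-occurrence set, even when the $c$'s overlap in $V$ --- and keeping every gadget edge.

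\textbf{Correctness.} Next I would verify the $(1+\varepsilon)$ guarantee. Because the edge set of $\hat{G}_c$ touches only $c \cup \mathcal{A}_c$ and the $\mathcal{A}_c$ are pairwise disjoint, for every $S \subseteq V$ and $T \subseteq \mathcal{A}$ the directed cut decomposes as $\dircut_{\hat{G}}(S \cup T) = \sum_{c} \dircut_{\hat{G}_c}\big((S \cap c)\cup(T \cap \mathcal{A}_c)\big)$, and the minimization over $T$ in the definition~\eqref{eq:specialcut} separates term by term, giving $\cut_{\hat{G}}(S) = \sum_{c} \cut_{\hat{G}_c}(S\cap c)$. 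Each summand lies between $w_c |S \cap c|\,|\bar{S}\cap c|$ and $(1+\varepsilon)$ times it by Theorem~\ref{thm:complete}, so summing and comparing with~\eqref{eq:cocut2} gives $\cut_G(S) \le \cut_{\hat{G}}(S)\le (1+\varepsilon)\cut_G(S)$ for all $S$. I expect this decomposition of the augmented cut function over the individual gadgets to be the only step needing real care: it hinges on the auxiliary sets being disjoint so the inner minimization splits, and it is exactly what licenses treating Theorem~\ref{thm:complete} as a black box. Everything after this is bookkeeping.

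\textbf{Counting and the ``in particular'' clause.} The node count is $|V| + \sum_c |\mathcal{A}_c| = O(n + |\mathcal{C}|\,f(\varepsilon))$. For edges I would use the handshake-type identity $\sum_{c\in\mathcal{C}} k_c = \sum_{v\in V} d_v = n\,d_\mathit{avg}$, so the total is $\sum_c O(k_c f(\varepsilon)) = O(n\,d_\mathit{avg}\,f(\varepsilon))$. When $d_\mathit{avg} = O(1)$, the edge count is $O(nf(\varepsilon))$, and since each $c$ is nonempty $|\mathcal{C}| \le \sum_c k_c = O(n)$, so the node count collapses to $O(nf(\varepsilon))$ as well. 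The construction time is proportional to the number of gadget edges emitted, since the per-clique routine behind Theorem~\ref{thm:complete} (Algorithm~\ref{alg:clique} plus Lemma~\ref{lem:pl2ccb}) does $O(1)$ work per edge, hence $O(n\,d_\mathit{avg}\,f(\varepsilon)) = O(nf(\varepsilon))$. Finally, building $G$ explicitly instantiates $\sum_c \binom{k_c}{2} = \Theta\big(\sum_c |c|^2\big)$ clique edges before any deduplication, which is $\Omega(n^{1+\delta})$ under the stated hypothesis; since $f(\varepsilon)$ is a constant for fixed $\varepsilon$, $O(nf(\varepsilon)) = O(n) = o(n^{1+\delta})$, which is the asserted separation.
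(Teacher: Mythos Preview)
Your proposal is correct and follows essentially the same approach as the paper: apply Theorem~\ref{thm:complete} to each co-occurrence set, take the disjoint union of gadgets, and count using $\sum_{c} k_c = n\,d_\mathit{avg}$ together with $|\mathcal{C}| \le \sum_c k_c$. You are actually more explicit than the paper on the correctness step---the paper simply asserts that sparsifying each clique yields a sparsifier for $G$, whereas you spell out why the inner minimization over $T$ splits across gadgets (disjoint auxiliary sets $\mathcal{A}_c$), which is the right justification.
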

\begin{proof}
	The set $c$ induces a clique in the co-occurrence graph with $O(k_c^2)$ edges. Therefore, the runtime for explicitly forming $G = (V,E)$ by expanding cliques and placing all edges equals $O(\sum_{c \in \mathcal{C}} k_c^2) = \Omega(n^{1+\delta})$. By Theorem~\ref{thm:complete}, for each $c \in \mathcal{C}$ we can produce an augmented sparsifier with $O(k_c f(\varepsilon))$ directed edges and $O( f(\varepsilon))$ new auxiliary nodes. Sparsifying each clique in this way will produce an augmented sparsifier $\hat{G} = (\hat{V}, \hat{E})$ where
	\begin{align}
	\label{ehat}
	|\hat{E}|& = \sum_{c \in \mathcal{C}} O(k_c f(\varepsilon)) = O(f(\varepsilon) \cdot n \cdot d_\mathit{avg} )\\
	\label{vhat}
	|\hat{V}| &= n + \sum_{c \in \mathcal{C}} O(f(\varepsilon)) = O(n + |\mathcal{C}| f(\varepsilon)).
	\end{align}
	Observe that $n \cdot d_\mathit{avg} = \sum_{v \in V}  d_v = \sum_{c \in \mathcal{C}} k_c$.  If $d_{avg}$ is a constant, this implies that $\sum_{c \in \mathcal{C}} k_c = O(n)$, and furthermore that $|\mathcal{C}| = O(n)$, since each $k_c \geq 1$. Therefore $|\hat{E}|$ and $|\hat{V}|$ are both $O(n f(\varepsilon))$. Only $O(f(\varepsilon))$ edge weights need to be computed for the clique, so the overall runtime is just the time it takes to explicitly place the $O(n f(\varepsilon))$ edges.
\end{proof}
The above theorem and its proof includes the case where $|\mathcal{C}|= o(n)$, meaning that $\mathcal{C}$ is made up of a sublinear number of large co-occurrence interactions. In this case, our augmented sparsifier will have fewer than $O(n f(\varepsilon))$ nodes. When $|\mathcal{C}| = \omega(n)$, the average degree will no longer be a constant and therefore it becomes theoretically beneficial to sparsify each clique in $\mathcal{C}$ using standard undirected sparsifiers. For each $c \in \mathcal{C}$, standard cut sparsification techniques will produce an $\varepsilon$-cut sparsifier of $c$ with $O(k_c \varepsilon^{-2})$ undirected edges and exactly $k_c$ nodes. If two nodes appear in multiple co-occurrence relationships, the resulting edges can be collapsed into a weighted edge between the nodes, meaning that the number of edges in the resulting sparsifier does not depend on $d_\mathit{avg}$. We discuss tradeoffs between different sparsification techniques in depth in a later subsection. Regardless of the sparsification technique we apply in practice, implicitly sparsifying a co-occurrence graph will often lead to a significant decrease in runtime compared to forming the entire graph prior to sparsifying it. 

\subsection{A Simple Co-occurrence Model}
We now consider a simple model for co-occurrence graphs with a power-law group
size distribution, that produces graphs satisfying the conditions of
Theorem~\ref{thm:cooc} in a range of different parameter settings.  Such
distributions have been observed for many types of co-occurrence graphs
constructed from real-world data~\cite{clauset2009powerlaw,benson2020simple}.
More formally, let $V$ be a set of $n$ nodes, and assume a co-occurrence set $c$
is randomly generated by sampling a set of size $K$ from a discrete power-law
distribution where for $k \in [1,n]$:
\begin{equation*}
\mathbb{P}[K = k] = Ck^{-\gamma} .
\end{equation*}
Here, $C$ is a normalizing constant for the distribution, and $\gamma$ and is a parameter of the model. Once $K$ is drawn from this model, a co-occurrence set $c$ is generated with a set of $K$ nodes from $V$ chosen uniformly at random.
This procedure can be repeated an arbitrary number of times (drawing all sizes $K$ independently) to produce a set of co-occurrence sets $\mathcal{C}$. This $\mathcal{C}$ can then be used to generate a co-occurrence graph $G = (V,E)$.
(The end result of this procedure is a type of \emph{random intersection graph}~\cite{bloznelis2013degree}.)
We first consider a parameter regime where set sizes are constant on average but large enough to produce a dense co-occurrence graph that is inefficient to explicitly form in practice.
The regime has an exponent $\gamma \in (2, 3)$, which is common in real-world data~\cite{clauset2009powerlaw}.
\begin{theorem}
	Let $\mathcal{C}$ be a set of $O(n)$ co-occurrence sets obtained from the power-law model with $\gamma \in (2,3)$. The expected degree of each node will be constant and
$\mathbb{E}\left [ \sum_{c \in \mathcal{C}} |c|^2\right] = O(n^{4-\gamma})$.
\end{theorem}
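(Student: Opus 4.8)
The plan is to reduce both claims to elementary estimates on partial sums of the power-law mass function and then apply linearity of expectation over the $O(n)$ independently drawn co-occurrence sets. First I would pin down the normalizing constant: since $\mathbb{P}[K=k] = Ck^{-\gamma}$ on $k \in [1,n]$, we have $C = \left(\sum_{k=1}^n k^{-\gamma}\right)^{-1}$, and because $\gamma > 1$ this sum lies between $1$ and $\sum_{k=1}^\infty k^{-\gamma} = \zeta(\gamma) < \infty$, so $C = \Theta(1)$ uniformly in $n$.

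For the degree claim, observe that for a single co-occurrence set $c$ of size $K$, a fixed node $v$ lies in $c$ with probability exactly $K/n$ (a uniform $K$-subset contains $v$ with probability $\binom{n-1}{K-1}/\binom{n}{K} = K/n$), so conditioning on $K$ gives $\mathbb{P}[v \in c] = \mathbb{E}[K]/n$. Writing $d_v = \sum_{c \in \mathcal{C}} \mathbf{1}[v \in c]$ and using linearity over the independent draws, $\mathbb{E}[d_v] = |\mathcal{C}| \cdot \mathbb{E}[K]/n$. It then remains to bound $\mathbb{E}[K] = C\sum_{k=1}^n k^{1-\gamma}$; since $\gamma \in (2,3)$ we have $1-\gamma < -1$, so $\sum_{k=1}^\infty k^{1-\gamma}$ converges and $\mathbb{E}[K] = O(1)$. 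Combined with $|\mathcal{C}| = O(n)$, this yields $\mathbb{E}[d_v] = O(1)$ (indeed $\Theta(1)$ when $|\mathcal{C}| = \Theta(n)$), giving the first claim.

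For the second moment, note that $|c| = K_c$, the size drawn for set $c$, so $\mathbb{E}[|c|^2] = \mathbb{E}[K^2] = C\sum_{k=1}^n k^{2-\gamma}$. Now $2-\gamma \in (-1,0)$, so this partial sum diverges, and I would sandwich it between integrals, $\int_1^{n} x^{2-\gamma}\,dx \le \sum_{k=1}^n k^{2-\gamma} \le 1 + \int_1^{n} x^{2-\gamma}\,dx$, which gives $\sum_{k=1}^n k^{2-\gamma} = \Theta\!\left(\frac{n^{3-\gamma}-1}{3-\gamma}\right) = \Theta(n^{3-\gamma})$ since $3-\gamma > 0$ is a fixed constant. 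Hence $\mathbb{E}[K^2] = \Theta(n^{3-\gamma})$, and applying linearity once more, $\mathbb{E}\big[\sum_{c \in \mathcal{C}} |c|^2\big] = |\mathcal{C}| \cdot \mathbb{E}[K^2] = O(n)\cdot O(n^{3-\gamma}) = O(n^{4-\gamma})$.

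There is no serious obstacle; the only points requiring a little care are (i) confirming that the normalizing constant $C$ stays bounded away from $0$ and $\infty$ as $n$ grows, which is exactly where $\gamma > 1$ enters, and (ii) the integral comparison converting $\sum_{k=1}^n k^{2-\gamma}$ into the growth rate $n^{3-\gamma}$, where one must note that $3-\gamma$ is a positive constant so the lower endpoint contributes only an additive $O(1)$. Everything else is linearity of expectation over the independent draws together with the fact that a uniform $K$-subset contains a fixed node with probability $K/n$.
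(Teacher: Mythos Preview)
Your proposal is correct and follows essentially the same route as the paper: both compute $\mathbb{P}[v\in c]=\mathbb{E}[K]/n$ via the identity $\binom{n-1}{k-1}/\binom{n}{k}=k/n$, bound the relevant power sums $\sum k^{1-\gamma}$ and $\sum k^{2-\gamma}$ by integral comparison, and finish with linearity of expectation over the $O(n)$ independent draws. If anything, you are slightly more careful than the paper in explicitly verifying that the normalizing constant $C$ is $\Theta(1)$ and in noting that $\sum_{k\ge 1} k^{1-\gamma}$ actually converges when $\gamma>2$.
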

\begin{proof}
	Let $K$ be the size of a randomly generated co-occurrence set. We compute:
		\begin{equation*}
	\mathbb{E}[K^2] = \sum_{k = 1}^n k^2 \cdot \mathbb{P}[K = k]  =  C \cdot \sum_{k = 1}^n k^{2-\gamma} \leq C \cdot \left[1 + \int_{1}^n x^{2-\gamma} dx \right]= C+ \frac{C n^{3-\gamma}}{3-\gamma} - \frac{C}{3-\gamma}  = O(n^{3- \gamma}).
	\end{equation*}
	Therefore,
	\begin{align*}
	\mathbb{E}\left [ \sum_{c \in \mathcal{C}} |c|^2\right] &=\sum_{c \in \mathcal{C}} \mathbb{E}[K^2] = O(n^{4 - \gamma}).
	\end{align*}
	For a node $v \in V$ and a randomly generated set $c$, the probability that $v$ will be selected to be in $c$ is
	\begin{equation*}
	\mathbb{P}[v \in c] = \sum_{k = 1}^n \mathbb{P}[|c| = k] \cdot \frac{{n-1 \choose k-1}}{{n \choose k} } = C \cdot \sum_{k = 1}^n k^{-\gamma} \cdot \frac{k}{n} =\frac{C}{n} \cdot \left[ 1 + \int_{1}^n x^{1-\gamma} dx \right] = O(n^{-1}).
	\end{equation*}
	Since there are $O(n)$ co-occurrence sets in $\mathcal{C}$ and they each are generated independently, in expectation, $v$ will have a constant degree.
\end{proof}

We similarly consider another regime of co-occurrence graphs where the number of co-occurrence sets is asymptotically smaller than $n$, but the co-occurrence sets are larger on average.
\begin{theorem}
	\label{thm:largeclique}
	Let $\mathcal{C}$ be a set of $O(n^\beta)$ co-occurrence sets, where $\beta \in (0,1)$, obtained from the power-law co-occurrence model with $\gamma = 1 + \beta$. Then the expected degree of each node will be a constant and
	$\mathbb{E}\left [ \sum_{c \in \mathcal{C}} |c|^2\right] = O(n^{3})$.
\end{theorem}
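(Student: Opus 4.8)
The plan is to mirror, essentially verbatim, the computation used in the proof of the preceding theorem, the only substantive difference being that now $\gamma = 1+\beta$ lies in $(1,2)$ rather than in $(2,3)$, which changes which of the relevant power sums are bounded and which diverge (and at what polynomial rate). First I would record the one structural fact that makes everything clean: since $\gamma > 1$, the normalizing constant satisfies
\[
C = \Big(\sum_{k=1}^n k^{-\gamma}\Big)^{-1} = \Theta(1),
\]
because $\sum_{k=1}^{\infty} k^{-\gamma} = \zeta(\gamma)$ converges, so $C$ is bounded between $1$ and $1/\zeta(\gamma)$.

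Next I would estimate $\mathbb{E}[K^2]$ for a single randomly drawn set size $K$. Since $2-\gamma = 1-\beta \in (0,1)$, the terms $k^{2-\gamma}$ are increasing, so an integral comparison gives
\[
\mathbb{E}[K^2] = C\sum_{k=1}^n k^{2-\gamma} \le C\Big(1 + \int_1^n x^{2-\gamma}\,dx\Big) = O(n^{3-\gamma}) = O(n^{2-\beta}).
\]
Because all $O(n^\beta)$ co-occurrence sets are generated independently and identically, linearity of expectation then yields $\mathbb{E}\big[\sum_{c\in\mathcal{C}}|c|^2\big] = |\mathcal{C}|\cdot\mathbb{E}[K^2] = O(n^\beta)\cdot O(n^{2-\beta}) = O(n^2)$, which is in particular $O(n^3)$ as claimed; the stated bound is slack, the computation actually delivers $\Theta(n^2)$, and this already suffices to meet the hypothesis $\sum_c|c|^2 = \Omega(n^{1+\delta})$ of Theorem~\ref{thm:cooc}.

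For the expected degree I would repeat the earlier inclusion-probability calculation: conditioned on $|c| = k$, a fixed node $v$ lies in $c$ with probability $\binom{n-1}{k-1}/\binom{n}{k} = k/n$, so
\[
\mathbb{P}[v \in c] = \sum_{k=1}^n \mathbb{P}[|c| = k]\cdot \frac{k}{n} = \frac{C}{n}\sum_{k=1}^n k^{1-\gamma} = \frac{C}{n}\sum_{k=1}^n k^{-\beta}.
\]
Here the exponent $-\beta$ lies in $(-1,0)$, so the sum diverges but only as $\sum_{k=1}^n k^{-\beta} = O(n^{1-\beta})$ by integral comparison, giving $\mathbb{P}[v\in c] = O(n^{-\beta})$. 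Summing over the $O(n^\beta)$ independently generated sets, the expected co-occurrence degree of $v$ is $O(n^\beta)\cdot O(n^{-\beta}) = O(1)$, which completes the argument.

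There is no genuine obstacle; the proof is a routine adaptation of the previous one. The only point requiring care is the bookkeeping on the three power-law exponents that appear --- $\gamma$ (normalizing constant), $2-\gamma$ (for $\mathbb{E}[K^2]$), and $1-\gamma=-\beta$ (for the inclusion probability) --- since in this parameter regime they fall in different ranges than before, so one must re-derive which sums are bounded, which diverge, and at what rate, rather than quoting the earlier estimates.
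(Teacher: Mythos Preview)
Your proposal is correct and follows essentially the same approach as the paper: estimate $\mathbb{E}[K^2]$ via an integral comparison, multiply by $|\mathcal{C}|$ using linearity of expectation, and compute $\mathbb{P}[v\in c]$ via the same conditioning argument to bound the expected degree. Your observation that the computation actually yields $O(n^2)$ rather than $O(n^3)$ is right; the paper's displayed exponent $\beta+4-\gamma$ appears to be a slip for $\beta+3-\gamma$, which with $\gamma=1+\beta$ gives $2$, not $3$.
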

\begin{proof}
	Again let $K$ be a random variable representing the co-occurrence set size. We have
	\begin{align*}
	\mathbb{E}[K^2] = C \cdot \sum_{k = 1}^n k^{2-\gamma}  = O(n^{3- \gamma}) \implies \mathbb{E}\left [ \sum_{c \in \mathcal{C}} |c|^2\right] = O(n^{\beta + 4 - \gamma}) = O(n^3).
	\end{align*}
	For a node $v \in V$ and a randomly generated set $c$, the probability that $v$ will be in $c$ is
	\begin{equation*}
	\mathbb{P}[v \in c] = \sum_{k = 1}^n \mathbb{P}[|c| = k] \cdot \frac{{n-1 \choose k-1}}{{n \choose k} } = \frac{C}{n} \sum_{k = 1}^n k^{1-\gamma} = O(n^{2-\gamma-1}) = O(n^{-\beta})
	\end{equation*}
	Since there are $O(n^\beta)$ co-occurrence sets in $\mathcal{C}$, the expected degree of $v$ is a constant. 
\end{proof}
In Theorem~\ref{thm:largeclique}, the exponent of the power-law distribution is assumed to be directly related to the number of co-occurrence sets in $\mathcal{C}$. This assumption is included simply to ensure that we are in fact considering co-occurrence graphs with $O(n)$ nodes. We could alternatively consider a power-law distribution with exponent $\gamma \in (1,2)$ and generate $O(n^\beta)$ co-occurrence sets for any $\beta < 1-\gamma$. We simply note that in this regime, the expected average degree will be $o(1)$. Assuming we exclude isolated nodes, this will produce a co-occurrence graph with $o(n)$ nodes in expectation. Our techniques still apply in this setting, and we can produce augmented sparsifiers with $O(|\mathcal{C}| \cdot f(\varepsilon))$ nodes and $O(n \cdot d_\mathit{avg} \cdot f(\varepsilon)) = o(n \cdot f(\varepsilon))$ edges. When $|\mathcal{C}| = \Omega(n)$, then $d_\mathit{avg} = \Omega(1)$ and the number of edges in our augmented sparsifiers will have worse than linear dependence on $n$. However, in this regime we can still quickly obtain sparsifiers with $O(n\varepsilon^{-2})$ edges via implicit sparsification by using standard undirected sparsifiers.

More sophisticated models for generating co-occurrence graphs can also be derived from existing models for projections of bipartite graphs~\cite{benson2020simple,bloznelis2013degree,bloznelis2017correlation}. These make it possible to set different distributions for node degrees in $V$  and highlight other classes of co-occurrence graphs satisfying the assumptions of Theorem~\ref{thm:cooc}. Here we have chosen to focus on the simplest model for illustrating classes of power-law co-occurrence graphs that satisfy the assumptions of the theorem.

\subsection{Tradeoffs in sparsification techniques}
There are several tradeoffs to consider when using different black-box sparsifiers for implicit co-occurrence sparsification. Standard sparsification techniques involve no auxiliary nodes, and have undirected edges, which is beneficial in numerous applications. Also, the number of edges they require is independent of $d_\mathit{avg}$. Therefore, in cases where the average co-occurrence degree is larger than a constant, we obtain better theoretical improvements using standard sparsifiers.

On the other hand, in many settings, it is natural to assume the number of co-occurrences each node belongs to is a constant, even if some co-occurrences are very large. In these regimes, our augmented sparsifiers will have fewer edges that traditional sparsifiers due to a better dependence on $\varepsilon$. Our techniques are also deterministic and our sparsifiers are very easy to construct in practice. Edge weights for our sparsifiers are easy to determine in $O(f(\varepsilon))$ time for each co-occurrence group using Algorithm~\ref{alg:findbest} (or Algorithm~\ref{alg:clique}) coupled with Lemma~\ref{lem:pl2ccb}. The bottleneck in our construction is simply visiting each node in a set $c$ to place edges between it and the auxiliary nodes. Even in cases where there are no asymptotic reductions in theoretical runtime, our techniques provide a simple and highly practical tool for solving cut problems on co-occurrence data.


\section{Approximate Cardinality-based DSFM}
\label{sec:runtime}
Appendix~\ref{app:general} shows how our sparse reduction techniques can be adjusted to apply even when splitting functions are asymmetric and are not required to satisfy $\vw_e(\emptyset) = \vw_e(e) = 0$ (the non-cut ignoring property). Section~\ref{sec:reductions} addresses the special case of symmetric and non-cut ignoring functions, as these assumptions are more natural for hypergraph cut problems~\cite{panli2017inhomogeneous,panli_submodular,veldt2020hypercuts}, and provide the clearest exposition of our main techniques and results. 
Furthermore, applying the generalized asymmetric reduction strategy in Appendix~\ref{app:general} to a symmetric splitting function would introduce twice as many edges as applying the reduction from Section~\ref{sec:reductions} designed explicitly for the symmetric case. Nevertheless, the same asymptotic upper bound of $O(\frac{1}{\varepsilon} \log k)$ edges holds for approximately modeling the more general splitting function on a $k$-node hyperedge. By dropping the symmetry and non-cut ignoring assumptions, our techniques lead to the first approximation algorithms for the more general problem of minimizing cardinality-based decomposable submodular functions.

\subsection{Decomposable Submodular Function Minimization}
Any submodular function can be minimized in polynomial time~\cite{Orlin2009,Iwata:2001:CSP:502090.502096,Iwata:2009:SCA:1496770.1496903}, but the runtimes for general submodular functions are impractical in most cases. A number of recent papers have developed faster algorithms for minimizing submodular functions that are sums of simpler submodular functions~\cite{ene2017decomposable,kolmogorov2012minimizing,stobbe2010efficient,panli2018revisiting,nishihara2014convergence,ene2015random,jegelka2011fast,jegelka2013reflection}. This is also know as decomposable submodular function minimization (DSFM). Many energy minimization problems from computer vision correspond to DSFM problems~\cite{kohli2009robust,kolmogorov2004what,freedman2005energy}.

Let $f\colon 2^V \rightarrow \mathbb{R}^+$ be a submodular function, such that for $S \subseteq V$,
\begin{equation}
f(S) = \sum_{e \in \mathcal{E}} \vf_e(S \cap e),
\end{equation}
where for each $e \in \mathcal{E}$, $\vf_e$ is a simpler submodular function with support only on a subset $e \subseteq V$.  We can assume without loss of generality that every $\vf_e$ is a non-negative function. The goal of DFSM is to find $\arg\min_S f(S)$. The terminology used for problems of this form differs depending on the context. We will continue to refer to $\mathcal{E}$ as a hyperedge set, $V$ as a node set, $\vf_e$ as generalized splitting functions, and $f$ as some type of generalized hypergraph cut function. 

Cardinality-based decomposable submodular function minimization (Card-DSFM) has received significant special attenation in previous work~\cite{kohli2009robust,kolmogorov2012minimizing,jegelka2011fast,jegelka2013reflection,stobbe2010efficient}. This variant of the problem assumes that each function $\vf_e$ satisfies $\vf_e(S) = g_e(|S|)$ for some concave function $g_e$  Unlike most existing work on generalized hypergraph cut functions~\cite{panli2017inhomogeneous,panli_submodular,veldt2020hypercuts}, research on DFSM does not typically assume that the functions $\vf_e$ are symmetric, and also do not assume that $\vf_e(\emptyset) = \vf_e(e) = 0$. In the appendix we demonstrate that our sparse reduction techniques can also be adapted to apply to this more general setting as well. This leads to a proof of Theorem~\ref{thm:card}.

\paragraph{Proof of Theorem~\ref{thm:card}}
Lemma~\ref{lem:upperlog} shows that every cardinality-based submodular function $\vf_e$ (which does not need to be symmetric or satisfy $\vf_e(e) = \vf_e(\emptyset) = 0$ ) can be approximately modeled by a combination of $O(\frac{1}{\varepsilon}\log k)$ asymmetric CB-gadgets. This implies that any instance of Card-DSFM can be reduced to an $s$-$t$ cut problem on a graph with $N = O(n + \frac{1}{\varepsilon}\sum_{e \in \mathcal{E}} \log |e|)$ nodes and $M = O(n+\frac{1}{\varepsilon}\sum_{e \in \mathcal{E}} \log |e|)$ edges. Applying a recent $\tilde{O}(M+N^{1.5})$-time algorithm for the maximum $s$-$t$ flow problem of van den Brand et al.~\cite{brand2021minimum} yields the overall runtime guarantee of $\tilde{O}\left( \frac{1}{\varepsilon} \sum_{e \in \mathcal{E} }|e| + (n + \frac{R}{\varepsilon})^{3/2} \right)$. 
If we wish to find the optimal solution to an instance of Card-DSFM, we can also set $\varepsilon = 0$ and obtain a reduced graph with $O(\sum_{e \in \mathcal{E} } |e|)$ nodes and $O(\sum_{e \in \mathcal{E} } |e|^2)$ edges. The resulting runtime will then be $\tilde{O}\left(\sum_{e \in \mathcal{E} } |e|^2 + \left(\sum_{e \in \mathcal{E} } |e|\right)^{3/2} \right)$. 

\subsection{Runtime Comparison for Cardinality-Based DFSM}
We refer to our approximation algorithm for Card-DSFM as \textsc{SparseCard}, since it depends on sparse reduction techniques. In the remainder of the section, we provide a careful runtime comparison between \textsc{SparseCard} and competing runtimes for Card-DSFM. We focus on each runtime's dependence on $n = |V|$, $ R = |E|$, and support sizes $|e|$, and use $\tilde{O}$ notation to hide logarithmic factors of $n$, $R$, and $1/{\varepsilon}$. To easily compare weakly polynomial runtimes, we assume that each $\vf_e$ has integer outputs, and assume that $\log (\max_S \, f(S))$ is small enough that it can also be absorbed by $\tilde{O}$ notation. Our primary goal is to highlight the runtime improvements that are possible when an approximate solution suffices. Among algorithms for DSFM, \textsc{SparseCard} is unique in its ability to quickly find solutions with a priori multiplicative approximation guarantees. Previous approaches for DSFM focus on either obtaining exact solutions, or finding a solution to within an \emph{additive} approximation error $\epsilon > 0$~\cite{axiotis2021decomposable,ene2017decomposable,panli2018revisiting,jegelka2013reflection}. In the latter case, setting $\epsilon$ small enough will guarantee an optimal solution in the case of integer output functions. However, these results provide no a priori multiplicative approximation guarantee, which is the traditional focus of approximation algorithms. Furthermore, using a larger value of $\epsilon$ for these additive approximations only improves runtimes in logarithmic terms. 
In constrast, setting $\varepsilon > 0$ will often lead to substantial runtime decreases for \textsc{SparseCard}.

\paragraph{Competing runtime guarantees.}
Table~\ref{tab:dfsm} lists runtimes for existing methods for DSFM. We have listed the asymptotic runtime for \textsc{SparseCard} when applying the recent maximum flow algorithm of van den Brand et al.~\cite{brand2021minimum}. While this leads to the best theoretical guarantees for our method, asymptotic runtime improvements over competing methods can also be shown using alternative fast algorithms for maximum flow~\cite{gao2021fully,lee2014path,goldberg1998beyond}.
%
%
For the submodular flow algorithm of Kolmogorov~\cite{kolmogorov2012minimizing}, we have reported the runtime guarantee provided specifically for Card-DSFM. While other approaches have frequently been applied to Card-DSFM~\cite{ene2017decomposable,stobbe2010efficient,panli2018revisiting,jegelka2013reflection}, runtimes guarantees for this case have not been presented explicitly and are more challenging to exactly pinpoint. Runtimes for most algorithms depend on certain oracles for solving smaller minimization problems at functions $\vf_e$ in an inner loop.
For $e \in E$, let $\mathcal{O}_e$ be a \emph{quadratic minimization oracle}, which for an arbitrary vector $w$ solves $\min_{y \in B(\vf_e)} \|y + w \|$ where $B(\vf_e)$ is the base polytope of the submodular function $\vf_e$ (see~\cite{ene2017decomposable,Bach:2013:LSF:2602000,jegelka2013reflection} for details). Let $\theta_e$ be the time it take to evaluate the oracle at $e \in E$, and define $\theta_\mathit{max} = \max_{e \in E} \theta_e$ and $\theta_\mathit{avg} = \frac{1}{R} \sum_{e \in E} \theta_e$. Although these oracles admit faster implementations in the case of concave cardinality functions, it is not immediately clear from previous work what is the best possible runtime. When $w =0$, solving $\min_{y \in B(\vf_e)} \|y + w \|$ takes $O(|e| \log |e|)$ time~\cite{jegelka2013reflection}, so this serves as a best case runtime we can expect for the more general oracle $\mathcal{O}_e$ based on previous results. We note also that in the case of the function function $\vf_e(A) = |A| |e\backslash A|$, Ene et al.~\cite{ene2017decomposable} highlight that an $O(|e| \log |e| + |e| \tau_e)$ algorithm can be used, where $\tau_e$ denotes the time it take to evaluate $\vf_e(S \cap e)$ for any $S \subseteq e$. In our runtime comparisons will use the bound $\theta_e = \Omega(|e|)$, as it is reasonable to expect that any meaningful submodular function we consider should take a least linear time to minimize.

\paragraph{Fast approximate solutions ($\varepsilon > 0$).} Barring the regime where support sizes $|e|$ are all very small, the accelerated coordinate descent method (ACDM) of Ene et al.~\cite{ene2015random} provides the fastest previous runtime guarantee. For a simple parameterized runtime analysis, consider a DSFM problem where the average support size is $({1}/{R}) \sum_e |e| = \Theta(n^{\alpha})$ for $\alpha \in [0,1]$, and $R = \Theta(n^\beta)$, where $\beta \geq 1-\alpha$ must hold if we assume each $v \in V$ is in the support for at least one function $\vf_e$. 
An exact runtime comparison between \textsc{SparseCard} and ACDM depends on the best runtime for the oracle $\mathcal{O}_e$ for concave cardinality functions. If an $O(|e| \log |e|)$ oracle is possible, the overall runtime guarantee for ACDM would be $\Omega(n^{1+\alpha+\beta})$. Meanwhile, for a small constant $\varepsilon > 0$, \textsc{SparseCard} provides a $(1+\varepsilon)$-approximate solution in time $\tilde{O}(n^{\alpha+\beta} + \max\{ n^{3/2}, n^{3\beta/2}\})$, which will faster by at least a factor $\tilde{O}(\sqrt{n})$ whenever $\beta \leq 1$. When $\beta > 1$, finding an approximation with \textsc{SparseCard} is guaranteed to be faster whenever $R = o(n^{2+2\alpha})$. 
If the best case oracle  $\mathcal{O}_e$ for concave cardinality functions is $\omega(|e| \log |e|)$, the runtime improvement of our method is even more significant. 

\paragraph{Guarantees for exact solutions ($\varepsilon = 0$).} As an added bonus, running \textsc{SparseCard} with $\varepsilon = 0$ leads to the fastest runtime for finding \emph{exact} solutions in many regimes. In this case, we can guarantee \textsc{SparseCard} will be faster than ACDM when the average support size is $\Theta(n^{\alpha})$ and $R = o(n^{2-\alpha})$. \textsc{SparseCard} can also find exact solutions faster than other discrete optimization methods~\cite{kolmogorov2012minimizing,axiotis2021decomposable,fix2013structured} in wide parameter regimes. 
Our method has a faster runtime guarantee than the submodular flow algorithm of Kolmogorov~\cite{kolmogorov2012minimizing} in all parameter regimes, and has a better runtime guarantee than the strongly-polynomial IBFS algorithm~\cite{fix2013structured}  when $\sum_{e} |e| = o(n^4)$, which includes all cases where $R = o(n^3)$. Both variants of IBFS as well as the recent method of Axiotis et al.~\cite{axiotis2021decomposable} become impractical if even a \emph{single} function $\vf_e$ has support size $|e| = O(n)$ (even when the average support size is much smaller). In this case, runtimes for these methods are $\Omega(n^5)$. Even using the very loose bounds $\sum_e |e|^2 \leq Rn^2$ and $\sum_e |e| \leq nR$, our method is guaranteed to find exact solutions faster as long as $R = o(n^{7/3})$, with significant additional improvements when approximate solutions suffice. In the extreme case where $\max_e |e| = O(1)$, the algorithm of Axiotis et al.~\cite{axiotis2021decomposable} obtains the best theoretical guarantees, as it has the same asymptotic runtime as a single maximum flow computation on an $n$-node, $R$-edge graph. It is worth noting that in this regime, running \text{SparseCard} with $\varepsilon  = 0$ can exactly solve Card-DSFM with the same asymptotic runtime guarantee as long as $R = \Theta(n)$, and has the added practical advantage that it requires only one call to a maximum flow oracle.

The runtime guarantee for \textsc{SparseCard} when $\varepsilon = 0$ can be matched asymptotically by combining existing exact reduction techniques~\cite{kohli2009robust,stobbe2010efficient,veldt2020hypercuts} with fast maximum flow algorithms. However, our method has the practical advantage of finding the \emph{sparsest} exact reduction in terms of CB-gadgets. For example, this results in a reduced graph with roughly half the number of edges required if we were to apply our previous exact reduction techniques~\cite{veldt2020hypercuts}. {Analogously, while Stobbe and Krause~\cite{stobbe2010efficient}) showed that a concave cardinality function can be decomposed as a sum of modular functions plus a combination of $|e|-1$ threshold potentials, our approximation technique will find a linear combination with $\lfloor |e|/2\rfloor$ threshold potentials. This amounts to the observation that any $k+1$ points $\{i,g(i)\}$ can be joined by $\lfloor k/2\rfloor + 1$  lines instead of using $k$. Overall though, the most significant advantage of \textsc{SparseCard} over existing reduction methods is its ability to find fast approximate solutions with sparse approximate reductions.}

\section{Experiments}
\label{sec:experiments}
In addition to its strong theoretical guarantees, \textsc{SparseCard} is very practical and leads to substantial improvements in benchmark image segmentation problems and hypergraph clustering tasks. 
We focus on image segmentation and localized hypergraph clustering tasks that simultaneously include component functions of large and small support, which are common in practice~\cite{shanu2016min,ene2017decomposable,veldt2020hypercuts,Liu-2021-localhyper,purkait2016clustering}. 
Image segmentation experiments were run on a laptop with a 2.2 GHz Intel Core i7 processor and 8GB of RAM. For our local hypergraph clustering experiments, in order to run a larger number of experiments we used a machine with 4 x 18-core, 3.10 GHz Intel Xeon gold processors with 1.5 TB RAM.
We consider public datasets previously made available for academic research, and use existing open source software for competing methods.\footnote{Image datasets:~\url{http://people.csail.mit.edu/stefje/code.html}. Hypergraph clustering datasets:~\url{www.cs.cornell.edu/~arb/data/}. DSFM algorithms: from~\url{github.com/lipan00123/DSFM-with-incidence-relations}; Hypergraph clustering algorithms:~\url{github.com/nveldt/HypergraphFlowClustering}.}

\subsection{Benchmark Image Segmentation Tasks}
\textsc{SparseCard} provides faster approximate solutions for standard image segmentation tasks previously used as benchmarks for DSFM~\cite{jegelka2013reflection,panli2018revisiting,ene2017decomposable}. We consider the \emph{smallplant} and \emph{octopus} segmentation tasks
from Jegelka et al.~\cite{jegelka2011submodularity,jegelka2013reflection}. These amount to minimizing a decomposable submodular function on a ground set of size $ |V| = 427 \cdot 640 = 273280$, where each $v \in V$ is a pixel from a $427 \times 640$ pixel image and there are three types of component functions.
The first type are unary potentials for each pixel/node, i.e., functions of support size 1 representing each node's bias to be in the output set. The second type are pairwise potentials from a 4-neighbor grid graph; pixels $i$ and $j$ share an edge if they are directly adjacent vertically or horizontally. The third type are region potentials of the form $\vf_e(A) =|A||e\backslash A|$ for $A \subseteq e$, where $e$ represents a superpixel region. The problem can be solved via maximum flow even without sophisticated reduction techniques for cardinality functions, as a regional potential function on $e$ can be modeled by placing a clique of edges on $e$. We compute an optimal solution using this reduction.

\paragraph{Comparison against exact reductions}
Compared with the exact reduction method, running \textsc{SparseCard} with $\varepsilon > 0$ leads to much sparser graphs, much faster runtimes, and a posteriori approximation factors that are significantly better than $(1+\varepsilon)$. 
For example, on the \emph{smallplant} dataset, when $\varepsilon = 1.0$, the returned solution is within a factor $1.004$ of optimality, even though the a priori guarantee is a $2$-approximation. The sparse reduced graph has only $0.013$ times the number of edges in the exact reduced graph, and solving the sparse flow problem is over two orders of magnitude faster than solving the problem on the exact reduced graph (which takes roughly 20 minutes on a laptop). 
In Table~\ref{tab:sparsecard} we list the sparsity, runtime, and a posteriori guarantee obtained for a range of $\varepsilon$ values on the \emph{smallplant} dataset using the superpixel segmentation with 500 regions.
\begin{table}[t!]
	\caption{
		Results from \textsc{SparseCard} for different $\varepsilon>0$ on the \emph{smallplant} instance with 500 superpixels.
		Sparsity is the fraction of edges in the approximate graph reduction compared with the exact reduction. Finding the exact solution on the dense exact reduced graph took $\approx$20 minutes.
	}
	\label{tab:sparsecard}
	\centering
	\begin{tabular}{llllllll}
			\toprule
$\varepsilon$ &1.0 &0.2336 &0.0546 &0.0127 &0.003 &0.0007 &0.0002  \\
Approx.$-1$ &$4\cdot 10^{-3}$ &$2\cdot 10^{-3}$ &$6\cdot 10^{-4}$&$6\cdot 10^{-5}$ &$3\cdot 10^{-5}$ &$7\cdot 10^{-6}$ &$7\cdot 10^{-7}$ \\
Sparsity &0.013 &0.017 &0.02 &0.035 &0.06 &0.108 &0.196  \\
Runtime &4.1 &5.6 &6.7 &11.5 &24.3 &41.4 &74.3 \\
			\bottomrule
	\end{tabular}
\vspace{-.8\baselineskip}
\end{table} 

\begin{figure}[t]
		\includegraphics[width=.265\linewidth]{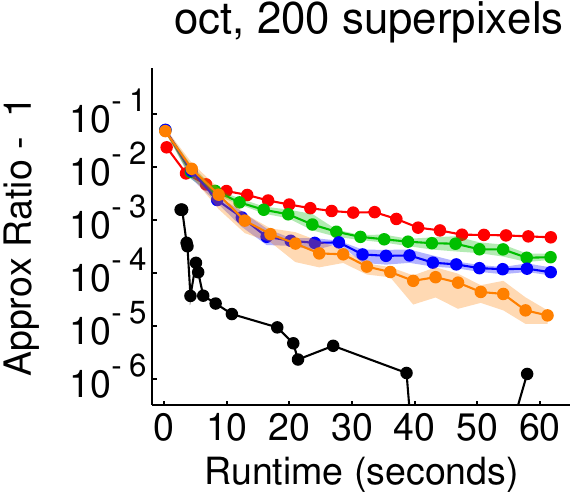} 
		\includegraphics[width=.235\linewidth]{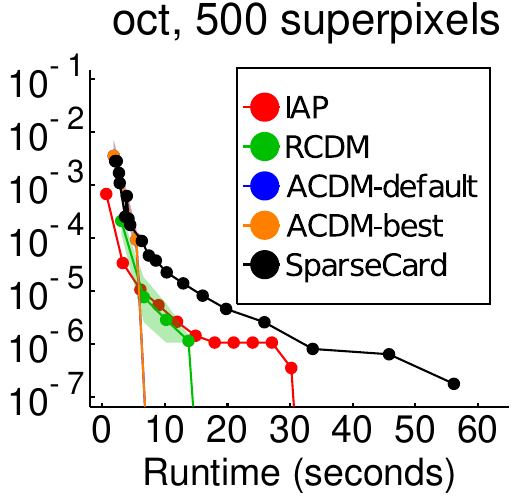} 
		\includegraphics[width=.235\linewidth]{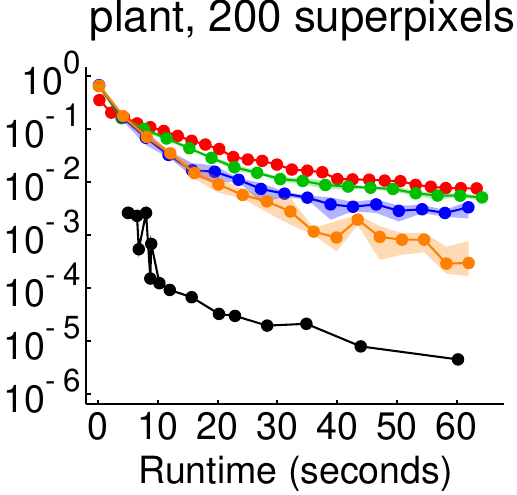} 
		\includegraphics[width=.235\linewidth]{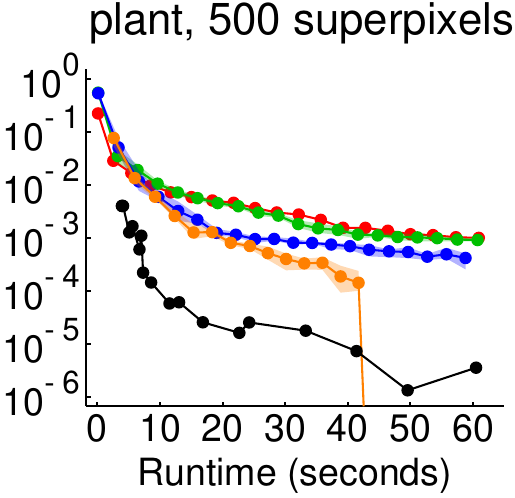}  
	\caption{Approximation factor minus 1 vs.\ runtime for solutions returned by \textsc{SparseCard} and competing methods on four image segmentation tasks. 
	\textsc{SparseCard} obtains faster approximate solutions on all but the easiest instance, where all methods obtain good results within a few seconds.
	We display the average of 5 runs for competing methods, with lighter colored region showing upper and lower bounds from these runs. \textsc{SparseCard} is deterministic and was run once for each $\varepsilon$ on a decreasing logarithmic scale. Our method maintains an advantage even against post-hoc best case parameters for competing approaches: ACDM-best is the best result obtained by running ACDM for a range of empirical parameters $c$ for each dataset and reporting the best result. The default is $c = 10$ (blue curve). Best post-hoc results for the plots from left to right were $c = 25, 10, 50, 25$. It is unclear how to determine the best $c$ in advance.
}
\vspace{-10pt}
\label{fig:images}
\end{figure}

\paragraph{Comparison against continuous optimization algorithms}
We also compare against recent C++ implementations of ACDM, RCDM, and Incidence Relation AP (an improved version of the standard AP method~\cite{nishihara2014convergence}) provided by Li and Milenkovic~\cite{panli2018revisiting}. These use the divide-and-conquer method of Jegelka et al.~\cite{jegelka2013reflection}, implemented specifically for concave cardinality functions, to solve the quadratic minimization oracle $\mathcal{O}_e$ for region potential functions. Although these continuous optimization methods come with no a priori approximation guarantees, we can compare them against \textsc{SparseCard} by computing a posteriori approximations obtained using intermediate solutions returned after every few hundred iterations.
In more detail, we extract the best level set $S_\lambda = \{i \colon x_i > \lambda\}$ from the vector of dual variables $\vx = (x_i)$ at various steps, and compute the approximation ratio $\argmin_\lambda f(S_\lambda)/f(S^*)$ where $S^*$ is the optimal solution determined via the exact max-flow solution. 
Figure~\ref{fig:images} displays approximation ratio versus runtime for four DSFM instances (two datasets $\times$ two superpixel segmentations). \textsc{SparseCard} was run for a range of $\varepsilon$ values on a decreasing logarithmic scale from $1$ to $10^{-4}$, and obtains significantly better results on all but the \emph{octopus} with 500 superpixels instance. This is the easiest instance; all methods obtain a solution within a factor $1.001$ of optimality within a few seconds. ACDM depends on a hyperparameter $c$ controlling the number of iterations in an outer loop. Even when we choose the best post-hoc $c$ value for each dataset, \textsc{SparseCard} maintains its overall advantage. Appendix~\ref{app:dsfm} provides additional details regarding the competing algorithms and their parameter settings.

We focus on comparisons with continuous optimization methods rather than other discrete optimization methods, as the former are better equipped for our goal of finding approximate solutions to DSFM problems involving functions of large support. To our knowledge, no implementations for the methods of Kolmogorov~\cite{kolmogorov2012minimizing} or Axiotis et al.~\cite{axiotis2021decomposable} exist. 
Meanwhile, IBFS~\cite{fix2013structured} is designed for finding exact solutions when all support sizes are small. Recent empirical results~\cite{ene2017decomposable}  confirm that this method is not designed to handle the large region potential functions we consider here.

\subsection{Hypergraph local clustering}
Graph reduction techniques have been frequently and successfully used as subroutines for hypergraph local clustering and semi-supervised learning methods~\cite{Liu-2021-localhyper,veldt2020hyperlocal,panli2017inhomogeneous,yin2017local}.
Replacing exact reductions with our approximate reductions can lead to significant runtime improvements without sacrificing on accuracy, and opens the door to running local clustering algorithms on problems where exact graph reduction would be infeasible. We illustrate this by using \textsc{SparseCard} as a subroutine for a method we previously designed called~\textsc{HyperLocal}~\cite{veldt2020hyperlocal}. This algorithm finds local clusters in a hypergraph by repeatedly solving hypergraph minimum $s$-$t$ cut problems, which could also be viewed as instances of Card-DSFM. \textsc{HyperLocal} was originally designed to handle only the \emph{$\delta$-linear} penalty $\vf_e(A) = \min  \{ |A|, |e \backslash A|, \delta\}$, for parameter $\delta \geq 1$, which can already be modeled sparsely with a single CB-gadget. \textsc{SparseCard} makes it possible to sparsely model any concave cardinality penalty. We specifically use approximate reductions for the weighted clique penalty $\vf_e(A) = (|e|-1)^{-1}|A||e \backslash A|$, 
the square root penalty $\vf_e(A) = \sqrt{\min\{|A|, |e \backslash A|\}}$, 
and the sublinear power function penalty $\vf_e(A) = (\min\{|A|, |e \backslash A|\})^{0.9}$,
all of which require $O(|e|^2)$ edges to model exactly using previous reduction techniques. Weighted clique penalties in particular have been used extensively in hypergraph clustering~\cite{Agarwal2006holearning,yin2017local,kumar2020modularity,zien1999}, including by methods specifically designed for local clustering and semi-supervised learning~\cite{jianbo2018,yin2017local,Zhou2006learning}. 

\paragraph{Stackoverflow hypergraph}
We consider a hypergraph clustering problem where nodes are 15.2M questions on \verb|stackoverflow.com| and each of the 1.1M hyperedges defines a set of questions answered by the same user. The mean hyperedge size is 23.7, the maximum size is over 60k, and there are 2165 hyperedges with at least 1000 nodes.  Questions with the same topic tag (e.g., ``common-lisp'') constitute small labeled clusters in the dataset. We previously showed that \textsc{HyperLocal} can detect clusters quickly with the $\delta$-linear penalty by solving localized $s$-$t$ cut problems near a seed set. Applying exact graph reductions for other concave cut penalties is infeasible, due to the extremely large hyperedge sizes, and we found that using a clique expansion after simply removing large hyperedges performed poorly~\cite{veldt2020hyperlocal}. Using \textsc{SparseCard} as a subroutine opens up new possibilities.

\paragraph{Experimental setup and parameter settings}
We follow our previous experimental setup~\cite{veldt2020hyperlocal} for detecting localized clusters in the Stackoverflow hypergraph. We focus on a set of 45 local clusters, all of which are question topics involving between 2,000 and 10,000 nodes.
For each cluster, we generate a random seed set by selecting 5\% of the nodes in the target cluster uniformly at random, and then add neighboring nodes to the seed set to grow it into a larger input set $Z$ to use for~\textsc{HyperLocal} (see~\cite{veldt2020hyperlocal} for details). We set $\delta = 5000$ for the $\delta$-linear hyperedge cut function and set a locality parameter for HyperLocal to be $1.0$ for all experiments. With this setup, using~\textsc{HyperLocal} with the $\delta$-linear penalty will then reproduce our original experimental setup~\cite{veldt2020hyperlocal}. Our goal is to show how using \textsc{SparseCard} leads to fast and often improved results for alternative penalties that could not previously been used.

\begin{figure}[t]
	\centering
	\includegraphics[width = .9\linewidth]{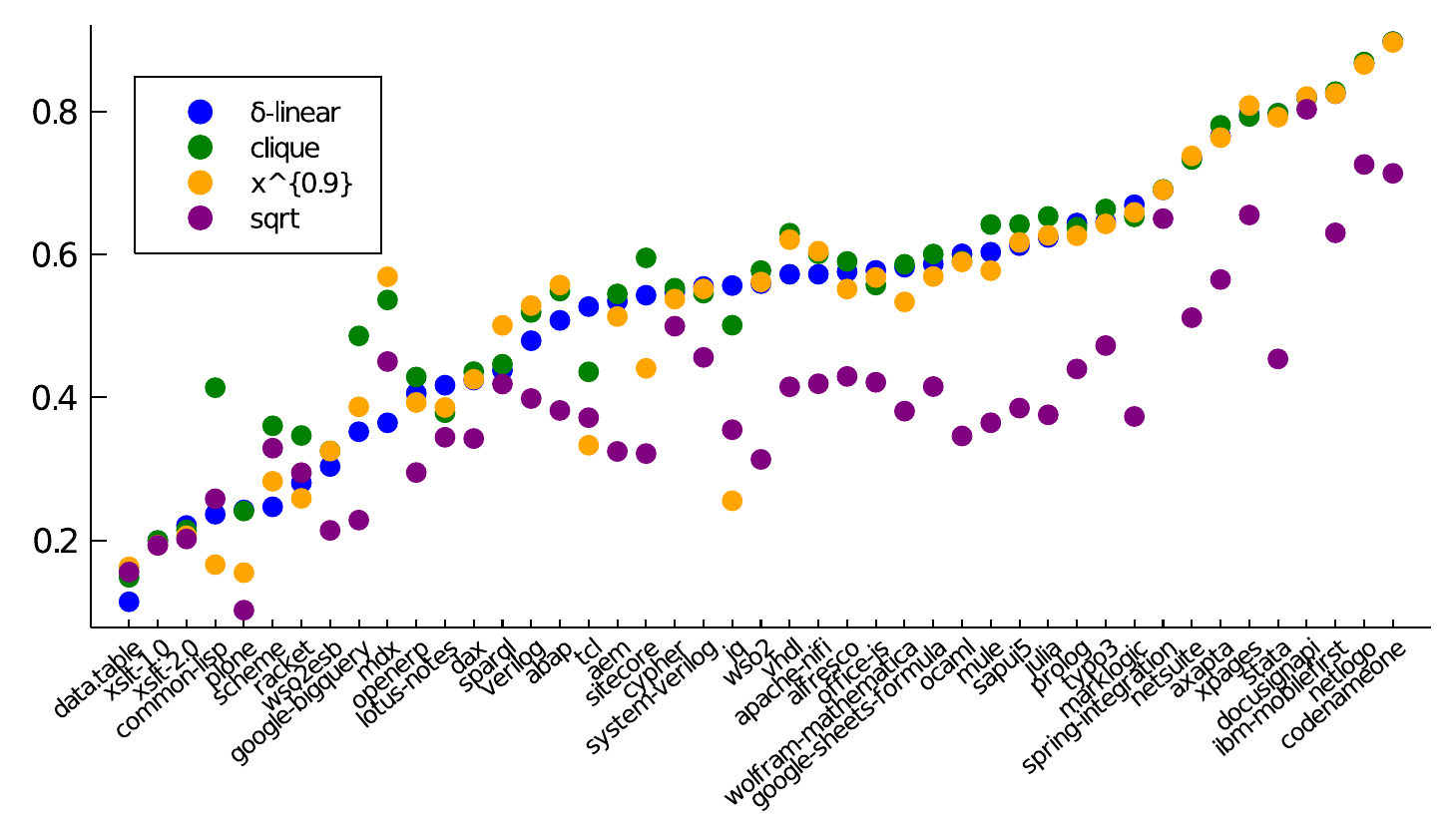}
	\caption{Average F1 score obtained for each localized clustering using 4 different hyperedge cut penalties. For clique, $x^{0.9}$, and the square root penalties, we used an approximate reduction with $\varepsilon = 1.0$. The clique penalty had the highest average F1 score on 26 clusters, the $\delta$-linear had the highest on 10 clusters, and $x^{0.9}$ had the highest average score on the remaining 9 clusters.}
	\label{fig:meanplots}
\end{figure}

\paragraph{Experimental results}
In Figure~\ref{fig:meanplots} we show the mean F1 score for each cluster (across the ten random seed sets) obtained by the $\delta$-linear penalty and the three alternative penalties when $\varepsilon = 1.0$. The clique penalty obtained the highest average F1 score on 26 clusters, the $\delta$-linear obtained the highest average score on 10 clusters, and the sublinear penalty obtained the best average score on the remaining 9 clusters. Importantly, cut penalties that previously could not be used on this dataset (\emph{clique}, \emph{$x^{0.9}$}) obtain the best results for most clusters. The square root penalty does not perform particularly well on this dataset, but it is instructive to consider its runtime (Figure~\ref{sqrtrun}). Theorem~\ref{thm:tight} shows that asymptotically this function has a worst-case behavior in terms of the number of CB-gadgets needed to approximate it. We nevertheless obtain reasonably fast results for this penalty function, indicating that our techniques can provide effective sparse reductions for any concave cardinality function of interest. 

\begin{figure}[t!]
	\begin{subfigure}[t]{0.495\linewidth}
		\centering
		\includegraphics[width=\linewidth]{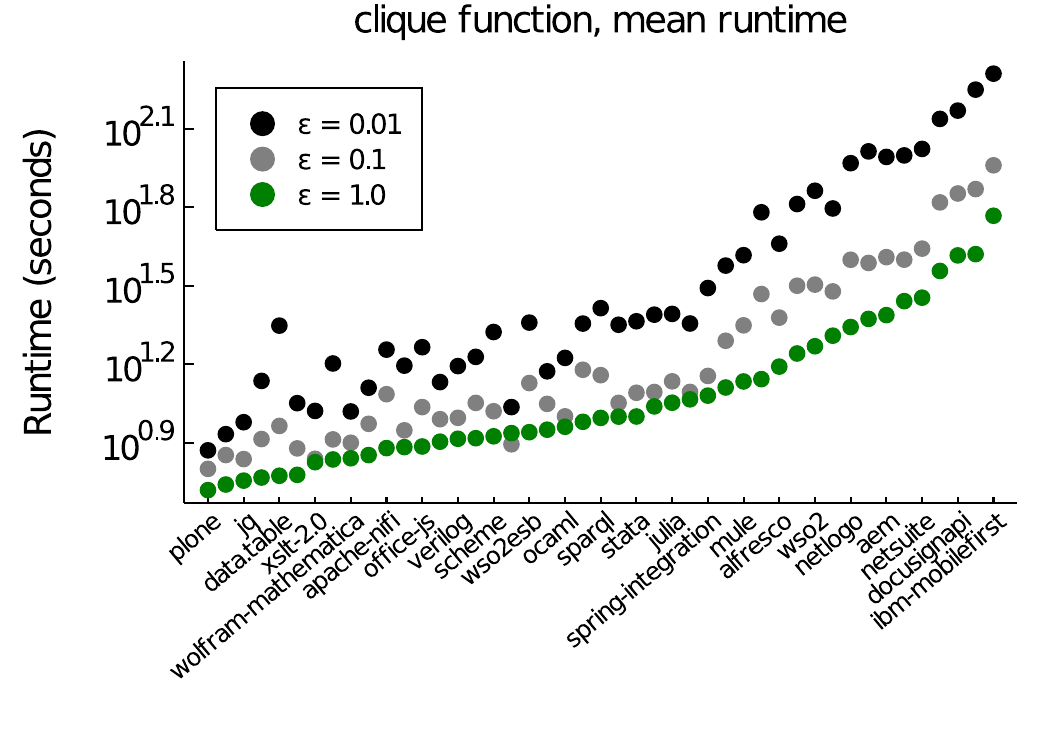}
		\caption{Clique Runtime}
	\end{subfigure}
	\begin{subfigure}[t]{0.495\linewidth}
	\centering
	\includegraphics[width=\linewidth]{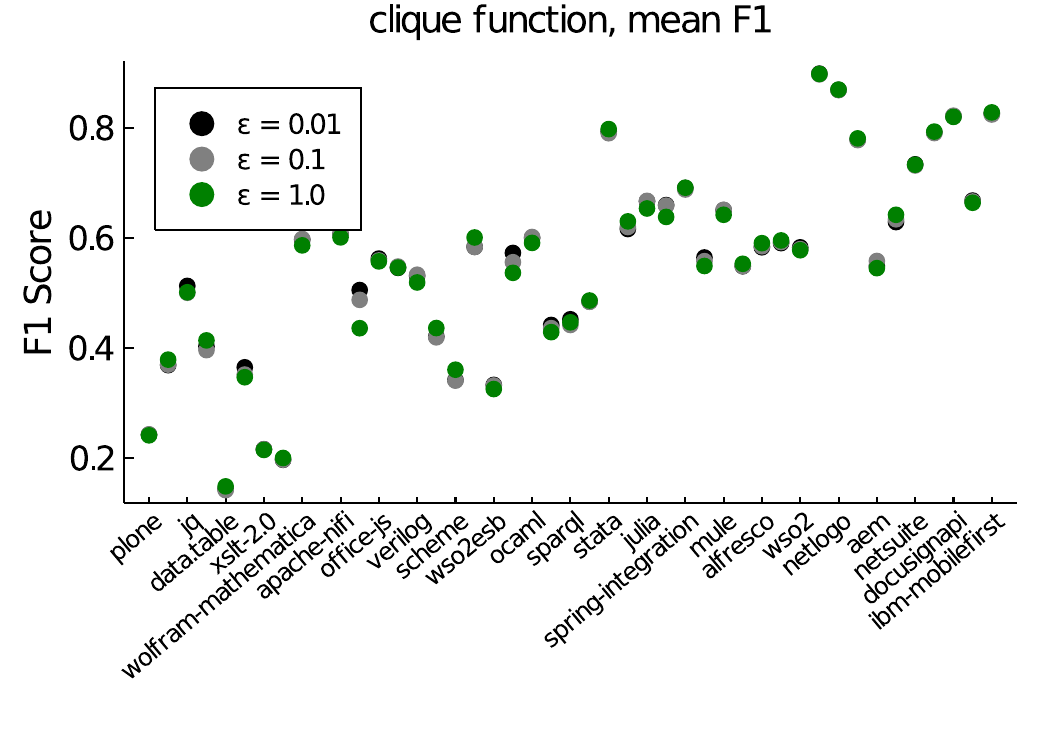}
	\caption{Clique F1 Scores}
\end{subfigure}
	\begin{subfigure}[t]{0.495\linewidth}
	\centering
	\includegraphics[width=\linewidth]{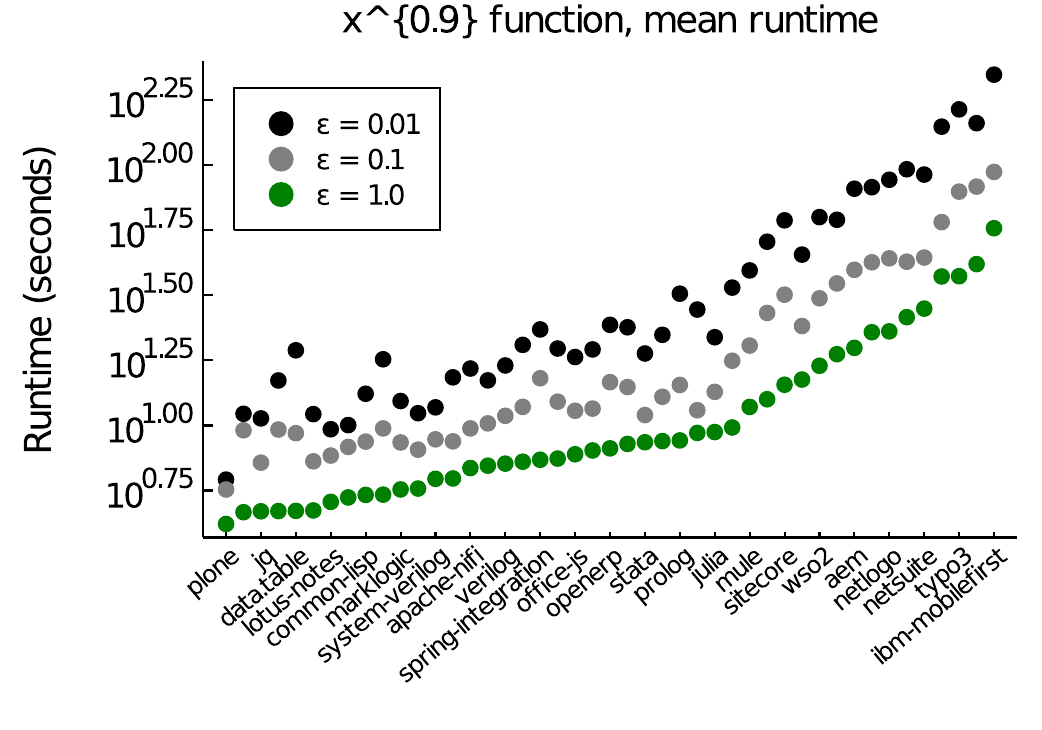}
	\caption{$x^{0.9}$ Runtime}
\end{subfigure}
\begin{subfigure}[t]{0.495\linewidth}
	\centering
	\includegraphics[width=\linewidth]{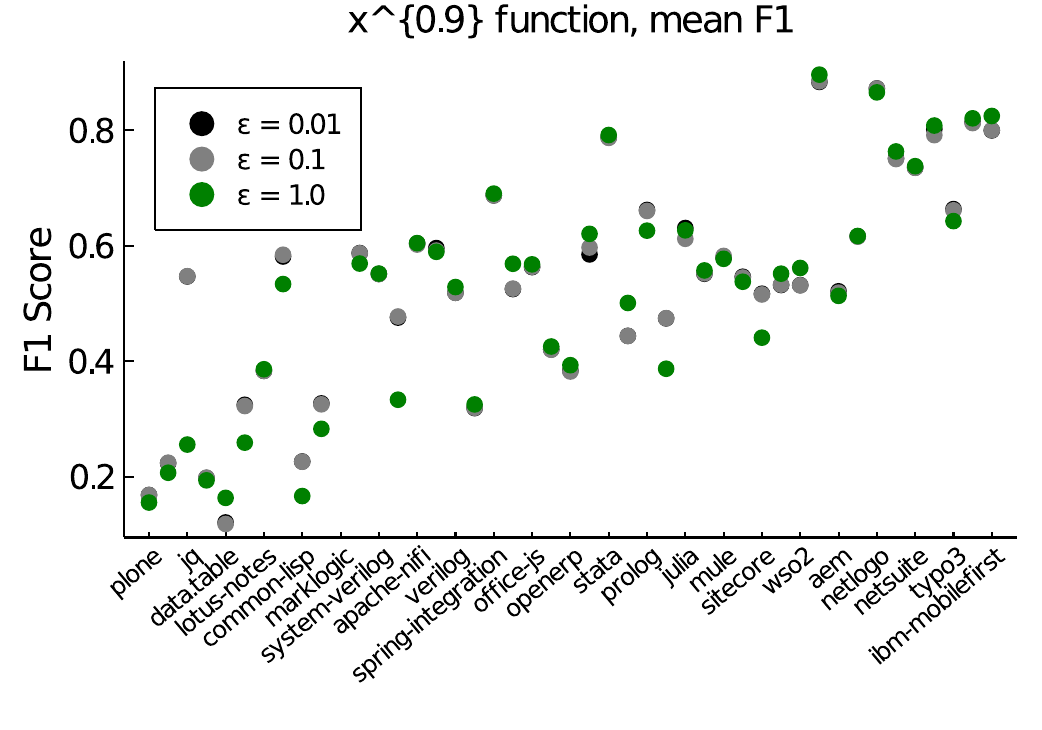}
	\caption{$x^{0.9}$ F1 Scores}
\end{subfigure}
	\begin{subfigure}[t]{0.495\linewidth}
	\centering
	\includegraphics[width=\linewidth]{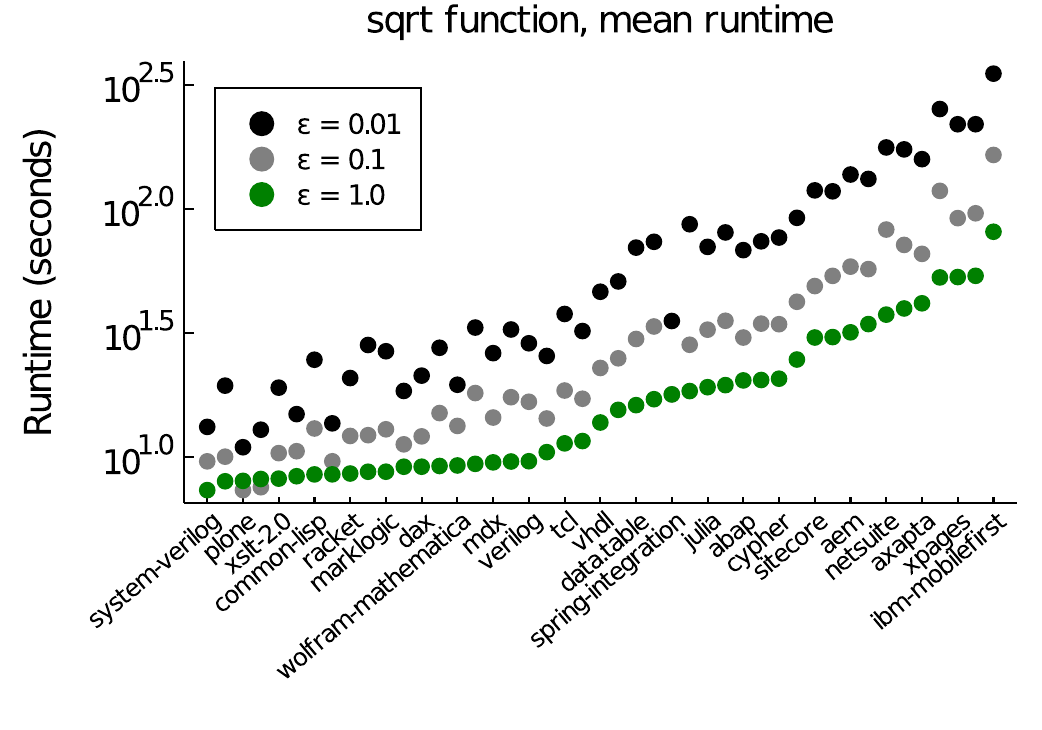}
	\caption{Square Root Runtime}
\end{subfigure}
\begin{subfigure}[t]{0.495\linewidth}
	\centering
	\includegraphics[width=\linewidth]{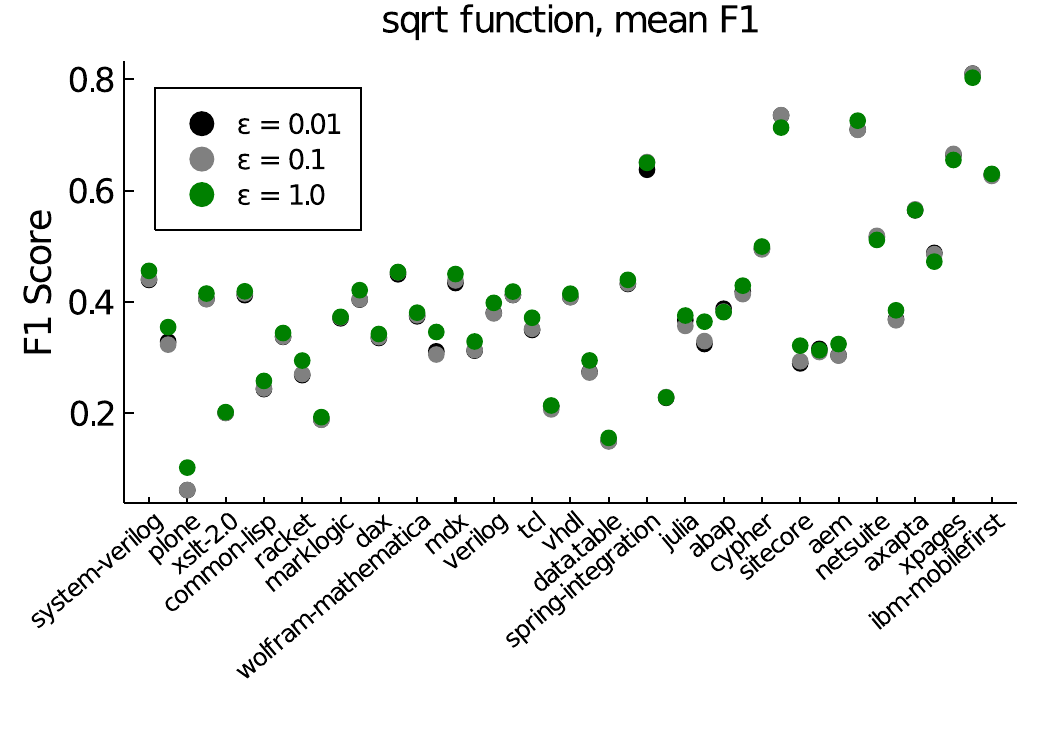}
	\caption{Square Root F1 Scores}
	\label{sqrtrun}
\end{subfigure}
	\caption{We display differences in runtimes and F1 scores when using different $\varepsilon$ values when approximating three hyperedge cut penalties. Using a larger $\varepsilon$ provides a significant runtime savings with virtually no affect on F1 scores. Clusters have been re-arranged in the horizontal axis for each hyperedge cut penalty for easier visualization.}
	\label{fig:runtimes}
\end{figure}
We also ran experiments with $\varepsilon = 0.1$ and $\varepsilon = 0.01$, which led to noticeable increases in runtime but only very slight changes in F1 scores (Figure~\ref{fig:runtimes}). This indicates why exact reductions are not possible in general, while also showing that our sparse approximate reductions serve as fast and very good proxies for exact reductions.

\section{Conclusion and Discussion}
\label{sec:conclusion}
We have introduced the notion of an augmented cut sparsifier, which approximates a generalized hypergraph cut function with a sparse directed graph on an augmented node set. Our approach relies on a connection we highlight between graph reduction strategies and piecewise linear approximations to concave functions. Our framework leads to more efficient techniques for approximating hypergraph $s$-$t$ cut problems via graph reduction, improved sparsifiers for co-occurrence graphs, and fast algorithms for approximately minimizing cardinality-based decomposable submodular functions. 

As noted in Section~\ref{sec:present}, an interesting open question is to establish and study analogous notions of augmented \emph{spectral} sparsification, given that spectral sparsifiers provide a useful generalization of cut sparsifiers in graphs~\cite{spielmanteng2011}. 
One way to define such a notion is to apply existing definitions of submodular hypergraph Laplacians~\cite{panli_submodular,yoshida2019cheeger} to both the original hypergraph and its sparsifier. This requires viewing our augmented sparsifier as a hypergraph with splitting functions of the form $\vw_e(A) = a \cdot\min \{|A|, |e \backslash A|,b \}$, corresponding to hyperedges with cut properties that can be modeled by a cardinality-based gadget. From this perspective, augmented spectral sparsification means approximating a generalized hypergraph cut function with another hypergraph cut function involving simplified splitting functions. While this provides one possible definition for augmented spectral sparsification, it is not clear whether the techniques we have developed can be used to satisfy this definition. Furthermore, it is not clear whether obtaining such a sparsifier would imply any immediate runtime benefits for approximating the spectra of generalized hypergraph Laplacians, or for solving generalized Laplacian systems~\cite{fujii2018polynomialtime,panli2020qdsfm}. We leave these as questions for future work.

While our work provides the optimal reduction strategy in terms of cardinality-based gadgets, this is more restrictive than optimizing over all possible gadgets for approximately modeling hyperedge cut penalties. Optimizing over a broader space of gadgets poses another interesting direction for future work, but is more challenging in several ways. First of all, it is unclear how to even define an optimal reduction when optimizing over arbitrary gadgets, since it is preferable to avoid both adding new nodes and adding new edges, but the tradeoff between these two goals is not clear. Another challenge is that the best reduction may depend heavily on the splitting function we wish to reduce, which makes developing a general approach difficult. A natural next step would be to at least better understand lower bounds on the number of edges and auxiliary nodes needed to model different cardinality-based splitting functions. While we do not have any concrete results, there are several indications that cardinality-based gadgets may be nearly optimal in many settings. For example, star expansions and clique expansions provide a more efficient way to model linear and quadratic splitting functions respectively, but modeling these functions with cardinality-based gadgets only increases the number of edges by roughly a factor two.

Finally, we find it interesting that using auxiliary nodes and directed edges makes it possible to sparsify the complete graph using only $O(n \varepsilon^{-1/2} \log \log \frac{1}{\varepsilon})$ edges, whereas standard sparsifiers require $O(n \varepsilon^{-2})$. We would like to better understand whether both directed edges \emph{and} auxiliary nodes are necessary for making this possible, or whether improved approximations are possible using only one or the other.

\appendix
\section{Proofs of Lemmas in Section~\ref{sec:reductions}}
\label{app:proofs}
\paragraph{Proof of Lemma~\ref{lem:ccb2pl}}
\begin{lemmaintro}
	Let $\hat{\vf}$ be the continuous extension for a function $\vw \in \mathcal{S}_r$, shown in~\eqref{cextt}. This function is in the class $\mathcal{F}_r$, and has exactly $J$ positive-sloped linear pieces, and one linear piece of slope zero.
\end{lemmaintro}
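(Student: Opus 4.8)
The plan is to read off all five defining properties of $\mathcal{F}_r$ directly from the closed form \eqref{cextt}, and then to count linear pieces by locating the breakpoints of $\hat{\vf}$. First I would fix notation: write $b_0 = 0$ and $b_{J+1} = \infty$, so that by conditions \eqref{ab1} and \eqref{ab2} the values $b_0 < b_1 < \cdots < b_J$ partition $[0,\infty]$ into the $J+1$ closed intervals $I_i := [b_i, b_{i+1}]$, $i = 0, 1, \ldots, J$, meeting only at their endpoints.

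Next I would dispatch membership in $\mathcal{F}_r$. Property~1 ($\hat{\vf}(0) = 0$) is immediate, since $b_j > 0$ forces $\min\{0, b_j\} = 0$ for every $j$. Property~2 (constant for $x \geq r$) uses \eqref{ab3}: for $x \geq r \geq b_J$ we have $\min\{x, b_j\} = b_j$ for all $j$, so $\hat{\vf}(x) = \sum_{j=1}^J a_j b_j$ is independent of $x$. Properties~3, 4, and~5 each follow from the corresponding elementary fact about the building blocks $x \mapsto \min\{x, b_j\}$ — each is non-decreasing, piecewise linear with the single breakpoint $b_j$, and concave (a pointwise minimum of two affine functions) — combined with $a_j > 0$: a nonnegative finite combination of non-decreasing (resp.\ piecewise linear, resp.\ concave) functions inherits the same property, and concavity on an interval yields continuity there.

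Then I would carry out the piece count. On the interior of $I_i$ one has $\min\{x, b_j\} = b_j$ for $j \leq i$ and $\min\{x, b_j\} = x$ for $j > i$, hence
\[
\hat{\vf}(x) = \sum_{j \leq i} a_j b_j + \Big(\sum_{j > i} a_j\Big) x,
\]
a linear function of slope $s_i := \sum_{j=i+1}^J a_j$. Because every $a_j > 0$, the slopes strictly decrease: $s_0 > s_1 > \cdots > s_{J-1} > s_J = 0$, the last value being an empty sum. Thus consecutive pieces genuinely differ, so $\hat{\vf}$ has exactly $J+1$ linear pieces; the first $J$, on $[0,b_1], [b_1,b_2], \ldots, [b_{J-1},b_J]$, have strictly positive slope, and the last, on $[b_J, \infty]$, has slope zero, which is the claim.

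I do not expect a genuine obstacle; the only step requiring care is the piece count, where the strict monotonicity $s_0 > s_1 > \cdots > s_J$ — itself just a restatement of $a_j > 0$ — is precisely what prevents two adjacent intervals from collapsing into a single linear piece (an undercount). The boundary cases deserve an explicit word: $i = 0$ gives the steepest piece $\big(\sum_j a_j\big) x$ through the origin, and $i = J$ gives the flat tail that, by \eqref{ab3}, already begins at $b_J \leq r$, re-confirming property~2.
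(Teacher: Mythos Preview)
Your proposal is correct and follows essentially the same approach as the paper: both write out $\hat{\vf}$ explicitly on each interval between consecutive $b_j$ values, identify the slope on the $i$th interval as the partial sum $\sum_{j>i} a_j$, and use positivity of the $a_j$ to conclude the slopes are strictly decreasing and hence that there are exactly $J$ positive-slope pieces plus one flat piece. The only cosmetic difference is that you derive concavity from ``a nonnegative combination of concave functions is concave,'' whereas the paper infers it at the end from ``piecewise linear with strictly decreasing slopes.''
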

\begin{proof}
	Define $b_0 = 0$ for notational convenience. The first three conditions in Definition~\ref{def:plfun} can be seen by inspection, recalling that $0< a_j$ and $0< b_j \leq r$ for all $j \in [J]$. Observe that $\hat{\vf}$ is linear over the interval $[b_{i-1}, b_{i})$ for $i \in [J]$, since for $x \in [b_{i-1}, b_{i})$,
	\begin{align*}
	\hat{\vf}(x) &= \sum_{j = 1}^J a_j \cdot \min \{ x, b_j \} = \sum_{j = 1}^{i-1} a_j  b_j + x \cdot \sum_{j = i}^J a_j \,.
	\end{align*}
	In other words, the $i$th linear piece of $\hat{\vf}$, defined over $x \in [b_{i-1}, b_{i})$ is given by
	$\hat{\vf}^{(i)}(x) = I_i + S_i x,$
	where the intercept and slope terms are given by
$	I_i = \sum_{j=1}^{i-1} a_j  b_j$ and  $S_i =\sum_{j = i}^J a_j$.
	For the first $J$ intervals of the form $[b_{i-1},b_i)$, the slopes are always positive but strictly decreasing. Thus, there are exactly $J$ positive sloped linear pieces. The final linear piece is a flat line, since $\hat{\vf}(x) = \sum_{j = 1}^J a_jb_j$ for all $x \geq b_J$.
	The concavity of $\hat{\vf}$ follows directly from the fact that it is a continuous and piecewise linear function with decreasing slopes.
\end{proof}

\paragraph{Proof of Lemma~\ref{lem:pl2ccb}}
\begin{lemmaintro}
	Let $\vf$ be a function in $\mathcal{F}_r$ with $J+1$ linear pieces. Let $b_i$ denote the $i$th breakpoint of $\vf$, and $m_i$ denote the slope of the $i$th linear piece of $\vf$. Define vectors $\va, \vb \in \mathbb{R}^J$ where $\vb(i) = b_i$ and $\va(i) = a_i = m_i - m_{i+1}$ for $i \in [J]$. If ${\vw}$ is the $r$-CCB function parameterized by vectors $(\va, \vb)$, then $\vf$ is the continuous extension of ${\vw}$.
\end{lemmaintro}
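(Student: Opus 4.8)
The plan is to treat this as the exact inverse of the construction in Lemma~\ref{lem:ccb2pl} and verify directly that the continuous extension of $\vw$ coincides with $\vf$ on each linear piece. First I would check that the vectors $(\va,\vb)$ described in the statement actually satisfy conditions~\eqref{ab1}--\eqref{ab3}, so that $\vw$ is a legitimate $r$-CCB function. Since $\vf\in\mathcal{F}_r$ is concave and piecewise linear, its slopes are weakly decreasing; and because $\vf$ is assumed to have exactly $J+1$ \emph{distinct} linear pieces, consecutive slopes are strictly decreasing, so $a_i = m_i - m_{i+1} > 0$ for each $i\in[J]$. The breakpoints satisfy $0 < b_1 < b_2 < \cdots < b_J$, which gives~\eqref{ab2}. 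Finally, since $\vf$ is constant for $x \ge r$ (property 2 of Definition~\ref{def:plfun}), the last, flat linear piece begins no later than $x = r$, i.e.\ $b_J \le r$, which is~\eqref{ab3}. This same observation also pins down the final slope as $m_{J+1} = 0$.

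Next I would invoke~\eqref{cextt} together with the description of the linear pieces of a continuous extension obtained inside the proof of Lemma~\ref{lem:ccb2pl}: the continuous extension $\hat{\vf}$ of $\vw$ is piecewise linear with the \emph{same} breakpoints $b_1 < \cdots < b_J$, and on the $i$th interval it has slope $S_i = \sum_{j=i}^{J} a_j$. A telescoping computation gives $S_i = \sum_{j=i}^{J}(m_j - m_{j+1}) = m_i - m_{J+1} = m_i$, using that the final slope $m_{J+1}$ is zero. Hence $\hat{\vf}$ and $\vf$ have identical breakpoints and identical slopes on every corresponding linear piece.

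To finish, I would note that two continuous piecewise-linear functions on $[0,\infty]$ that share the same breakpoints, have the same slope on each corresponding piece, and agree at one point must be equal everywhere; here they agree at $x = 0$, since $\vf(0) = 0$ by Definition~\ref{def:plfun} while $\hat{\vf}(0) = \sum_{j} a_j \min\{0,b_j\} = 0$ by~\eqref{cextt}. Therefore $\vf = \hat{\vf}$ on all of $[0,\infty]$, which is exactly the claim that $\vf$ is the continuous extension of $\vw$.

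The only place I expect to need care is the bookkeeping at the right endpoint: making sure that "$J+1$ linear pieces" genuinely forces exactly $J$ breakpoints with $b_J \le r$ and a terminal slope of $0$, so that the telescoping sum closes to $m_i$ and condition~\eqref{ab3} is not violated. Everything else — the positivity of the $a_i$ from concavity, the ordering of breakpoints, and the matching of slopes — is routine.
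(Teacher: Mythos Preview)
Your proposal is correct and follows essentially the same approach as the paper: verify that $(\va,\vb)$ satisfy the $r$-CCB conditions, invoke Lemma~\ref{lem:ccb2pl} to know the structure of the continuous extension $\hat{\vf}$, and then check $\hat{\vf}=\vf$ piece by piece. The only cosmetic difference is in the final verification: the paper checks that $\hat{\vf}$ and $\vf$ agree at each breakpoint (computing $\hat{\vf}(b_1)=f_1$ and then $\hat{\vf}(b_i)-\hat{\vf}(b_{i-1})=f_i-f_{i-1}$), whereas you match slopes via the telescoping identity $\sum_{j=i}^{J}(m_j-m_{j+1})=m_i$ together with agreement at $x=0$; these are equivalent computations.
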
 
\begin{proof}
	 Since $\vf$ is in $\mathcal{F}_r$, it has $J$ positive-sloped linear pieces and one flat linear piece, and therefore it has exactly $J$ breakpoints: $0 < b_1 < b_2 < \hdots < b_J$.  Let $\vb = (b_j)$ be the vector storing these breakpoints. For convenience we define $b_0 = 0$, though $b_0$ is not stored in $\vb$. By definition, $\vf$ is constant for all $x \geq r$, which implies that $b_J \leq r$. 
	
	Let $f_i = \vf(b_i)$. For $i \in [J]$, the positive slope of the $i$th linear piece of $\vf$, which occurs in the range $[b_{i-1}, b_i]$, is given by
	\begin{equation}
	m_i = \frac{f_i - f_{i-1}}{b_{i} - b_{i-1}}.
	\end{equation}
	The $i$th linear piece of $\vf$ is given by
	\begin{equation}
	\vf^{(i)}(x) = m_i (x- b_{i-1}) + f_{i-1} \hspace{.5cm} \text{ for $x \in [b_{i-1},b_i]$}.
	\end{equation}
	The last linear piece of $\vf$ is a flat line over the interval $x \in [b_J, \infty)$, i.e., $m_{J+1} = 0$.
	Since $\vf$ has positive and strictly decreasing slopes, we can see that $a_i = m_i - m_{i+1} > 0$ for all $i \in [J]$.
	
	Let $\vw$ be the order-$J$ CCB function constructed from vectors $(\va, \vb)$, and let $\hat{\vf}$ be its resulting continuous extension:
	\begin{equation}
	\hat{\vf} = \sum_{j = 1}^J a_j \cdot \min \{x, b_j\}.
	\end{equation}
	We must check that $\hat{\vf} = \vf$. By Lemma~\ref{lem:ccb2pl}, we know that $\hat{\vf}$ is in $\mathcal{F}_r$ and has exactly $J+1$ linear pieces. The functions will be the same, therefore, if they share the same values at breakpoints. Evaluating $\hat{\vf}$ at an arbitrary breakpoint $b_i$ gives:
	\begin{equation}
	\hat{\vf}(b_i) = \left( \sum_{j = 1}^{i-1} a_j \cdot b_j \right) + b_i \cdot  \left( \sum_{j = i}^J a_j \right) = \left( \sum_{j = 1}^{i-1} a_j \cdot b_j \right) + b_i \cdot m_i .
	\end{equation}
	We first confirm that the functions coincide at the first breakpoint:
	\begin{align*}
	\hat{\vf}(b_1) 
	&= b_1 \cdot m_1 = b_1 \cdot \frac{f_1 - f_0}{b_1 - b_0} = b_1 \frac{f_1}{b_1} = f_1.
	\end{align*}
	For any fixed $i \in \{2, 3, \hdots , J\}$,
	\begin{align*}
	\hat{\vf}(b_i) - \hat{\vf}(b_{i-1}) &= \left( \sum_{j = 1}^{i-1} a_j   b_j \right) + b_i   m_i  - 
	\left( \sum_{j = 1}^{i-2} a_j   b_j \right) - b_{i-1}   m_{i-1}\\
	&= a_{i-1}   b_{i-1} + b_i   m_i - b_{i-1}   m_{i-1}\\
	&= (m_{i-1} - m_{i})   b_{i-1} +  b_i   m_i - b_{i-1}   m_{i-1}\\
	&= m_i (b_i - b_{i-1}) = f_{i} - f_{i-1}.
	\end{align*}
	Since $\vf(b_1) = \hat{\vf}(b_1)$ and $\vf(b_i) - \vf(b_{i-1}) = \hat{\vf}(b_i) - \hat{\vf}(b_{i-1})$ for $i \in \{2, 3, \hdots , t\}$, we have $\vf(b_i) = \hat{\vf}(b_i)$ for $i \in [J]$. Therefore, $\vf$ and $\hat{\vf}$ are the same piecewise linear function.
\end{proof}
\section{Sparsification for Generalized Splitting Functions}
\label{app:general}
In Sections~\ref{sec:spagap} and~\ref{sec:reductions} we focused on sparsification techniques for representing splitting functions that are symmetric and penalize only \emph{cut} hyperedges:
\begin{align*}
\vw_e(S) &= \vw(e \backslash S)  \text{ for all $S \subseteq e$}\\
\vw_e(e) &= \vw_e(\emptyset) = 0.
\end{align*}
These assumptions are standard for generalized hypergraph cut problems~\cite{veldt2020hypercuts, panli2017inhomogeneous, panli_submodular}, and lead to the clearest exposition of our main results. In this appendix, we extend our sparse approximation techniques so that they apply even if we remove these restrictions. This will allow us to obtain improved techniques for approximately solving a certain class of decomposable submodular functions (see Section~\ref{sec:runtime}). Formally, our goal is to minimize
\begin{equation}
\label{soscard}
\minimize_{S \subseteq V} f(S) = \sum_{e \in \mathcal{E}} \vw_e(S \cap e) ,
\end{equation}
where each $\vw_e$ is a submodular cardinality-based function, that is not necessarily symmetric and does not need to equal zero when the hyperedge $e$ is uncut.
Our proof strategy for reducing this more general problem to a graph $s$-$t$ cut problem closely follows the same basic set of steps used in Section~\ref{sec:reductions} for the special case.

\subsection{Submodularity Constraints for Cardinality-Based Functions}
We first provide a convenient characterization of general cardinality-based submodular functions. By \emph{general} we mean the splitting function does not need to be symmetric nor does it need to have a zero penalty when the hyperedge is uncut.
\begin{lemma}
	\label{lem:gencard}
	Let $\vw_e$ be a general submodular cardinality-based splitting function on a $k$-node hyperedge $e$, and let $w_i$ denote the penalty for any $A \subseteq e$ with $|A| = i$. Then for $i \in  \{1, 2, \hdots ,k-1\}$
	\begin{equation}
	\label{genconstraint}
	2w_i \geq w_{i-1} + w_{i+1}.
	\end{equation}
\end{lemma}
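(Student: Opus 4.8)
The plan is to derive~\eqref{genconstraint} directly from the submodularity inequality $\vw_e(A) + \vw_e(B) \geq \vw_e(A \cap B) + \vw_e(A \cup B)$, evaluated on a carefully chosen pair of subsets of $e$. Since $\vw_e$ is cardinality-based we may unambiguously write $\vw_e(A) = w_{|A|}$, so the entire task reduces to exhibiting sets $A, B \subseteq e$ with $|A| = |B| = i$, $|A \cap B| = i-1$, and $|A \cup B| = i+1$.

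First I would fix $i \in \{1, 2, \hdots, k-1\}$ and pick any $C \subseteq e$ with $|C| = i-1$; this is possible because $0 \leq i-1 \leq k-1$. Because $|e \backslash C| = k - (i-1) \geq 2$ (here we use $i \leq k-1$), we can choose two distinct elements $x, y \in e \backslash C$. Setting $A = C \cup \{x\}$ and $B = C \cup \{y\}$ then yields $|A| = |B| = i$, $A \cap B = C$ with $|C| = i-1$, and $A \cup B = C \cup \{x,y\}$ with $|A \cup B| = i+1$.

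Plugging these sets into the submodularity inequality and applying the cardinality-based property to each of the four terms gives $w_i + w_i \geq w_{i-1} + w_{i+1}$, which is exactly~\eqref{genconstraint}. There is essentially no obstacle here: the only point requiring care is that two distinct elements $x \neq y$ exist outside $C$, which is immediate from $i \leq k-1$. It is worth noting that, in contrast to Definition~\ref{def:scb}, we do not obtain (nor do we need) monotonicity or the normalization $w_0 = 0$; those properties followed from the symmetry and cut-ignoring assumptions that we have now dropped, whereas the discrete concavity inequality~\eqref{genconstraint} survives and is the sole structural constraint we will rely on in the remainder of the appendix.
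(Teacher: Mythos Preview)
Your proof is correct and takes essentially the same approach as the paper: both choose two size-$i$ subsets $A,B\subseteq e$ with $|A\cap B|=i-1$ and $|A\cup B|=i+1$ and apply submodularity. The paper simply fixes an enumeration $v_1,\hdots,v_k$ of $e$ and takes $A=\{v_1,\hdots,v_i\}$, $B=\{v_2,\hdots,v_{i+1}\}$, whereas you build the same configuration via an arbitrary $(i-1)$-set $C$ plus two distinct outside elements; the arguments are identical in substance.
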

\begin{proof}
Let $v_1, v_2, \hdots, v_k$ denote the nodes in the hyperedge. Submodularity means that for all $A, B \subseteq e$, $\vw(A) + \vw(B) \geq \vw(A\cup B) + \vw(A \cap B)$. In order to show inequality~\eqref{genconstraint}, simply set $A = \{v_1, v_2, \hdots, v_i\}$ and $B = \{v_2, v_2, \hdots , v_{i+1}\}$ and the result follows.
\end{proof}
To simplify our analysis, as we did for the symmetric case, we will define a set of functions that is virtually identical to these splitting functions on $k$-node hyperedges, but are defined over integers from $0$ to $k$ rather than on subsets of a hyperedge.
\begin{definition}
	A $k$-GSCB (Generalized Submodular Cardinality-Based) integer function is a function $\vw : \{0\} \cup [k] \rightarrow \mathbb{R}^+$ satisfying $2\vw(i) \geq \vw(i-1) + \vw(i+1)$ for all $i \in [k-1]$.
\end{definition}

\subsection{Combining Gadgets for Generalized SCB Functions}
Our goal is to show how to approximate $k$-GSCB integer functions using piecewise linear functions with few linear pieces. This in turn corresponds to approximating a hyperedge splitting function with a sparse gadget. In order for this to work for our more general class of splitting functions, we use a slight generalization of an asymmetric gadget we introduced in previous work~\cite{veldt2020hypercuts}.
\begin{definition}
	The asymmetric cardinality-based gadget (ACB-gadget) for a $k$-node hyperedge $e$ is parameterized by scalars $a$ and $b$ and constructed as follows:
	\begin{itemize}
		\item Introduce an auxiliary vertex $v_e$.
		\item For each $v \in e$, introduce a directed edge from $v$ to $v_e$ with weight $a \cdot (k-b)$, and a directed edge from $v_e$ to $v$ with weight $a \cdot b$.
		\end{itemize}
\end{definition}
The ACB-gadget models the following $k$-GSCB integer function:
\begin{equation}
\label{acbfun}
\vw_{a,b}(i) = a \cdot \min \{ i \cdot (k-b), (k-i) \cdot b \}.
\end{equation}
To see why, consider where we must place the auxiliary node $v_e$ when solving a minimum $s$-$t$ cut problem involving the ACB-gadget. If we place $i$ nodes on the $s$-side, then placing $v_e$ on the $s$-side has a cut penalty of $ab(k-i)$, whereas placing $v_e
$ on the $t$-side gives a penalty of $ai (k-b)$. To minimize the cut, we choose the smaller of the two options.

Previously we showed that asymmetric splitting functions can be modeled exactly by a combination of $k-1$ ACB-gadgets~\cite{veldt2020hypercuts}. As we did in Section~\ref{sec:reductions} for symmetric splitting functions, we will show here that a much smaller number of gadgets suffices if we are content to approximate the cut penalties of an asymmetric splitting function. In our previous work we enforced the constraint $\vw_e(\emptyset) = \vw_e(0) = 0$ even for asymmetric splitting functions, but we remove this constraint here. In order to model the cut properties of an arbitrary GSCB splitting function, we define a combined gadget involving multiple ACB-gadgets, as well as edges from each node $v \in e$ to the source and sink nodes of the graph. The augmented cut function for the resulting directed graph $\hat{G} = (V \cup \mathcal{A} \cup \{s,t\}, \hat{E})$ will then be given by $\cut_{\hat{G}}(S) = \min_{T \subseteq \mathcal{A}} \dircut_{\hat{G}} (\{s\} \cup S \cup T)$ for a set $S\subseteq V$, where $\dircut$ is the directed cut function on $\hat{G}$. Finding a minimum $s$-$t$ cut in $\hat{G}$ will solve objective~\eqref{soscard}, or equivalently, the cardinality-based decomposable submodular function minimization problem.
	
\begin{definition}
	\label{def:kcg}
	A $k$-CG function ($k$-node, combined gadget function) $\hat{\vw}$ of order $J$ is a $k$-GSCB integer function that is parameterized by scalars $z_0$, $z_k$, and $(a_j, b_j)$ for $j \in [J]$. The function has the form:
	\begin{equation}
	\label{eq:kcg}
	\hat{\vw}(i) = z_0 \cdot (k-i) + z_k\cdot i + \sum_{j=1}^J a_j \min \{ i \cdot (k-b_j), (k-i) \cdot b_j \}.
	\end{equation} 
	The scalars parameterizing $\hat{\vw}$ satisfy
	\begin{align*}
	&b_j > 0, a_j > 0 \text{ for all $j \in [J]$}\\
	&b_j < b_{j+1} \text{ for all $j \in [J-1]$}\\
	&b_J < k\\
	&z_0 \geq 0 \text{ and } z_k \geq 0.
	\end{align*}
\end{definition}
Conceptually, the function shown in~\eqref{eq:kcg} represents a combination of $J$ ACB-gadgets for a hyperedge $e$, where additionally for each node $v \in e$ we have place a directed edge from a source node $s$ to $v$ of weight $z_0$, and an edge from $v$ to a sink node $t$ with weight $z_k$. 

The continuous extension of the $k$-CG function~\eqref{eq:kcg} is defined to be:
\begin{equation}
\label{eq:kgcext}
\hat{\vf}(x) = z_0 \cdot (k-x) + z_k\cdot x + \sum_{j=1}^J a_j \min \{ x \cdot (k-b_j), (k-x) \cdot b_j \} \text{ for $x \in [0,k]$.}
\end{equation}

\begin{lemma}
	\label{lem:context}
	The continuous extension $\hat{\vf}$ of $\hat{\vw}$ is nonnegative over the interval $[0,k]$, piecewise linear, concave, and has exactly $J+1$ linear pieces.
\end{lemma}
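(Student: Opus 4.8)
The plan is to follow the same template as the proof of Lemma~\ref{lem:ccb2pl}, but now accounting for the two extra affine terms $z_0(k-x)$ and $z_kx$ and for the asymmetric min-terms. First I would isolate a single summand $h_j(x) = \min\{x(k-b_j),(k-x)b_j\}$ on $[0,k]$. Setting the two arguments equal gives $x(k-b_j)=(k-x)b_j \iff xk=kb_j \iff x=b_j$, and since $0<b_j<k$ (by Definition~\ref{def:kcg}) the first argument is the smaller one for $x\le b_j$ and the second one for $x\ge b_j$. Hence $h_j$ is a ``tent'': it increases linearly with slope $k-b_j>0$ on $[0,b_j]$, decreases linearly with slope $-b_j<0$ on $[b_j,k]$, vanishes at both endpoints, and (being a minimum of two affine functions) is nonnegative and concave on $[0,k]$ with a single breakpoint located at $x=b_j\in(0,k)$.

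Next I would assemble $\hat{\vf} = z_0(k-x) + z_kx + \sum_{j=1}^J a_j h_j$. As a finite sum of piecewise linear functions it is piecewise linear, and as a sum of concave functions (two affine pieces, plus $a_jh_j$ with each $a_j>0$) it is concave, hence continuous. For nonnegativity, evaluating at the endpoints gives $\hat{\vf}(0)=z_0k\ge 0$ and $\hat{\vf}(k)=z_kk\ge 0$, since every $h_j$ vanishes there; concavity then forces $\hat{\vf}(x)\ge \tfrac{k-x}{k}\hat{\vf}(0)+\tfrac{x}{k}\hat{\vf}(k)\ge 0$ for all $x\in[0,k]$.

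It remains to count linear pieces. The only candidate breakpoints of $\hat{\vf}$ are the breakpoints $b_1<b_2<\cdots<b_J$ of the summands $h_j$, which by Definition~\ref{def:kcg} are distinct and all lie in $(0,k)$; put $b_0=0$ and $b_{J+1}=k$. On the interval $(b_{i-1},b_i)$, each $h_j$ with $j\ge i$ contributes slope $k-b_j$ and each $h_j$ with $j<i$ contributes slope $-b_j$, so the slope of $\hat{\vf}$ there equals
\begin{equation*}
S_i \;=\; z_k - z_0 + \sum_{j\ge i} a_j(k-b_j) - \sum_{j<i} a_j b_j .
\end{equation*}
Therefore $S_{i+1}-S_i = -a_i(k-b_i) - a_i b_i = -a_i k < 0$, so the slopes strictly decrease across each $b_i$; this both re-confirms concavity and shows each $b_i$ is a genuine breakpoint (no two can collide and none is fictitious). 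Consequently $[0,k]$ splits into exactly the $J+1$ intervals $[b_0,b_1],\dots,[b_J,b_{J+1}]$ on which $\hat{\vf}$ is affine, so it has exactly $J+1$ linear pieces.

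I do not expect a serious obstacle here: the argument is essentially routine once the active branch of each $\min$ on each subinterval is identified correctly, paralleling Lemma~\ref{lem:ccb2pl}. The only point needing care is verifying that every candidate breakpoint $b_i$ is actually a slope change, which is exactly what the identity $S_{i+1}-S_i=-a_ik$ together with $a_i>0$ and the distinctness of the $b_j$ supplies.
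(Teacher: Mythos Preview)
Your proof is correct and follows essentially the same approach as the paper: both identify the intervals $[b_{t},b_{t+1}]$, compute the slope of $\hat{\vf}$ on each, and show the slopes are strictly decreasing (the paper obtains slope $z_k - z_0 - \beta + k\alpha_t$ after first rewriting $a_j\min\{x(k-b_j),(k-x)b_j\} = a_j k\min\{x,b_j\} - a_j b_j x$, while you work directly with the tent functions $h_j$ to get the equivalent jump $S_{i+1}-S_i=-a_ik$). Your nonnegativity argument via concavity and the endpoint values $\hat{\vf}(0)=z_0k$, $\hat{\vf}(k)=z_kk$ is a bit cleaner than the paper's, which just asserts it follows from positivity of the parameters.
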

\begin{proof}
	Nonnegativity follows quickly from the positivity of $z_0$, $z_k$, and $(a_i, b_i)$ for $i \in [J]$, and $b_J < k$. For other properties, we begin by re-writing the function as
	\begin{align}
	\hat{\vf}(x) &= z_0 \cdot (k-x) + z_k\cdot x + \sum_{j=1}^J a_j \min \{ x \cdot (k-b_j), (k-x) \cdot b_j \}\\
	&= k z_0 + x( z_k - z_0)+ k \cdot \sum_{j=1}^J a_j \min \{ x, b_j \}  - x \cdot \sum_{j =1}^J a_j b_j\\
	\label{eq:hatflast}
	&= k z_0 + x( z_k - z_0)+ kx \cdot \sum_{j: x < b_j } a_j + k \cdot \sum_{j: x \geq b_j} a_j b_j  - x \cdot \sum_{j =1}^J a_j b_j.
	\end{align}
Define
\begin{align*}
\beta = \sum_{j = 1}^J a_j b_j, \hspace{1cm} \beta_t = \sum_{j = 1}^t a_j b_j, \hspace{1cm} \alpha_t = \sum_{j = t+1}^J a_j,
\end{align*}
and observe that $\beta_t$ is strictly increasing with $t$, and $\alpha_t$ is strictly \emph{decreasing} with $t$. Define $b_0 = 0$ and $b_{J +1} = k$ for notational convenience. For any $t \in \{0\} \cup [J]$, the function is linear over the interval $[b_{t}, b_{t+1})$, since for $x \in [b_t, b_{t+1})$, we have
\begin{align*}
\hat{\vf}(x) &= k z_0 + x( z_k - z_0)+ kx \cdot \sum_{j: x < b_j } a_j + k \cdot \sum_{j: x \geq b_j} a_j b_j  - x \cdot \sum_{j =1}^J a_j b_j \\
&= k z_0 + x( z_k - z_0)+ kx \sum_{j = t+1}^J a_j + k \cdot \sum_{j = 1}^t a_j b_j - x \sum_{j = 1}^J a_j b_j\\
&= k z_0 + x( z_k - z_0)+ kx \alpha_t + k \beta_t - x \beta.
\end{align*}
Thus, $\hat{\vf}$ is piecewise linear. Furthermore, the slope of the line over the interval $[b_t, b_{t+1})$ is $(z_k - z_0 - \beta  + k \alpha_t )$, which strictly decreases as $t$ increases. The fact that all slopes are distinct means that there are exactly $J+1$ linear pieces, and the fact that these slopes are decreasing means that the function is concave over the interval $[0,k]$.
\end{proof}

\begin{lemma}
	\label{lem:reversecontext}
	For every function $\vf$ that is nonnegative, piecewise linear with $J+1$ linear pieces, and concave over the interval $[0,k]$, there exists some $k$-CG function $\vw$ of order $J$ such that $\vf$ is the continuous extension of $\vw$.
\end{lemma}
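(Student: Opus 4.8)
The plan is to mirror the construction used for Lemma~\ref{lem:pl2ccb}: read off the parameters of a $k$-CG function directly from the breakpoints and slopes of $\vf$, check they meet the requirements of Definition~\ref{def:kcg}, and then invoke Lemma~\ref{lem:context} to match the continuous extension against $\vf$.

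First I would record the data of $\vf$. Since $\vf$ is piecewise linear with $J+1$ pieces on $[0,k]$, it has exactly $J$ internal breakpoints, which we write $0 = b_0 < b_1 < \cdots < b_J < b_{J+1} = k$ with all of $b_1,\dots,b_J$ strictly inside $(0,k)$; let $m_1 > m_2 > \cdots > m_{J+1}$ be the slopes of the successive pieces, the strict inequalities being forced by concavity together with the fact that there are exactly $J+1$ distinct pieces. I then define the candidate parameters
\[
z_0 = \frac{\vf(0)}{k}, \qquad z_k = \frac{\vf(k)}{k}, \qquad a_j = \frac{m_j - m_{j+1}}{k} \quad (j \in [J]),
\]
together with $b_j$ as above. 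These satisfy every requirement of Definition~\ref{def:kcg}: $a_j > 0$ from concavity, $0 < b_1 < \cdots < b_J < k$ from the breakpoint structure of $\vf$, and $z_0, z_k \geq 0$ from the nonnegativity of $\vf$ at the endpoints. (The case $J = 0$ is degenerate: $\vf$ is a single line on $[0,k]$, and $z_0(k-x) + z_k x$ with the above choices is exactly that line, so I would dispatch it separately in one sentence.)

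Next, let $\vw$ be the $k$-CG function of order $J$ with these parameters and $\hat{\vf}$ its continuous extension; I want to show $\hat{\vf} = \vf$. By Lemma~\ref{lem:context}, $\hat{\vf}$ is concave and piecewise linear with exactly $J+1$ pieces, and the $\min$ in the $j$th summand of~\eqref{eq:kgcext} switches precisely where $x(k-b_j) = (k-x)b_j$, i.e.\ at $x = b_j$, so $\hat{\vf}$ has the same breakpoints as $\vf$. Two continuous functions on $[0,k]$ that are piecewise linear with the same breakpoints coincide as soon as they agree at one point and have equal slopes on every piece. Agreement at $x = 0$ is immediate: setting $x = 0$ in~\eqref{eq:kgcext} kills every $\min$ term (since $b_j > 0$) and the $z_k$ term, leaving $\hat{\vf}(0) = z_0 k = \vf(0)$; similarly $\hat{\vf}(k) = z_k k = \vf(k)$, using $b_J < k$. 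For the slopes, the computation inside the proof of Lemma~\ref{lem:context} shows that the slope of $\hat{\vf}$ on $[b_{t-1},b_t]$ minus its slope on $[b_t,b_{t+1}]$ equals $k a_t = m_t - m_{t+1}$, so the successive slopes of $\hat{\vf}$ are $m_1 + c,\dots,m_{J+1} + c$ for a single offset $c$. Integrating the slopes across $[0,k]$ gives $\hat{\vf}(k) - \hat{\vf}(0) = \big(\vf(k) - \vf(0)\big) + c k$; comparing with the endpoint values already established forces $c = 0$. Hence $\hat{\vf}$ and $\vf$ agree at $0$ and have identical slopes throughout, so $\hat{\vf} = \vf$ on $[0,k]$, which is precisely the claim.

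I do not expect a genuine obstacle: the argument is bookkeeping once the parameters are guessed from $\vf$. The two points that need care are (i) confirming that the endpoint evaluations collapse as claimed, which is exactly where $b_J < k$ and $b_1 > 0$ (hence all $b_j \in (0,k)$) get used, and (ii) pinning down the offset $c$ via the slope recursion of Lemma~\ref{lem:context} on one hand and the endpoint values $z_0 k,\, z_k k$ on the other, rather than circularly assuming $\hat{\vf} = \vf$. I would also make the $J = 0$ case explicit so the statement is not vacuously mishandled.
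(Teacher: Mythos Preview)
Your proposal is correct and uses the same parameter construction as the paper: $z_0 = \vf(0)/k$, $z_k = \vf(k)/k$, $a_j = (m_j - m_{j+1})/k$, and $b_j$ the breakpoints of $\vf$. The only difference is in the verification that $\hat{\vf} = \vf$: the paper computes $\hat{\vf}(b_t)$ directly from~\eqref{eq:hatflast} and telescopes the sums to obtain $f_t$, whereas you match slope \emph{differences} (both equal $k a_t$), deduce the slopes agree up to an additive constant $c$, and then pin down $c = 0$ from the endpoint values $\hat{\vf}(0) = \vf(0)$ and $\hat{\vf}(k) = \vf(k)$; both verifications are elementary and equally valid.
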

\begin{proof}
	The function $\vw$ will be defined by choosing parameters $z_0$, $z_k$, and $(a_j, b_j)$ for $j \in [J]$. Let $\hat{\vf}$ denote the continuous extension of the function $\vw$ that we will build. From the proof of Lemma~\ref{lem:context}, we know that the parameter $b_j$ will correspond to the $j$th breakpoint of $\hat{\vf}$. Therefore, given $\vf$, we set $b_j$ to be the $j$th breakpoint of the function $\vf$, so that the functions match at breakpoints. For convenience, we also set $b_0 = 0$ and $b_{J+1}=k$. We then set $z_0 = \vf(0)/k$ and $z_k = \vf(k)/k$, to guarantee that $\hat{\vf}(0) = \vf(0)$ and $\hat{\vf}(k) = \vf(k)$. In order to set the $a_j$ values, we first compute the slopes of each line of $\vf$. Let $f_j = \vf(b_j)$ for $j \in \{0\}\cup [J+1]$. The $j$th linear piece of $\vf$ has the slope:
	\begin{equation*}
	m_i = \frac{f_i- f_{i-1}}{b_i - b_{i-1}}.
	\end{equation*}
	Finally, for $j \in [J]$ we set $a_j = \frac{1}{k}(m_{j} - m_{j+1})$. All of our chosen parameters satisfy the conditions of Definition~\ref{def:kcg}, so it simply remains to check that $\vf$ and $\hat{\vf}$ coincide at breakpoints.
	
	Let $t \in [J]$. Using~\eqref{eq:hatflast} to evaluating $\hat{\vf}$ at breakpoint $b_t$, we get
	\begin{align}
	\label{fhat}
	\hat{\vf}(b_t) &= f_0 + \frac{b_t}{k} ( f_k - f_0) + k b_t \sum_{j = t+1}^J a_j + k \sum_{j=1}^t a_j b_j - b_t \sum_{j=1}^J a_j b_j.
	\end{align}
	We can simplify several terms using the fact that $a_j = \frac{1}{k}(m_{j} - m_{j+1})$. First of all,
	\begin{align*}
	k \sum_{j = t+1}^J a_j &= \sum_{j = t+1}^J [m_{j} - m_{j+1}] = m_{t+1} - m_{J+1}.
		\end{align*}
		Furthermore,
	\begin{align*}
	k \sum_{j=1}^t a_j b_j &=  \sum_{j = 1}^t (m_{j} - m_{j+1})b_j = m_1 b_1 - m_{t+1} b_t  + \sum_{j = 2}^{t} m_j (b_{j} - b_{j-1}) \\
									&=(f_1 - f_0) - m_{t+1} b_t + \sum_{j = 2}^{t} [f_j - f_{j-1}] = (f_1 - f_0) - m_{t+1} b_t + f_t - f_{1}\\
									&= f_t - f_0 - m_{t+1} b_t.
	\end{align*}
	Similarly, we see that $\sum_{j = 1}^J a_j b_j = \frac{1}{k} \left( f_J - f_0 - m_{J+1} b_J \right)$. Plugging this into~\eqref{fhat}, we get
	\begin{align*}
	\label{fhatnew}
	\hat{\vf}(b_t) &= f_0 + \frac{b_t}{k} ( f_{J+1} - f_0)  + b_t(m_{t+1} - m_{J+1}) + f_t - f_0 - m_{t+1} b_t - \frac{b_t}{k} \left( f_J - f_0 - m_{J+1} b_J \right)\\
	&= \frac{b_t}{k} f_{J+1}  - b_tm_{J+1}+ f_t  - \frac{b_t}{k} \left( f_J - m_{J+1} b_J \right)\\
	&= f_t + \frac{b_t}{k} (f_{J+1}  - f_J) - b_t m_{J+1} \left(1- \frac{b_J}{k} \right)\\
	&= f_t + \frac{b_t}{k} (f_{J+1}  - f_J) - b_t \left(\frac{f_{J+1} - f_J}{k -b_J}\right)\left(\frac{k - b_J}{k} \right)= f_t = \vf(b_t).
	\end{align*}
So we see that $\vf = \hat{\vf}$ at breakpoints, and therefore these be the same piecewise linear function.
\end{proof}

\subsection{Finding the Best Piecewise Approximation}
As we did for symmetric splitting functions, we can quickly find the best piecewise linear $(1+\varepsilon)$-approximation to a $k$-GSCB integer function $\vw$ using a greedy approach. 
We omit proof details, as they exactly mirror arguments provided for the symmetric case. The submodularity constraint $2 \vw(i) \geq \vw(i+1) + \vw(i-1)$ for $i \in \{0\} \cup [k]$ can be viewed as a discrete version of concavity, and will ensure that the piecewise linear function returned by such a procedure will also be nonnegative and concave. After obtaining the piecewise linear approximation, we can apply Lemma~\ref{lem:reversecontext} to reverse engineer a $k$-CG function of a small order that approximates $\vw$. We obtain the same asymptotic upper bound on the number of linear pieces needed to approximate $\vw$.
\begin{lemma}
	\label{lem:upperlog}
	Let $\vw$ be a $k$-GSCB integer function and $\varepsilon \geq 0$. There exists a $k$-CG function $\hat{\vw}$ of order $J = O(\frac{1}{\varepsilon}\log k)$ that satisfies $\vw(i) \leq \hat{\vw}(i) \leq (1+\varepsilon) \vw(i) $ for any $i \in \{0\} \cup [k]$.
\end{lemma}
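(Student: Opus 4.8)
The plan is to mimic the proof of Theorem~\ref{thm:logk}, with the two adjustments forced by the added generality: $\vw$ need not be monotone, and its continuous counterpart need not vanish at the endpoints of $[0,k]$. The starting point is that, by Lemma~\ref{lem:gencard}, a $k$-GSCB integer function has nonincreasing first differences, hence is \emph{discretely unimodal}: there is a mode index $i^\ast$ with $\vw$ nondecreasing on $\{0,\dots,i^\ast\}$ and nonincreasing on $\{i^\ast,\dots,k\}$. Let $\vf_0$ be the piecewise linear interpolation of the points $\{(i,\vw(i))\}_{i=0}^{k}$; it is nonnegative, concave, piecewise linear, and agrees with $\vw$ at every integer. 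By Lemma~\ref{lem:reversecontext}, it then suffices to produce a nonnegative, concave, piecewise linear function $\vf$ on $[0,k]$ with $O(\tfrac1\varepsilon\log k)$ linear pieces satisfying $\vw(i)\le \vf(i)\le (1+\varepsilon)\vw(i)$ for all $i\in\{0\}\cup[k]$: the $k$-CG function $\hat{\vw}$ returned by Lemma~\ref{lem:reversecontext} then has $\hat{\vw}(i)=\vf(i)$ at integers and order equal to the number of positive-sloped pieces of $\vf$, which proves the lemma.

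To build $\vf$, I would run the logarithmic covering argument of Theorem~\ref{thm:logk} separately on each monotone branch of $\vf_0$. On the increasing branch $[0,i^\ast]$: for any real $y\in[1,i^\ast]$ the linear piece of $\vf_0$ through $(y,\vf_0(y))$ has nonnegative slope and, by concavity together with $\vf_0(0)=\vw(0)\ge 0$ (extending that line leftward to $x=0$ gives a value at least $\vf_0(0)$), nonnegative intercept; exactly as in Theorem~\ref{thm:logk} it is then a $(1+\varepsilon)$-approximation of $\vf_0$ on $[y,(1+\varepsilon)y]$. Hence $O(\log_{1+\varepsilon} i^\ast)$ such lines cover $[1,i^\ast]$, and one further secant line through $(0,\vw(0))$ and $(1,\vw(1))$ covers $[0,1]$ exactly; let $\vf_L$ be the lower envelope of these lines. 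Applying the same argument to the reflection $x\mapsto k-x$, which turns the decreasing branch into an increasing concave function with nonnegative value $\vw(k)$ at its origin, yields a lower envelope $\vf_R$ of $O(\log_{1+\varepsilon}k)$ lines covering $[i^\ast,k]$. Every constituent line upper bounds $\vw$ at all integers, so $\vf_L$ and $\vf_R$ do as well, while $\vf_L\le(1+\varepsilon)\vw$ on $\{0,\dots,i^\ast\}$ and $\vf_R\le(1+\varepsilon)\vw$ on $\{i^\ast,\dots,k\}$.

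Finally I would set $\vf=\min\{\vf_L,\vf_R\}$. This is concave as a minimum of concave functions, piecewise linear with $O(\log_{1+\varepsilon}k)=O(\tfrac1\varepsilon\log k)$ pieces, nonnegative on $[0,k]$ (a concave function that is nonnegative at $x=0$ and $x=k$ lies above the nonnegative chord joining those values), upper bounds $\vw$ at all integers, and on each branch is at most the corresponding one of $\vf_L,\vf_R$ and hence at most $(1+\varepsilon)\vw$. Invoking Lemma~\ref{lem:reversecontext} yields the desired $k$-CG function. The degenerate case $\varepsilon=0$ is immediate by taking $\vf=\vf_0$, which has at most $k$ linear pieces.

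The main obstacle relative to the symmetric treatment in Section~\ref{sec:reductions} is the loss of monotonicity: there one folds the splitting function in half to obtain a monotone function, whereas here one must split at the mode and verify that the pointwise minimum of the two branch-covers is still a valid \emph{concave} upper-bounding $(1+\varepsilon)$-approximation --- which is precisely why $\vf=\min\{\vf_L,\vf_R\}$, rather than concatenating the two envelopes, is the correct construction. A secondary nuisance is degeneracies where $\vw$ vanishes on a prefix or suffix of its domain, so that a multiplicative approximation forces $\vf$ to equal $0$ there and the logarithmic cover cannot start from value $0$; these are absorbed by prepending an exact slope-zero piece on the vanishing region, and do not affect the asymptotics. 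Everything else is a routine transcription of the proofs of Theorem~\ref{thm:logk} and Lemma~\ref{lem:reversecontext}.
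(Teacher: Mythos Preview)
Your proposal is correct and follows the same approach the paper indicates: the paper omits the proof entirely, simply asserting that it ``exactly mirror[s] arguments provided for the symmetric case'' (i.e., Theorem~\ref{thm:logk}) and then invokes Lemma~\ref{lem:reversecontext}. Your write-up is precisely the natural fleshing-out of that sketch, with the one genuinely new ingredient being the mode-split and the construction $\vf=\min\{\vf_L,\vf_R\}$ to cope with the loss of monotonicity; this is the right fix and the concavity, nonnegativity, endpoint-matching, and piece-count claims all check out.
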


\subsection{Approximating Cardinality-Based Sum of Submodular Functions}
Recall that $k$-CG functions correspond to combinations of ACB-gadgets for a hyperedge $e$ as well as directed edges between nodes in $e$ and the source and sink nodes in some minimum $s$-$t$ cut problem. Each ACB-gadget involves one new auxiliary node and $2 |e|$ directed edges, and the number of ACB-gadgets is equal to the order of the $k$-CG function (the number of linear pieces minus one). Let $\mathcal{H} = (V, \mathcal{E})$ be a hypergraph with $n = |V|$ nodes, where each splitting function is submodular, cardinality-based, and is not required to be symmetric or penalize only cut hyperedges. Finding the minimum cut in $\mathcal{H}$ corresponds to solving the sum of submodular splitting functions given in~\eqref{soscard}. For $\varepsilon \geq 0$, we can preserve cuts in $\mathcal{H}$ to within a factor $(1+\varepsilon)$ by introducing a source and sink node $s$ and $t$ and applying our sparse reduction techniques to each hyperedge to obtain a directed graph $\hat{G} = (V \cup \mathcal{A} \cup \{s,t\}, \hat{E})$, where $\mathcal{A}$ is the set of auxiliary nodes, with $N = O(n + \frac{1}{\varepsilon}\sum_{e \in \mathcal{E}} \log |e|)$ nodes and $M = O(n+\frac{1}{\varepsilon}\sum_{e \in \mathcal{E}} \log |e|)$ edges. Even if the size of each $e \in \mathcal{E}$ is $O(n)$, we have $N = O(n + \varepsilon^{-1} |\mathcal{E}| \log n)$ and $M = O(\varepsilon^{-1} \log n \sum_{e \in \mathcal{E} }|e|)$.

\section{Parameter Settings for Image Segmentation Experiments}
\label{app:dsfm}
The continuous optimization methods for DSFM that we compare against in image segmentation experiments are implemented in C++ with a MATLAB front end. The Incidence Relation AP (IAP) method is an improved version of the AP method~\cite{nishihara2014convergence}. Li and Milenkovic~\cite{panli2018revisiting} showed that the runtime of the method can be significantly faster if one accounts for so-called \emph{incidence relations}, which describe sets of nodes that define the support of a component function. In our experiments we also ran the standard AP algorithm as implemented by Li and Milenkovic, but this always performed noticeably worse that IAP in practice, so we only report results for IAP. Neither of these methods require setting any hyperparameters.

Li and Milenkovic~\cite{panli2018revisiting} also showed that accounting for incidence relations leads to improved \emph{parallel} runtimes for ACDM and RCDM, but this does not improve serial runtimes. To simulate improved parallel runtimes, the implementations ACDM and RCDM of these authors include a parallelization parameter $\alpha = K/R$, where $K$ is the number of projections performed in an inner loop of these methods, and $R$ is the number of component functions. In theory, the $K$ projections could be performed in parallel, leading to faster overall runtimes. The comparative parallel performance between methods can be simulated by seeing how quickly the methods converge in terms of the number of total projections performed. Note however that the implementations themselves are serial, and only simulate what could happen in a parallel setting.

In our experiments our goal is to obtain the fastest possible serial runtimes. Li and Milenkovic~\cite{panli2018revisiting} demonstrated that the minimum number of total projections needed to achieve convergence to within a small tolerance is typically achieved when $\alpha$ is quite small. When projections are performed in parallel, choosing a larger $\alpha$ may still be advantageous. However, since our goal is to obtain the fast serial runtimes, we chose a small value $\alpha = 0.01$, based on the results of Li and Milenkovic. We also tried larger and smaller values of $\alpha$ in post-hoc experiments on all four instances of DSFM, though this led to little variation in performance. 

In addition to $\alpha$, ACDM relies on an empirical parameter $c$ controlling the number of iterations in an outer loop. We used the recommended default parameter $c = 10$. In general it is unclear how to set this parameter a priori to obtain better than default behavior on a given DSFM instance.
In order to highlight the strength of \textsc{SparseCard} relative to ACDM, we additionally tried post-hoc tuning of $c$ on each dataset to see how much this could affect results. We ran ACDM on each of the four instances of DSFM (2 image datasets $\times$ 2 superpixel segmentations each) for all $c \in \{10,25,50,100,200\}$, three different times, for $50R$ projections (i.e., on average we visit and perform a projection step at each component function 50 times). This took roughly 30-45 seconds for each run. We then computed the average duality gap for each $c$ and each instance over the three trials, and re-ran the algorithm for even longer using the best value of $c$ for each instance. The result is shown as ACDM-best in Figure~\ref{fig:images}. \textsc{SparseCard} still maintains a clear advantage over this method on the instances that involve 200 superpixels (i.e., very large region potentials). Our method also obtains better approximations for the \emph{smallplant} dataset with 500 superpixels for the first 40 seconds, after which point ACDM-best converges to the optimal solution. Nevertheless, we would not have been able to see this improved behavior without post-hoc tuning of the hyperparameter $c$ for ACDM. Meanwhile, \textsc{SparseCard} does not rely on any parameter except $\varepsilon$, and it is very easy to understand how setting this parameter affects the algorithm as it directly controls the sparsity and a priori approximation guarantee for our method.

We remark finally that Li and Milenkovic~\cite{panli2018revisiting} also considered and implemented an alternate version of RCDM (RCDM-greedy) with a greedy sampling strategy for visiting component functions in the method's inner loop. Despite being advantageous for parallel implementations, we found that in practice that this method had worse serial runtimes, so we did not report results for it.

\bibliographystyle{plain}
\bibliography{main-arXiv-v2.bib}
\end{document}